\theoremstyle{plain}
\newtheorem{prop}{\protect\propositionname}
\theoremstyle{plain}
\newtheorem{lem}{\protect\lemmaname}
\theoremstyle{plain}
\newtheorem{assumption}{\protect\assumptionname}
\theoremstyle{plain}
\newtheorem{cor}{\protect\corollaryname}
\providecommand{\dom}{\mathrm{dom}}
\providecommand{\imp}{\mathrm{imp}}
\providecommand{\exp}{\mathrm{exp}}
\providecommand{\assumptionname}{Assumption}
\providecommand{\corollaryname}{Corollary}
\providecommand{\lemmaname}{Lemma}
\providecommand{\propositionname}{Proposition}
\begin{document}
\title{Tradeable Import Certificates: A Promising Instrument to Support Domestic
Production in Strategic Sectors?}
\author{Sebastian Kranz, Ulm University}
\date{2025-11-27}
\maketitle
\begin{abstract}
Recent crises have increased concerns about supply security in sectors
that are considered strategically important. The goal of sufficient
domestic production capacities has motivated various forms of subsidies,
tariffs and other instruments. This paper revisits Warren Buffett\textquoteright s
(2003) proposal of tradeable import certificates (TIC) in this context.
TIC differ from classical import quotas mainly by linking the import
volume to export performance. The certificate price functions like
a mix of flexible tariffs and export subsidies whose levels depend
on net imports in the strategic sector. We analyse benefits and drawbacks
in a simple two-country model. In competitive markets, TIC constitute
a transparent and efficient instrument that effectively reduces incentives
for other countries to deviate from agreements via hidden subsidies
or non-tariff trade barriers. However, TIC can have adverse effects
if there are domestic producers with market power in the certificate
market.
\end{abstract}

\section{Introduction}

Over the past decade, governments in advanced and emerging economies
alike have rediscovered a vocabulary of ``strategic sectors,'' ``resilience,''
and ``economic security.'' The Trump administration's call to bring
manufacturing back to the United States, Europe's debates about ``strategic
autonomy,'' and the scramble for vaccines, active pharmaceutical
ingredients, and semiconductors during the Covid-19 pandemic are emblematic.
Supply-chain breakdowns, fears of overreliance on geopolitical rivals,
diminished bargaining power in international negotiations, and the
need to ramp up production quickly in the event of armed conflict
or other emergencies have elevated production volumes in certain sectors
from a matter of efficiency to a matter of statecraft.\footnote{For concrete examples see, inter alia, White House (2017), European
Commission (2021), and the extensive reporting on shortages in medical
goods and chips during 2020--22.}

Governments have reacted to these strategic-sector anxieties with
an eclectic mix of subsidies and tariffs. On the spending side, the
U.S. CHIPS and Science Act allocates \$52.7 billion in grants and
tax incentives for domestic semiconductor production and related activities
(NIST, 2023); the EU's Green Deal Industrial Plan channels comparable
support to ``net-zero'' technologies (European Commission, 2023);
and China's Made in China 2025 programme deploys tax breaks, subsidised
credit and direct transfers to nurture national champions (State Council,
2015; McBride and Chatzky, 2019). By contrast, the stick of choice
has been tariffs: the first Trump administration raised average U.S.
applied tariffs to their highest level in the post-war era (Amiti,
Redding, and Weinstein, 2019), and recent tariff hikes and proposals
would push effective rates far above the pre-2017 norm (CBO, 2025).

Both instruments, however, come with well-documented side-effects.
Sharp tariff swings amplify trade-policy uncertainty, deterring investment
and export entry - as shown by Handley and Limão\textquoteright s
(2017) evidence for China-U.S. trade - while blanket subsidy programmes
invite fiscal waste and rent-seeking, a classic industrial-policy
pitfall highlighted by Juhász et al. (2024) and already visible in
the global proliferation of export credit support (Dawar, 2020). 

Moreover, such policies erode trust: partners struggle to distinguish
legitimate security goals from beggar-thy-neighbour tactics, fuelling
anti-subsidy investigations and the spectre of retaliatory tariff
hikes. If strategic objectives are here to stay, more transparent
and rules-based approaches could be beneficial. 

This paper revisits Warren Buffett\textquoteright s (2003) proposal
of tradeable import certificates (TIC), an idea that, despite its
intuitive appeal, has received surprisingly little attention in the
academic literature.\footnote{We are only aware of Papadimitriou et al. (2008) who perform a simulation
study to estimate the impact of Buffett's original proposal on US
import prices and trade deficit. They also note possible issues: price
volatility for certificates, risk of partner retaliation, and implementation
complexities, but don't provide a deeper theoretical analysis.} We examine in a theoretical model how TIC perform if the mechanism
is adapted to sectors in which countries target a level of minimum
domestic production for strategic reasons.

Early trade models, such as those by Bhagwati and Srinivasan (1969)
and Shibata (1968), show that tariffs and quotas can yield equivalent
production outcomes, although tariffs are typically preferred for
generating revenue in the importing country. Later work by Tower (1975)
and Melvin (1986) explores strategic interactions between countries
and shows how import or export quotas can trigger inefficient trade
wars, pushing countries toward autarky.

Buffett\textquoteright s TIC proposal differs from the traditional
import quotas studied in that literature in two fundamental ways.
First, the number of import certificates is not fixed exogenously
but is instead endogenously tied to export volumes---every export
unit generates $\eta$ certificates that entitle imports. Second,
these certificates are allocated to exporters, who can then sell them
to importers in a secondary market. Consequently, certificate prices
function like a combination of tariffs and export subsidies whose
level flexibly and automatically adjusts with the relative level of
import pressure vs export performance.

As a variation of Buffett's proposal, Papadimitriou et al. (2008)
suggest that the government should auction off the certificates and
use the income to reduce payroll taxes. We study a generalization
by allowing an exogenous split of certificate revenues between private
exporters and the government. Specifically, a fraction $\phi$ between
zero and one of the proceeds from certificate sales can be retained
by exporters, with the remainder accruing to the state. 

We show in our model with perfect competition and Ricardian gains
from trade that a constrained-efficient TIC mechanism grants exporters
the revenues of exactly one input certificate for every exported unit,
which is the case if $\phi\eta=1$. Then the effective rates of tariffs
and export subsidies that incorporate certificate prices are equal
in size.\footnote{For $\eta>1$ that solution lies between Buffet's proposal to grant
all certificates to exporters and given all income to the government.
For example, if one unit of exports would generated $\mu=1.5$ import
certificates the exporter should not receive all revenues from certificate
sales but only keep $2/3$ of the revenues.} This results in an industry structure where, taking into account
the strategic preferences, production of different products within
the sector is still efficiently distributed across countries. Compared
to a solution that more strongly relies on effective tariffs than
export subsidies, international trade volume remains higher.

An implicit core assumption of our analysis is that while there are
relative comparative advantages for individual goods, a severe crisis
that disrupts trade would allow sector-wide capacity to be retooled
to produce other critical products within the same sector that were
previously imported. An often mentioned example is how during World
War II, U.S. civilian industries were converted to military production,
see e.g. Rhode et al. (2018). Bernard, Redding, and Schott (2010)
show that over half of U.S. manufacturing firms switch their product
mix within a five-year period as market conditions change.

Beyond the Ricardian benefits of trade in our model, we see independent
reasons to prefer efficient intra-sector trade over equilibria with
depressed trade volumes.

First, deeper bilateral trade links can raise the opportunity cost
of conflict and, in turn, promote peaceful cooperation. Of course,
the link between trade and peaceful cooperation is not always as clear
cut: the relational contracting framework in Goldlücke \& Kranz (2023)
stresses also the importance of symmetric bargaining positions should
trade break down. A TIC scheme that guarantees that each country produces
a certain minimum share of products from the strategic sector, may
facilitate such symmetric positions with high mutual incentives for
further cooperation.\footnote{Martin, Mayer, \& Thoenig, (2008) provides a deeper exploration of
the links between trade and conflict.}

Second, protection delivered primarily through tariffs tends to dull
competitive pressure and invite rent-seeking as firms fight to secure
sheltered domestic margins rather than productivity improvements (Krueger,
1974). Empirically, trade liberalization episodes show the flip side
of this coin: when tariff shields recede, low-productivity plants
contract or exit and industry-level productivity rises (Pavcnik, 2002;
Trefler, 2004). Effective industrial policy therefore seems to work
best when it embeds \textquotedblleft discipline\textquotedblright ---linking
support to performance in contestable markets---rather than insulating
firms from rivalry. Case-based accounts emphasize export discipline
in East Asia (Studwell, 2013), and cross-country micro-evidence shows
that competition-friendly targeting is associated with higher productivity
growth (Aghion, Cai, Dewatripont, Du, Harrison, \& Legros, 2015).

One important aspect of international trade agreements is how they
affect countries incentives to game the system in particular using
hard to monitor instruments like implicit subsidies or non-tariff
trade barriers. For example, Hillman and Manak (2023) summarize: ``The
{[}WTO{]} rules did little to prevent widespread industrial subsidies
use by countries hoping to gain an edge in international trade.''
Section 3 analyzes in detail how the automatic adjustment inherent
in a TIC scheme can effectively reduce incentives of a partner country
for such deviations and limit the negative impact should such deviations
take place nevertheless.

We also discuss in Section 3 how TIC schemes may facilitate the negotiation
of international trade agreements in the presence of strategic concerns.
Unlike agreements that fix tariffs or subsidy levels to account for
strategic concerns, TIC schemes allow to implement agreements in a
symmetric fashion without relying on imperfectly observable information
about relative comparative advantages. 

Dynamic adjustment to changes in comparative advantage is an inherent
feature of TIC schemes, whereas subsidies or tariffs would need to
be repeatedly adjusted as economic fundamentals evolve. Yet, tariffs
and subsidies are often politically difficult to unwind once in place,
even when their original justification no longer holds, see e.g. Freund
and Özden (2008) or Magee (2002). With a TIC Agreement effective tariffs
and export subsidies automatically vanish once the trade deficit in
the strategic sector becomes sufficiently small.

While the points above highlight the potential benefits of TIC schemes,
our analysis in Section 3.3 highlights a crucial drawback. A TIC scheme
can have severe adverse effects if the strategic sector is small or
concentrated enough such that domestic producers can exert market
power in the certificate market. By strategically reducing exports,
producers could then raise certificate prices and benefit from the
added domestic protection and higher prices of the sold certificates.
This is an unfortunate aspect of TIC schemes, which limits the potential
applicability to sufficiently widely defined strategic sectors. For
supply security in critical but small sectors, like certain rare earth
minerals, TIC don't look like a promising tool.

The remaining paper is structured as follows. Section 2 presents the
core model: a two-country, partial-equilibrium Ricardian setting with
a continuum of perfectly competitive product markets ordered by relative
efficiency similar to Dornbusch et al. (1977). The section also contains
useful results that characterize the market outcome for given tariffs,
subsidies and certificate prices.

Section 3 performs the main analysis focusing on a case that a country
with a competitive disadvantage has the strategic goal to increase
its production in the strategic sector up to a specific target level.
We derive the main insights already summarized above. Section 4 provides
concluding remarks. An appendix contains all proofs and additional
theoretical results.

\section{Market Model and Outcomes}

We consider two countries $A$ and $B$. Throughout the paper, we
use $i,j,k\in\{A,B\}$ as country indices such that always $j\ne i$
while $k$ can be equal or unequal $i$. Every formula for country
$i$ holds symmetrically for country $j$ by swapping all $i$ and
$j$. We study a strategic sector consisting of a continuum of independent
products with unit mass indexed by $m\in[0,1]$. Each product is traded
on its own perfectly competitive market that is independent from all
other markets. We assume that consumers have a valuation $v$ for
each product and that for prices below $v$ demand is completely inelastic
and given by 1 for each product in each country. The value of $v$
is assumed be sufficiently large such that it exceeds all equilibrium
prices in the competitive goods markets.

The marginal cost for domestically consumed production by country
$i$ for product $m$ is given by
\begin{equation}
c_{i}^{dom}(m)=w_{i}(m)-s_{i}\label{eq:c_dom}
\end{equation}
where $w_{i}(m)$ is the technological marginal cost for product $m$
in country $i$. The variable $s_{i}\geq0$ is a production subsidy
paid to all units produced in country $i$. The marginal cost of an
exported product $m$ from country $i$ to country $j$ is given by
\begin{equation}
c_{i}^{exp}(m)=w_{i}(m)-s_{i}-e_{i}+\tau_{j}+\pi_{j}-\phi_{i}\eta_{i}\pi_{i}\label{eq:c_exp}
\end{equation}
where $e_{i}\geq0$ is an export subsidy and $\tau_{j}\geq0$ is the
tariff rate of country $j$, imposed uniformly on all imported units
in the strategic sector.

The variable $\pi_{i}$ denotes the equilibrium price for tradeable
import certificates (TIC) in country $i$. Every imported unit in
country $i$ will require exactly one import certificate. These import
certificates are generated by country $i$'s exports: for every exported
unit, the exporter gets $\eta_{i}$ certificates that are sold in
a perfectly competitive market to importers. We call $\eta_{i}$ the
export credit factor. While in Buffett's original proposal all proceeds
from the sold certificates go to the exporter, Papadimitriou et al.
(2008) suggest that the government could auction off all certificates
and keep the income. We allow for both and intermediate cases by assuming
that exporters can only keep a share $\phi_{i}\in[0,1]$ from the
revenues of sold import certificates; the remaining share goes to
the state. Thus the term $\phi_{i}\eta_{i}\pi_{i}$ in $c_{i}^{exp}(m)$
describes the exporter's revenues from the certificates generated
by one exported unit. If country $i$ does not implement a TIC scheme,
we fix $\pi_{i}=0$.

It is helpful to define effective tariff rates $\tilde{\tau}_{i}$
as the sum of tariff and certificate prices:
\begin{equation}
\tilde{\tau}_{i}=\tau_{i}+\pi_{i}.\label{eq:tau_tilde}
\end{equation}
Similarly, we define effective export subsidies as 
\begin{equation}
\tilde{e}_{i}=e_{i}+\phi_{i}\eta_{i}\pi_{i}.\label{eq:e_tilde}
\end{equation}
So marginal costs of exports from country $i$ to $j$ can be written
as
\begin{equation}
c_{i}^{exp}(m)=w_{i}(m)-s_{i}-\tilde{e}_{i}+\tilde{\tau}_{j}\label{eq:c_exp_tilde}
\end{equation}
The source for gains of trade in our model is given by differentiated
technological marginal costs $w_{i}(m)$ between both countries. Let
\[
\Delta w_{i}(m)\equiv w_{i}(m)-w_{j}(m)
\]
denote country $i$'s technological cost disadvantage for product
$m$ compared to country $j$. For tractability, we impose a simple
linear structure and assume that 
\begin{equation}
\Delta w_{A}(m)=-\alpha_{A}+(\alpha_{A}+\alpha_{B})\,m\label{def_Delta_w}
\end{equation}

with $\alpha_{A},\alpha_{B}>0$. This means for $m=0$ country A has
the largest competitive advantage with $\Delta w_{A}(0)=-\alpha_{A}$
while for $m=1$ country $B$ has the largest advantage with $\Delta w_{A}(1)=\alpha_{B}$
and inbetween the cost advantage of country B increases linearly in
$m$. To simplify future notation, we denote technological dispersion
by the parameter 
\begin{equation}
\delta=\alpha_{A}+\alpha_{B}.
\end{equation}

Since producers in both countries are perfectly competitive, resulting
prices for product $m$ in country $i$ are determined by the lowest
marginal costs:
\[
\min\{c_{i}^{dom}(m),c_{j}^{exp}(m)\}.
\]

Producers with the lower marginal costs serve the whole market.\footnote{In our model, the measure of product markets where domestic production
and imports have the same marginal costs is always zero.} We denote country $i$'s production of good $m$ for its domestic
market by 
\begin{equation}
q_{i}^{dom}(m)=\begin{cases}
1 & \text{if }c_{i}^{dom}(m)\leq c_{j}^{exp}(m)\\
0 & \text{otherwise}
\end{cases}\label{eq:q_dom}
\end{equation}
and the production for the export market by
\begin{equation}
q_{i}^{exp}(m)=\begin{cases}
1 & \text{if }c_{i}^{exp}(m)<c_{j}^{dom}(m)\\
0 & \text{otherwise}
\end{cases}\label{eq:q_exp}
\end{equation}

Sector wide domestic production, total exports and total imports of
country $i$ are defined as
\begin{align*}
Q_{i}^{dom} & =\int_{0}^{1}q_{i}^{dom}(m)dm,\qquad Q_{i}^{exp}=\int_{0}^{1}q_{i}^{exp}(m)dm,\qquad Q_{i}^{imp}=Q_{j}^{exp}.
\end{align*}
Note that $Q_{A}^{dom}+Q_{B}^{exp}=Q_{B}^{dom}+Q_{A}^{exp}=1$, which
implies 
\begin{equation}
Q_{A}^{dom}-Q_{A}^{exp}=Q_{B}^{dom}-Q_{B}^{exp}.\label{eq:Q_dom_Q_exp}
\end{equation}
So in our model both countries will always have the same difference
between production for domestic and export markets.

Countries $i$'s total production is denoted by
\[
X_{i}=Q_{i}^{dom}+Q_{i}^{exp}
\]
While $0\leq Q_{i}^{dom},Q_{i}^{exp}\leq1$ we have $0\leq X_{i}\leq2$.

The following result shows how in a market equilibrium $Q_{i}^{dom}$
and $Q_{i}^{exp}$ depend on the effective tariffs and subsidies of
both countries. 
\begin{prop}
\label{prop:market_eq}In every market equilibrium domestic and export
production for both countries $i\in\{A,B\}$ satisfy
\[
Q_{i}^{dom}=\max(0,\min(1,\tilde{Q}_{i}^{dom})),\qquad Q_{i}^{exp}=\max(0,\min(1,\tilde{Q}_{i}^{exp}))
\]
with
\begin{align*}
\tilde{Q}_{i}^{dom} & \equiv Q_{i}^{o}+\frac{(s_{i}-s_{j})+(\tilde{\tau}_{i}-\tilde{e}_{j})}{\delta},\\
\tilde{Q}_{i}^{exp} & \equiv Q_{i}^{o}+\frac{(s_{i}-s_{j})+(\tilde{e}_{i}-\tilde{\tau}_{j})}{\delta}
\end{align*}
where $Q_{i}^{o}$ denotes the free trade level of $Q_{i}^{dom}$
and $Q_{i}^{exp}$ and is given by
\[
Q_{i}^{o}\equiv\frac{\alpha_{i}}{\delta}.
\]
\end{prop}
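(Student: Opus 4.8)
The plan is to exploit perfect competition in each separate product market: for a given product $m$, the home market of country $i$ is served by whichever of $c_i^{dom}(m)$ and $c_j^{exp}(m)$ is smaller, so the whole statement reduces to comparing two affine functions of $m$ and reading off the switching threshold. Because the effective instruments $\tilde\tau_i$ and $\tilde e_i$ absorb the certificate prices through \eqref{eq:tau_tilde}--\eqref{eq:e_tilde} and \eqref{eq:c_exp_tilde}, I would treat $(s_i,\tilde\tau_i,\tilde e_i)$ as fixed at their equilibrium values; the claim is then purely a statement about the goods allocation induced by those effective instruments, so I need not solve for $\pi_i$ itself.

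First I would pin down the domestic production region. Writing $q_i^{dom}(m)=1 \iff c_i^{dom}(m)\le c_j^{exp}(m)$ and substituting \eqref{eq:c_dom} and \eqref{eq:c_exp_tilde}, the $w$-terms recombine into $\Delta w_i(m)$ and the condition rearranges to $\Delta w_i(m)\le (s_i-s_j)+(\tilde\tau_i-\tilde e_j)$. Specializing to $i=A$ and inserting the linear form $\Delta w_A(m)=-\alpha_A+\delta m$ from \eqref{def_Delta_w}, dividing by $\delta>0$ and using $Q_A^o=\alpha_A/\delta$ turns this into $m\le \tilde Q_A^{dom}$. Since $\Delta w_A$ is strictly increasing in $m$, the set of products that $A$ supplies at home is the initial interval $[0,\tilde Q_A^{dom}]\cap[0,1]$, whose measure is exactly $\max(0,\min(1,\tilde Q_A^{dom}))$.

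The export region is handled identically: $q_A^{exp}(m)=1 \iff c_A^{exp}(m)<c_B^{dom}(m)$ rearranges to $\Delta w_A(m)<(s_A-s_B)+(\tilde e_A-\tilde\tau_B)$, i.e. $m<\tilde Q_A^{exp}$, giving measure $\max(0,\min(1,\tilde Q_A^{exp}))$; the strict versus weak inequality is immaterial because the tie set has measure zero, as the text already notes. The formulas for country $B$ then follow from the $i\leftrightarrow j$ symmetry built into the model, with the one point requiring care that $\Delta w_B(m)=-\Delta w_A(m)=\alpha_A-\delta m$ is \emph{decreasing}, so $B$'s producing regions are the upper intervals $[1-\tilde Q_B^{dom},1]$ and $[1-\tilde Q_B^{exp},1]$; their measures reproduce the stated expressions after applying the identity $\alpha_A/\delta=1-\alpha_B/\delta=1-Q_B^o$.

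I do not expect a genuine obstacle beyond sign-keeping: the affine cost structure is precisely what collapses each competitive comparison to a single threshold. The only conceptual care needed is, first, to recognize that the equilibrium certificate prices enter solely through the effective instruments, so existence and fixed-point questions for $\pi_i$ lie outside the scope of this statement; and second, to place country $B$'s production correctly at the high-$m$ end of the index, so that the symmetric formulas emerge with $Q_B^o=\alpha_B/\delta$ rather than its complement.
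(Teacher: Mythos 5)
Your proposal is correct and follows essentially the same route as the paper's proof: reduce each competitive comparison to a cutoff condition on $\Delta w_i(m)$, use the linear form of $\Delta w_A(m)$ to read off the thresholds $\tilde Q_i^{dom}$ and $\tilde Q_i^{exp}$, and truncate to $[0,1]$. You are in fact slightly more explicit than the paper on the one delicate point---that country $B$'s producing regions are upper intervals because $\Delta w_B(m)$ is decreasing, with $Q_B^o=1-Q_A^o$ closing the symmetry---which the paper compresses into a single sentence.
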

Our results below sometimes refer to an \textit{interior solution},
which shall mean that for both countries $i\in\{A,B\}$ we have $Q_{i}^{dom}=\tilde{Q}_{i}^{dom}$
and $Q_{i}^{exp}=\tilde{Q}_{i}^{exp}$. For interior solutions, effective
tariffs and subsidies shift the quantities in a simple linear and
additive manner in the expected direction: production subsidies $s_{i}$
increase both domestic and export sales $Q_{i}^{dom}$ and $Q_{i}^{exp}$
while foreign production subsidies $s_{j}$ work directly opposite.
Domestic sales $Q_{i}^{dom}$ are increased by own effective tariffs
$\tilde{\tau}_{i}$ and are reduced by foreign effective exports subsidies
$\tilde{e}_{j}$, while exports $Q_{i}^{exp}$ fall in foreign effective
tariffs $\tilde{\tau}_{j}$ and increase in own effective export subsidies
$\tilde{e}_{i}$. 

\subsubsection*{TIC markets}

If country $i$ implements a tradeable import certificate (TIC) scheme,
a market equilibrium must satisfy 

\begin{equation}
\eta_{i}Q_{i}^{exp}\geq Q_{i}^{imp}\label{eq:TIC_cond}
\end{equation}
 where $\eta_{i}Q_{i}^{exp}$ is the certificate supply and $Q_{i}^{imp}$
the certificate demand. We distinguish three types of equilibria for
the TIC market of country $i$. A binding TIC equilibrium is characterized
by $\pi_{i}\geq0$ and $\eta_{i}Q_{i}^{exp}=Q_{i}^{imp}>0$; a non-binding
equilibrium by $\pi_{i}=0$, $\eta_{i}Q_{i}^{exp}\geq Q_{i}^{imp}$
and $Q_{i}^{exp}>0$. Finally, a pure autarky equilibrium is characterized
by no import or exports $Q_{i}^{exp}=Q_{i}^{imp}=0$. The whole economy
is in a market equilibrium if the TIC market for every country that
has implemented a TIC scheme satisfies one of those equilibrium conditions
and all previous equilibrium conditions for every product market $m\in[0,1]$
are satisfied. The appendix provides a series of helpful results that
characterize TIC equilibria. E.g. if both countries implement a TIC
scheme and $\eta_{A}\eta_{B}>1$ then at most one country can have
a binding TIC equilibrium, while if $\eta_{A}\eta_{B}<1$ only a pure
autarky equilibrium exists. Here is a characterization of equilibrium
certificate prices:
\begin{lem}
\label{lem:pi_binding-1} Consider a market equilibrium with an interior
solution in which country $i$ has a binding TIC equilibrium with
$\pi_{i}>0$ and the certificate price in the other country $j$ satisfies
$\pi_{j}=0$ (or $j$ has no TIC). Then the certificate price in country
$i$ is given by 
\begin{equation}
\pi_{i}=\frac{\alpha_{j}-\eta_{i}\alpha_{i}+(1+\eta_{i})(s_{j}-s_{i})+(e_{j}-\eta_{i}e_{i})+(\eta_{i}\tau_{j}-\tau_{i})}{\,1+\phi_{i}\eta_{i}^{2}\,}.\label{eq:pi_bind-1}
\end{equation}
\end{lem}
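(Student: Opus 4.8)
The plan is to impose the binding TIC condition and solve the resulting linear equation for $\pi_i$, using the interior-solution formulas from Proposition~\ref{prop:market_eq}. Because a binding equilibrium means $\eta_i Q_i^{exp}=Q_i^{imp}$ and $Q_i^{imp}=Q_j^{exp}$ by definition, the single equation $\eta_i Q_i^{exp}=Q_j^{exp}$ pins down the price. Since we are told we are at an interior solution, I may substitute the unconstrained expressions for $Q_i^{exp}$ and $Q_j^{exp}$ directly, without worrying about the $\max/\min$ clipping.

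First I would record how $\pi_i$ enters each effective rate. As $\pi_j=0$ (or $j$ runs no TIC), country $j$'s effective rates collapse to the nominal ones, $\tilde\tau_j=\tau_j$ and $\tilde e_j=e_j$, whereas for country $i$ the price appears in both channels, $\tilde\tau_i=\tau_i+\pi_i$ and $\tilde e_i=e_i+\phi_i\eta_i\pi_i$, via \eqref{eq:tau_tilde} and \eqref{eq:e_tilde}. I then write the binding condition with the interior expressions
\[
\eta_i\!\left(\frac{\alpha_i}{\delta}+\frac{(s_i-s_j)+(\tilde e_i-\tilde\tau_j)}{\delta}\right)=\frac{\alpha_j}{\delta}+\frac{(s_j-s_i)+(\tilde e_j-\tilde\tau_i)}{\delta},
\]
multiply through by $\delta$, and insert the rates above. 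The equation is linear in $\pi_i$: the left side contributes $\phi_i\eta_i^2\pi_i$ through $\eta_i\tilde e_i$, while the right side contributes $-\pi_i$ through $-\tilde\tau_i$. Moving both to one side produces the coefficient $1+\phi_i\eta_i^2$, which is exactly the denominator in \eqref{eq:pi_bind-1}.

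The remaining work is bookkeeping of the constant terms, and dividing by $1+\phi_i\eta_i^2$. The $\alpha$-terms give $\alpha_j-\eta_i\alpha_i$; the two subsidy differences combine as $(s_j-s_i)-\eta_i(s_i-s_j)=(1+\eta_i)(s_j-s_i)$; the export subsidies give $e_j-\eta_i e_i$; and the tariffs give $\eta_i\tau_j-\tau_i$. I do not expect any genuine obstacle here: the only point requiring care is sign-tracking, in particular noticing that the two subsidy differences \emph{reinforce} rather than cancel (both push toward $s_j-s_i$ after the $\eta_i$-weighting), so that the numerator assembles into precisely the stated form.
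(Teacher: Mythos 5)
Your proposal is correct and follows essentially the same route as the paper's own proof: substitute the interior expressions from Proposition~\ref{prop:market_eq} into the binding condition $\eta_i Q_i^{exp}=Q_j^{exp}$, use $\tilde\tau_j=\tau_j$, $\tilde e_j=e_j$ together with $\tilde\tau_i=\tau_i+\pi_i$, $\tilde e_i=e_i+\phi_i\eta_i\pi_i$, and solve the resulting linear equation, with the coefficient $1+\phi_i\eta_i^2$ arising exactly as you describe. The sign bookkeeping, including the combination $(1+\eta_i)(s_j-s_i)$, matches the paper's computation.
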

The result directly shows, how larger tariffs and subsidies by country
$j$ lead to an automatic adjustment in the certificate prices of
country $i$ if it already has a binding TIC equilibrium. Effects
become cleanest for the special case $\eta_{i}=\phi_{i}=1$. We then
find that an increase of country $j$'s tariff $\tau_{j}$ or its
export subsidies $e_{j}$ by one unit increases country $i$'s certificate
price $\pi_{i}$ and consequently its effective export subsidies $\tilde{e}_{i}$
and tariffs $\tilde{\tau}_{i}$ all by half a unit. For a unit increase
in $j$'s production subsidies $s_{j}$ the corresponding effects
are twice as large. Intuitively, that is because $s_{j}$ boosts both
domestic production and export for country $j$ and thus yields a
twice as large adjustment in country $i$'s certificate price.

\subsubsection*{Direct economic costs and countries' utility functions}

In our model each country will balance a strategic motive for higher
total production with the direct cost borne by its consumers and tax
payers minus profits of domestic firms.\footnote{Even though under perfect competition firms' profits will always be
zero, the formulas below become simpler if we directly consider firms'
profits. In particular, for imports the complete certificate price
$\pi_{i}$ enters positively, not only the share accruing to the government,
while for exports the effective subsidy rate $\tilde{e}_{i}$ becomes
relevant for direct costs, not only the direct subsidies $e_{i}$. } 

We denote by
\begin{equation}
d_{i}^{dom}(m)=c_{i}^{dom}(m)+s_{i}=w_{i}(m)\label{eq:d_dom}
\end{equation}
the direct cost for a market $m$ that is served by country $i$'s
domestic producers. Production subsidies $s_{i}$ will reduce consumers'
prices one-to-one in our model of perfect competition and thus cancel
out of these direct costs. The term

\begin{equation}
d_{i}^{imp}(m)=c_{j}^{exp}(m)-\tau_{i}-\pi_{i}=w_{j}(m)-s_{j}-\tilde{e}_{j}\label{eq:d_imp}
\end{equation}

denotes the direct cost of product $m$ if it is imported to country
$i$. Tariffs $\tau_{i}$ and certificate prices $\pi_{i}$ cancel
out as they directly increase consumer prices in country $i$. In
contrast, effective foreign subsidies $\tilde{e}_{j}$ and $s_{j}$
reduce the direct cost of imported products.

Finally,
\begin{equation}
d_{i}^{exp}(m)=\tilde{e}_{i}+s_{i}\label{eq:d_exp}
\end{equation}
denotes the direct cost for country $i$ if it exports to market $j$.
It just consists of country $i$'s effective subsidies $\tilde{e}_{i}+s_{i}$,
which benefit foreign consumers through lower prices.

We denote by
\begin{equation}
D_{i}=\int_{0}^{1}\left(d_{i}^{dom}(m)\cdot q_{i}^{dom}(m)+d_{i}^{imp}(m)\cdot q_{j}^{exp}(m)+d_{i}^{exp}(m)\cdot q_{i}^{exp}(m)\right)dm
\end{equation}
the total direct economic costs of country $i$. Let $D_{i}^{o}$
denote the direct cost of country $i$ in the free-trade outcome. 
\begin{prop}
\label{prop:E}For a given market outcome let $E_{i}$ denote the
excess cost for country $i$ compared to the free trade outcome. It
is given by
\begin{equation}
E_{i}\equiv D_{i}-D_{i}^{o}=\underbrace{\frac{\delta}{2}\big(Q_{i}^{dom}-Q_{i}^{o}\big)^{2}}_{\text{production inefficiency}}\;+\;\underbrace{(s_{i}+\tilde{e}_{i})\,Q_{i}^{exp}-(s_{j}+\tilde{e}_{j})\,Q_{i}^{imp}.}_{\text{budget transfers}}\label{eq:E_i}
\end{equation}
The total excess cost \textup{$E\equiv E_{A}+E_{B}$} of both countries
satisfies 
\begin{align}
E & =\frac{\delta}{2}\left[\big(Q_{A}^{dom}-Q_{A}^{o}\big)^{2}+\big(Q_{B}^{dom}-Q_{B}^{o}\big)^{2}\right]\label{eq:eq_E}\\
 & =\frac{\delta}{2}\left[\big(Q_{A}^{exp}-Q_{A}^{o}\big)^{2}+\big(Q_{B}^{exp}-Q_{B}^{o}\big)^{2}\right]\\
 & =\frac{\delta}{2}\left[\big(Q_{i}^{dom}-Q_{i}^{o}\big)^{2}+\big(Q_{i}^{exp}-Q_{i}^{o}\big)^{2}\right]
\end{align}
\end{prop}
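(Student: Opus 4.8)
The plan is to compute $D_i$ explicitly from its definition by substituting the equilibrium production pattern, then to split the result into a purely technological part and the subsidy/tariff terms. First I would use Proposition~\ref{prop:market_eq} together with the cost comparisons \eqref{eq:q_dom}--\eqref{eq:q_exp} to pin down the three integration regions. For country $A$ the domestic threshold $c_A^{dom}(m)\le c_B^{exp}(m)$ reduces, via $\Delta w_A(m)=-\alpha_A+\delta m$, to $m\le Q_A^{dom}$, so $A$ produces on $[0,Q_A^{dom}]$, imports from $B$ on $(Q_A^{dom},1]$, and exports on $[0,Q_A^{exp})$. Plugging $d_A^{dom}=w_A$, $d_A^{imp}=w_B-s_B-\tilde e_B$ and the constant $d_A^{exp}=s_A+\tilde e_A$ into the definition of $D_A$ and using $Q_A^{imp}=1-Q_A^{dom}$ (the identity $Q_A^{dom}+Q_B^{exp}=1$), the constant terms collect into exactly the claimed budget-transfer expression $(s_A+\tilde e_A)Q_A^{exp}-(s_B+\tilde e_B)Q_A^{imp}$, leaving the technological cost $\int_0^{Q_A^{dom}}w_A+\int_{Q_A^{dom}}^1 w_B$.

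The key step is to show that this technological cost minus its free-trade value equals the production-inefficiency square. Since free trade corresponds to $Q_A^{dom}=Q_A^o$, subtracting telescopes the two integrals to $\int_{Q_A^o}^{Q_A^{dom}}\big(w_A(m)-w_B(m)\big)\,dm=\int_{Q_A^o}^{Q_A^{dom}}\Delta w_A(m)\,dm$. Writing $\Delta w_A(m)=\delta\,(m-Q_A^o)$ makes the integrand linear, so the integral is the perfect square $\tfrac{\delta}{2}(Q_A^{dom}-Q_A^o)^2$, valid irrespective of the sign of $Q_A^{dom}-Q_A^o$ and hence also at corner solutions where $Q_A^{dom}\in\{0,1\}$. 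This yields \eqref{eq:E_i} for $i=A$; the computation for $B$ is symmetric, except that the natural threshold splitting $B$'s technological cost is $Q_A^{exp}=1-Q_B^{dom}$, so one first obtains $\tfrac{\delta}{2}(Q_A^{exp}-Q_A^o)^2$ and then invokes $Q_B^{dom}-Q_B^o=-(Q_A^{exp}-Q_A^o)$ to rewrite it as $\tfrac{\delta}{2}(Q_B^{dom}-Q_B^o)^2$.

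For the total cost I would add $E_A$ and $E_B$ and use $Q_i^{imp}=Q_j^{exp}$ to cancel the budget transfers pairwise: the coefficient $(s_A+\tilde e_A)$ multiplies $Q_A^{exp}-Q_B^{imp}=0$ and $(s_B+\tilde e_B)$ multiplies $Q_B^{exp}-Q_A^{imp}=0$, leaving only the two squares in \eqref{eq:eq_E}. The remaining two representations follow from the structural identities $Q_A^{dom}+Q_B^{exp}=1$ and $Q_A^{exp}+Q_B^{dom}=1$, which give $Q_A^{dom}-Q_A^o=-(Q_B^{exp}-Q_B^o)$ and $Q_B^{dom}-Q_B^o=-(Q_A^{exp}-Q_A^o)$; each domestic square thus equals the partner's export square, turning the sum over domestic deviations into the sum over export deviations and, pairing one domestic with one export term, into the single-country form.

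The main obstacle I anticipate is purely the region bookkeeping: ensuring the three integration ranges are correctly located (note that $A$ produces on the low-$m$ end but $B$ on the high-$m$ end, and that the threshold governing $B$'s import/domestic split is $Q_A^{exp}$, not $Q_B^{dom}$) and verifying that the perfect-square telescoping remains valid at corner solutions where the interior value $\tilde Q$ falls outside $[0,1]$. Once the regions are fixed, both the telescoping and the cancellation are routine.
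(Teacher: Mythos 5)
Your proposal is correct and follows essentially the same route as the paper's proof: the same split of $D_i$ into a technological integral plus the budget-transfer terms, with your explicit telescoping $\int_{Q_A^o}^{Q_A^{dom}}\Delta w_A(m)\,dm=\tfrac{\delta}{2}\big(Q_A^{dom}-Q_A^o\big)^2$ being exactly the paper's ``area under a linear ramp'' computation over the misallocated interval (and equally valid at corners, as you note). Your pairwise cancellation of transfers via $Q_i^{imp}=Q_j^{exp}$ and the rewriting of the three representations through $Q_i^{dom}=1-Q_j^{exp}$ and $Q_i^{o}=1-Q_j^{o}$ likewise match the paper's final step.
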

The result implies that total costs are minimized in the free trade
outcome and that absent strategic concerns for higher production levels,
no country has an incentive to implement policy instruments that deviate
from the free trade outcome. 

However, if there are strategic preferences for larger domestic production
$X_{i}$, a country $i$ may use subsidies or tariffs, even if they
increase direct costs $D_{i}$. Starting from the free trade level,
a marginal increase in subsidies already has a non-zero marginal effect
on country $i$'s direct cost. That is because the subsidy payments
directly benefit foreign consumers whose import prices fall. In contrast,
the marginal effect of tariffs on a country's direct costs is initially
zero since tariffs have a 100\% pass-through rate to domestic consumers
in our model. Yet, as shown in Proposition 1, tariffs $\tau_{i}$
increase domestic production $Q_{i}^{dom}$ and the production inefficiency
$\frac{\delta}{2}\big(Q_{i}^{dom}-Q_{i}^{o}\big)^{2}$ increases quadratically
in the distortion of domestic production from the free trade level.
Thus if the goal is to increase $X_{i}$ sufficiently far above the
free trade level, it can become optimal to augment tariffs with subsidies.

It is often easier to reason intuitively about tariffs and export
subsidies, since they affect only domestic sales or exports, respectively,
than about production subsidies $s_{i}$, which influence both. The
following result will therefore be useful.
\begin{lem}
\label{lem:no_s-1}Consider two vectors of effective tariffs and subsidies
$(\tilde{\tau}_{i},\tilde{e}_{i},s_{i})$ and $(\tilde{\tau}_{i}',\tilde{e}_{i}',s_{i}')$.
If they satisfy
\begin{align}
\tilde{\tau}_{i}' & =\tilde{\tau}_{i}+s_{i}-s_{i}'\\
\tilde{e}_{i}' & =\tilde{e}_{i}+s_{i}-s_{i}'
\end{align}
then for any given effective tariffs and subsidies of country $j$
both vectors yield the same production quantities and direct costs
for both countries. In particular, this result also holds for $s_{i}'=0$.
\end{lem}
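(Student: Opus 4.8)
The plan is to show that the two policy vectors are observationally equivalent by verifying that they produce identical production quantities and identical direct costs, exploiting the structure of Proposition~\ref{prop:market_eq} and Proposition~\ref{prop:E}.

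First I would establish that the two vectors yield the same quantities. By Proposition~\ref{prop:market_eq}, the quantities $Q_i^{dom}$ and $Q_i^{exp}$ depend on the policy instruments only through the expressions $\tilde{Q}_i^{dom}$ and $\tilde{Q}_i^{exp}$, and these in turn depend on country $i$'s instruments only through the combinations $s_i + \tilde{\tau}_i$ and $s_i + \tilde{e}_i$ (since $s_i$ always enters as $s_i - s_j$ and is paired additively with either $\tilde{\tau}_i$ or $\tilde{e}_i$). The hypothesized transformation is designed precisely so that $s_i' + \tilde{\tau}_i' = s_i' + \tilde{\tau}_i + s_i - s_i' = s_i + \tilde{\tau}_i$ and likewise $s_i' + \tilde{e}_i' = s_i + \tilde{e}_i$. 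Since country $j$'s instruments are held fixed, every term appearing in $\tilde{Q}_i^{dom}$, $\tilde{Q}_i^{exp}$, $\tilde{Q}_j^{dom}$, $\tilde{Q}_j^{exp}$ is unchanged, so all four quantities (and hence all the $q$ indicator functions) are identical under both vectors.

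Second I would show the direct costs coincide. The cleanest route is to appeal to Proposition~\ref{prop:E}: the excess cost $E_i = D_i - D_i^{o}$ is expressed as a production-inefficiency term $\tfrac{\delta}{2}(Q_i^{dom} - Q_i^{o})^2$ plus budget transfers $(s_i + \tilde{e}_i) Q_i^{exp} - (s_j + \tilde{e}_j) Q_i^{imp}$. Since the quantities are unchanged by Step~1, the production-inefficiency term is unchanged. For the budget-transfer term, the coefficient $s_i + \tilde{e}_i$ is invariant under the transformation (as computed above), and $s_j + \tilde{e}_j$ is fixed by hypothesis, so $E_i$ is unchanged; as $D_i^{o}$ is a constant independent of policy, $D_i$ itself is unchanged. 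The same argument applied with roles swapped, together with the fact that $Q_j^{exp}$ and $Q_j^{imp}$ are unchanged and country $j$'s own instruments are fixed, gives that $D_j$ is unchanged as well.

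The only mild subtlety I anticipate is confirming that the combination $s_i + \tilde{e}_i$ (rather than $s_i$ and $\tilde{e}_i$ separately) is genuinely the sole channel through which these instruments affect both quantities and costs; this is exactly why the footnote emphasizes using the effective subsidy $\tilde{e}_i$ and firms' profits in the cost accounting, so that $s_i$ and $\tilde{e}_i$ enter symmetrically. Once that is verified, the special case $s_i' = 0$ is immediate: setting $s_i' = 0$ in the two transformation equations gives $\tilde{\tau}_i' = \tilde{\tau}_i + s_i$ and $\tilde{e}_i' = \tilde{e}_i + s_i$, a valid instance of the general statement, so any outcome achievable with a production subsidy can be replicated without one by loading its value onto the effective tariff and export subsidy.
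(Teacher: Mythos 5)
Your proof is correct and takes essentially the same approach as the paper: both rest on the observation that country $i$'s instruments affect market allocation only through the invariant sums $s_{i}+\tilde{\tau}_{i}$ and $s_{i}+\tilde{e}_{i}$, and affect direct costs only through $s_{i}+\tilde{e}_{i}$. The paper argues directly from the per-market cost differences $c_{j}^{exp}(m)-c_{i}^{dom}(m)$ and $c_{i}^{exp}(m)-c_{j}^{dom}(m)$ and the per-market direct costs, while you route the same invariances through the aggregate formulas of Propositions~\ref{prop:market_eq} and~\ref{prop:E} (legitimately, since both precede the lemma) --- a cosmetic difference only.
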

If countries have strategic production goals, welfare optimal production
levels can deviate from those that minimize total production cost.
The following result characterizes optimal production allocation conditional
that certain total production levels for both countries shall be achieved.
\begin{prop}
\label{prop:cond_eff}Let $\bar{D}^{o}(X_{A},X_{B})$ denote the minimum
total direct cost under the restriction that countries $A$ and $B$
produce $X_{A}$ and $X_{B}$, respectively. For any market outcome
that yields $X_{A}$ and $X_{B}$ with total direct excess costs $D$
define by $\bar{E}(X_{A},X_{B})=D-\bar{D}^{o}(X_{A},X_{B})$ the excess
cost compared to the lowest cost achievable when producing $X_{A}$
and $X_{B}$. We have 
\begin{equation}
\bar{E}(X_{A},X_{B})=\frac{\delta}{4}\,\big(Q_{i}^{dom}-Q_{i}^{exp}\big)^{2}\label{eq:condDW-quantities}
\end{equation}
for either country $i\in\{A,B\}$. In an interior solution with $Q_{i}^{dom}=\tilde{Q}_{i}^{dom}$
and $Q_{i}^{exp}=\tilde{Q}_{i}^{exp}$ for both countries, it also
satifies 
\end{prop}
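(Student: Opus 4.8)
The plan is to reduce the total direct cost to a single integral over the physical production allocation and then compare the actual allocation with the cost-minimizing one for given $X_A,X_B$. First I would show that all tariff, subsidy and certificate transfers cancel in the sum $D=D_A+D_B$, so that total direct cost depends only on which country produces each good. Summing the per-market contributions from \eqref{eq:d_dom}--\eqref{eq:d_exp}, a market served by $i$ contributes its technological cost $w_i(m)$ and the transfer terms on exports net exactly against those on the corresponding imports. This yields
\[
D=\int_0^1\big(w_A(m)+w_B(m)\big)\,dm+\int_0^1\Delta w_A(m)\big(q_A^{dom}(m)-q_B^{dom}(m)\big)\,dm,
\]
where the first integral is a constant. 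Writing $g(m)=q_A^{dom}(m)-q_B^{dom}(m)\in\{-1,0,1\}$ and using the market-clearing identities behind \eqref{eq:Q_dom_Q_exp}, the production target becomes the single linear constraint $\int_0^1 g(m)\,dm=X_A-1$.

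Next I would identify $\bar D^{o}(X_A,X_B)$. Since $\Delta w_A(m)=-\alpha_A+\delta m$ is strictly increasing, minimizing $\int\Delta w_A\,g$ subject to $\int g=X_A-1$ is a bang-bang problem: a rearrangement (or Lagrangian) argument shows the minimizer assigns $g=+1$ to the lowest-index products and $g=-1$ to the rest, with a single cutoff $m^{*}=X_A/2$ and no overlap region. Equivalently, the constrained optimum has $Q_i^{dom}=Q_i^{exp}=X_i/2$ for both countries. I expect this step --- establishing that the efficient allocation collapses $Q^{dom}$ and $Q^{exp}$ onto one threshold --- to be the main obstacle; the rest is routine.

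Then I would use that any market outcome inherits the threshold structure already exploited for Proposition \ref{prop:market_eq}: country $A$ serves its home market on $[0,Q_A^{dom}]$ and exports on $[0,Q_A^{exp}]$. Hence the actual $g$ equals $+1$ on $[0,Q_-]$, $0$ on the gap $(Q_-,Q_+)$, and $-1$ on $(Q_+,1]$, with $Q_-=\min(Q_A^{dom},Q_A^{exp})$ and $Q_+=\max(Q_A^{dom},Q_A^{exp})$, whereas the efficient $g^{*}$ splits $[Q_-,Q_+]$ at its midpoint $m^{*}$. Since $g-g^{*}$ equals $-1$ on $[Q_-,m^{*}]$ and $+1$ on $[m^{*},Q_+]$ while $\Delta w_A$ is linear with slope $\delta$, I would evaluate
\[
\bar E=\int_0^1\Delta w_A(m)\big(g(m)-g^{*}(m)\big)\,dm=\delta\Big(\tfrac{Q_+-Q_-}{2}\Big)^{2}=\frac{\delta}{4}\big(Q_A^{dom}-Q_A^{exp}\big)^2,
\]
and invoke \eqref{eq:Q_dom_Q_exp} to replace $A$ by either $i$.

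As a cross-check and a shorter route for the final computation, I would note that $\bar E=E-E^{\min}$, where $E$ is the excess cost of Proposition \ref{prop:E} and $E^{\min}$ is its value at the efficient allocation $Q_i^{dom}=Q_i^{exp}=X_i/2$. Writing $u=Q_i^{dom}-Q_i^{o}$ and $v=Q_i^{exp}-Q_i^{o}$, the last identity of Proposition \ref{prop:E} gives $E=\tfrac{\delta}{2}(u^2+v^2)$, while at the efficient allocation (where $X_i/2-Q_i^{o}=(u+v)/2$) it gives $E^{\min}=\tfrac{\delta}{4}(u+v)^2$, so that $\bar E=\tfrac{\delta}{4}(u-v)^2=\tfrac{\delta}{4}(Q_i^{dom}-Q_i^{exp})^2$. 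For the (cut-off) interior statement I would finally substitute $Q_i^{dom}-Q_i^{exp}=\tilde Q_i^{dom}-\tilde Q_i^{exp}=\big(\tilde\tau_A+\tilde\tau_B-\tilde e_A-\tilde e_B\big)/\delta$ from Proposition \ref{prop:market_eq}.
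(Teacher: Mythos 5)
Your proposal is correct, and it actually contains two complete arguments. Your ``cross-check'' is precisely the paper's own proof: apply the identity $E=\tfrac{\delta}{2}\big[(Q_i^{dom}-Q_i^{o})^{2}+(Q_i^{exp}-Q_i^{o})^{2}\big]$ from Proposition \ref{prop:E}, observe that for fixed $X_i\in[0,2]$ the minimizer is $Q_i^{dom}=Q_i^{exp}=X_i/2$ (bounds slack), and use $d^{2}+e^{2}-\tfrac12(d+e)^{2}=\tfrac12(d-e)^{2}$, followed by the same substitution $Q_i^{dom}-Q_i^{exp}=\big((\tilde\tau_A+\tilde\tau_B)-(\tilde e_A+\tilde e_B)\big)/\delta$ for the interior case. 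Your primary route is genuinely different and in one respect stronger: by showing that all transfer terms cancel in $D=D_A+D_B$, reducing total cost to $\int_0^1 \Delta w_A(m)\,g(m)\,dm$ with $g=q_A^{dom}-q_B^{dom}\in\{-1,0,1\}$ and the single constraint $\int g=X_A-1$, you establish via the bang-bang/rearrangement argument that the single-cutoff allocation at $m^{*}=X_A/2$ minimizes cost over \emph{all} feasible allocations, not merely over the threshold allocations implicitly parametrized by $(Q_i^{dom},Q_i^{exp})$ in the paper's minimization --- a point the paper glosses over when it identifies $\bar D^{o}$ with the minimum of the quadratic $f$. Your band computation $\bar E=\delta\big(\tfrac{Q_+-Q_-}{2}\big)^{2}$ also handles the cross-hauling gap ($Q_A^{dom}<Q_A^{exp}$, where both countries export on $(Q_-,Q_+)$) transparently, since $g=0$ there in either configuration. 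What the paper's shortcut buys is brevity, by leaning on the already-proved Proposition \ref{prop:E}; what your direct route buys is a self-contained justification of the efficient benchmark and of the claim that the formula holds for any market outcome, including those with two-way trade in the gap region.
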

\noindent 
\begin{equation}
\bar{E}(X_{A},X_{B})=\frac{1}{4\delta}\,\Big(\,(\tilde{\tau}_{A}+\tilde{\tau}_{B})-(\tilde{e}_{A}+\tilde{e}_{B})\,\Big)^{2}.\label{eq:condDW-policies}
\end{equation}

So irrespective of the strength of each country $i$'s concerns for
minimum production levels $X_{i}$, from a welfare point of view it
is always optimal if for each country $i$ domestic production is
equal to its exports $Q_{i}^{dom}=Q_{i}^{exp}$, which is the case
if for both countries effective export subsidies are equal to effective
tariffs. 

\section{Strategic Interactions and Stability of International Agreements}

We now analyze strategic interactions between both countries and the
stability of international trade agreements. For the sake of tractability,
we restrict attention to the following scenario: Country A has a competitive
disadvantage in the strategic sector and wants to implement a goal
for domestic production above the free trade equilibrium level. Formally,
the subsequent analysis satisfies the following assumption.
\begin{assumption}
\label{as:scen}Assume $\alpha_{A}\leq\alpha_{B}$, i.e. country $A$
is a net importer in the free-trade equilibrium. Further assume country
$A$ has a strategic production goal $\bar{X}_{A}$ satisfying $X_{A}^{o}<\bar{X}_{A}<1$
where $X_{A}^{o}$ is the free trade production of country $A$. Country
$A$'s utility function is given by
\begin{equation}
u_{A}=-\lambda_{A}\max(\bar{X}_{A}-X_{A},0)-D_{A}
\end{equation}
with $\lambda_{A}$ measuring the marginal disutility of falling behind
the strategic goal. Country $B$ simply prefers higher production
in the strategic sector without a specific goal: 
\begin{equation}
u_{B}=\gamma_{B}X_{B}-D_{B}.
\end{equation}
Assume $\lambda_{A}\gg\gamma_{B}>0$.
\end{assumption}

\subsection{Policy choice in a Nash equilibrium without tradeable import certificates}

The next result characterizes the Nash equilibrium if countries endogenously
and simultaneously choose their tariffs and export subsidies but don't
implement a TIC scheme. Given Lemma \ref{lem:no_s-1}, we simplify
the analysis by assuming w.l.o.g. that production subsidies $s_{A}$
and $s_{B}$ are zero.
\begin{prop}
\label{prop:eq_tic}Suppose Assumption 1 holds and that countries
simultaneously choose export subsidies and tariffs, with no TIC scheme
or production subsidies. In every Nash equilibrium: (i) country A
will produce its target $X_{A}=\bar{X}_{A}$, (ii) both countries
choose tariffs weakly larger than export subsidies, with strict inequality
for country $A$; (iii) the equilibrium is not conditionally efficient.
Moreover, in an interior solution there is a unique Nash equilibrium
characterized by 
\begin{align}
\tau_{B}^{*} & =\gamma_{B}\label{eq:tau_b_no_TIC}\\
e_{B}^{*} & =0\nonumber 
\end{align}
and 
\begin{align}
\tau_{A}^{*} & =-\alpha_{A}+\frac{2}{3}\,\delta\,\bar{X}_{A}+\frac{1}{3}\,\gamma_{B}\\
e_{A}^{*} & =-\alpha_{A}+\frac{1}{3}\,\delta\,\bar{X}_{A}+\frac{2}{3}\,\gamma_{B}
\end{align}
\end{prop}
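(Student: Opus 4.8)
The plan is to exploit the additive decomposition of direct costs from Proposition~\ref{prop:E}. Since $s_i=\pi_i=0$ we have $\tilde\tau_i=\tau_i$, $\tilde e_i=e_i$, and $D_i=D_i^o+E_i$ with $E_i=\frac{\delta}{2}(Q_i^{dom}-Q_i^o)^2+e_iQ_i^{exp}-e_jQ_i^{imp}$; the constant $D_i^o$ drops out of each objective. Thus $A$ maximizes $-\lambda_A\max(\bar X_A-X_A,0)-E_A$ and $B$ maximizes $\gamma_BX_B-E_B$, where on the interior region Proposition~\ref{prop:market_eq} makes all quantities affine in the instruments and $Q_A^{dom}+Q_B^{exp}=Q_B^{dom}+Q_A^{exp}=1$ forces $X_A+X_B=2$. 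I would first solve the two best-response problems on the interior region and only afterwards patch in the non-negativity constraints.

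For $B$'s best response I would substitute the interior formulas and differentiate. The pleasant key fact is that the cross terms cancel: $\partial_{\tau_B}(\gamma_BX_B-E_B)=(\gamma_B-\tau_B)/\delta$, so the unique interior maximizer in $\tau_B$ is $\tau_B=\gamma_B$, independent of $A$. For the export subsidy, $\partial_{e_B}(\gamma_BX_B-E_B)$ evaluated at $e_B=0$ equals $(\gamma_B-\alpha_B+\tau_A)/\delta$, which is $\le 0$ whenever $\tau_A\le\alpha_B-\gamma_B$; hence the constraint $e_B\ge 0$ binds and $e_B^*=0$. Joint concavity (the Hessian is diagonal with entries $-1/\delta$ and $-2/\delta$) gives uniqueness and also yields claim (ii) for $B$, since $\tau_B=\gamma_B\ge 0=e_B$.

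For $A$ I would first establish claim (i), $X_A=\bar X_A$, using that $\lambda_A$ is large: if $X_A<\bar X_A$ a marginal increase in $\tau_A$ raises $u_A$, because the penalty gain $\lambda_A\,\partial_{\tau_A}X_A$ swamps the bounded marginal cost $\partial_{\tau_A}E_A$; if $X_A>\bar X_A$ the penalty is inactive and shaving the most distortionary instrument lowers $E_A$ at no penalty cost. With the target binding, $A$ minimizes $E_A$ subject to $\tau_A+e_A$ equalling the constant $K$ that delivers $X_A=\bar X_A$. The first-order condition in the split gives $3\tau_A=2K-\tau_B+\alpha_A$; substituting $B$'s response $(\tau_B,e_B)=(\gamma_B,0)$ and $K=\delta\bar X_A-2\alpha_A+\gamma_B$ yields the stated $\tau_A^*$ and $e_A^*=K-\tau_A^*$. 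Strict convexity of $E_A$ in the split gives uniqueness, and $\tau_A^*-e_A^*=\tfrac13(\delta\bar X_A-\gamma_B)>0$ (using $\delta\bar X_A>\delta X_A^o=2\alpha_A$ and $\gamma_B$ small) delivers the strict part of (ii). I would then check consistency: $\tau_A^*\le\alpha_B-\gamma_B$ reduces to $\bar X_A\le\tfrac32-2\gamma_B/\delta$, which holds for $\bar X_A<1$ and small $\gamma_B$, so $B$'s corner is indeed a best reply; together with $\tau_A^*,e_A^*\ge 0$ and interior quantities these are exactly the restrictions hidden in the phrase ``interior solution.''

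Claim (iii) then follows from Proposition~\ref{prop:cond_eff}: on the interior solution $\bar E(X_A,X_B)=\frac{1}{4\delta}\big((\tau_A+\tau_B)-(e_A+e_B)\big)^2$, and the instrument values give $(\tau_A+\tau_B)-(e_A+e_B)=\tfrac13\delta\bar X_A+\tfrac23\gamma_B>0$, so $\bar E>0$ and the equilibrium is not conditionally efficient. The main obstacle is not the interior algebra, which is routine once the cross-term cancellation is spotted, but establishing (i)--(iii) for \emph{every} Nash equilibrium, including corner configurations where some $Q_i^{dom}$ or $Q_i^{exp}$ hits $0$ or $1$ and the affine formulas of Proposition~\ref{prop:market_eq} no longer apply. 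There I would argue directly from the kinked objectives: that $\lambda_A$ large forces $X_A=\bar X_A$ regardless of regime; that a country never gains from $e_i>\tau_i$, since an export subsidy carries a first-order transfer cost that a tariff, with full pass-through, does not; and that the resulting $Q_i^{dom}\neq Q_i^{exp}$ keeps $\bar E>0$ through the quantity form $\frac{\delta}{4}(Q_i^{dom}-Q_i^{exp})^2$ of Proposition~\ref{prop:cond_eff}.
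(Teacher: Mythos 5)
Your proposal is correct and takes essentially the same route as the paper's own proof: interior first-order conditions from Propositions \ref{prop:market_eq} and \ref{prop:E} (your derivatives match the paper's $\partial D_B/\partial\tau_B=\tau_B/\delta$ and $\partial D_B/\partial e_B=Q_B^o+(2e_B-\tau_A)/\delta$, including the cross-term cancellation), the corner $e_B^*=0$ verified against $\tau_A^*$, the large-$\lambda_A$ argument pinning $X_A=\bar X_A$, constrained cost minimization over the split $\tau_A+e_A=K$ with the identical $K=\gamma_B+\delta\bar X_A-2\alpha_A$, and conditional inefficiency via Proposition \ref{prop:cond_eff}. Your closing sketch of the corner configurations parallels the paper's ``additional step'' at a comparable level of detail, and your explicit handling of the overshoot case $X_A>\bar X_A$ is, if anything, slightly more complete than the paper's contradiction argument, which only rules out $X_A<\bar X_A$.
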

That tariffs are set at higher levels than export subsidies is consistent
with the financial incentives identified in Proposition \ref{prop:E}.
This stronger reliance on tariffs over export subsidies reduces the
volume of international trade and yields an outcome that is not conditionally
efficient. Accordingly, there is scope for international agreements
that take country A\textquoteright s strategic concerns into account
while achieving a more efficient allocation.

\subsection{Enforceable international agreements}

In this section we consider the case that enforceable trade agreements
can be written. The following result characterizes an agreement with
a TIC that implements country A's strategic production target $\bar{X}_{A}$
in a conditionally efficient manner.
\begin{prop}
\label{prop:opt_tic} \textbf{TIC Agreement} Assume country $A$ implements
a TIC scheme with $\eta_{A}$ and $\phi_{A}$ chosen such that
\begin{align}
\eta_{A} & =\frac{2-\bar{X}_{A}}{\bar{X}_{A}}>1,\label{eq:eta_opt}\\
\phi_{A}\eta_{A} & =1.\label{eq:phi_opt}
\end{align}
Country B may also implement a TIC scheme with $\eta_{B}\geq1$.
No country uses direct tariffs or subsidies. The resulting market
outcome is conditionally efficient and satisfies $X_{A}=\bar{X}_{A}$
with
\begin{align}
\pi_{A}=\tilde{e}_{A}=\tilde{\tau}_{A}= & \alpha_{A}\bar{\chi}_{A}>0\\
\pi_{B}=\tilde{e}_{B}=\tilde{\tau}_{B} & =0
\end{align}
where
\begin{equation}
\bar{\chi}_{A}=\frac{\bar{X}_{A}-X_{A}^{o}}{X_{A}^{o}}
\end{equation}
measures the relative change of country A's strategic production target
$\bar{X}_{A}$ compared to its free trade production level $X_{A}^{o}$.
Both countries A and B have higher utility than in the Nash equilibria
characterized in Proposition \ref{prop:eq_tic}.
\end{prop}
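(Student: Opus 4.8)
The plan is to (i) solve for the induced market equilibrium and its certificate price, (ii) check that it is a legitimate equilibrium that is conditionally efficient and hits $X_A=\bar{X}_A$, and (iii) compare each country's utility with the no-TIC Nash outcome of Proposition \ref{prop:eq_tic}. For (i): with $\tau_A=e_A=s_A=0$ and $\phi_A\eta_A=1$, definitions \eqref{eq:tau_tilde}--\eqref{eq:e_tilde} give $\tilde{\tau}_A=\tilde{e}_A=\pi_A$, and if $\pi_B=0$ then $\tilde{\tau}_B=\tilde{e}_B=0$. Setting all direct instruments to zero and using $\phi_A\eta_A^2=\eta_A$, the binding-price formula \eqref{eq:pi_bind-1} collapses to $\pi_A=(\alpha_B-\eta_A\alpha_A)/(1+\eta_A)$. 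Substituting $\eta_A=(2-\bar{X}_A)/\bar{X}_A$, so that $1+\eta_A=2/\bar{X}_A$, and using $\alpha_A+\alpha_B=\delta$, this simplifies to $\pi_A=(\delta\bar{X}_A-2\alpha_A)/2=\alpha_A\bar{\chi}_A$, which is strictly positive because $\bar{X}_A>X_A^o=2\alpha_A/\delta$.

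For (ii): Proposition \ref{prop:market_eq} with these effective rates yields $\tilde{Q}_A^{dom}=\tilde{Q}_A^{exp}=(\alpha_A+\pi_A)/\delta$, so $Q_A^{dom}=Q_A^{exp}=\bar{X}_A/2$ and $X_A=\bar{X}_A$. Conditional efficiency is then immediate from \eqref{eq:condDW-policies}: since $\tilde{\tau}_A=\tilde{e}_A$ and $\tilde{\tau}_B=\tilde{e}_B=0$, the bracket vanishes and $\bar{E}=0$. I would then confirm the equilibrium conditions. The binding identity for $A$, $\eta_A Q_A^{exp}=Q_A^{imp}$, reads $\tfrac{2-\bar{X}_A}{\bar{X}_A}\cdot\tfrac{\bar{X}_A}{2}=1-\tfrac{\bar{X}_A}{2}$ and holds, with $Q_A^{imp}=1-\bar{X}_A/2>0$; for $B$, the choice $\pi_B=0$ is consistent as a non-binding equilibrium provided $\eta_B Q_B^{exp}\ge Q_B^{imp}$, i.e.\ $\eta_B(1-\bar{X}_A/2)\ge\bar{X}_A/2$, which holds for every $\eta_B\ge1$ since $\bar{X}_A<1$. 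All quantities lie in $(0,1)$, so the solution is interior, as the formulas require.

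For (iii): the identity $Q_A^{dom}+Q_B^{exp}=1$ forces $X_A+X_B=2$ in every outcome, so the TIC outcome and the Nash outcome share $(X_A,X_B)=(\bar{X}_A,2-\bar{X}_A)$. Consequently $A$'s utility is $u_A=-D_A$ in both, and $u_B=\gamma_B X_B-D_B$ differs only through $D_B$; the comparison therefore reduces to the two inequalities $E_A^{\mathrm{TIC}}<E_A^{\mathrm{Nash}}$ and $E_B^{\mathrm{TIC}}<E_B^{\mathrm{Nash}}$, with $E_i=D_i-D_i^o$ from Proposition \ref{prop:E}. I would compute both sides in closed form: the TIC outcome gives production-inefficiency $\tfrac{\delta}{8}(\bar{X}_A-X_A^o)^2$ for each country together with a budget-transfer term $\pm\pi_A\bar{X}_A/2$, while the Nash side is obtained by inserting $\tau_A^*,e_A^*,\tau_B^*$ from Proposition \ref{prop:eq_tic} into Propositions \ref{prop:market_eq} and \ref{prop:E}.

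I expect (iii), and specifically the inequality for country $A$, to be the main obstacle and the step where the claim is most delicate. Conditional efficiency only controls the sum $E_A+E_B$, not how it is split, and the split is governed entirely by the budget-transfer terms. The TIC requires $A$ to run an effective export subsidy $\tilde{e}_A=\pi_A$ on exactly the exports it must generate to earn its import certificates, and that subsidy is a pure transfer to $B$'s consumers; so it is not automatic that $A$'s direct cost falls relative to the tariff-heavy Nash outcome, in which $A$ protects its market through full-pass-through tariffs and bears no such transfer. I would therefore write $E_A^{\mathrm{TIC}}-E_A^{\mathrm{Nash}}$ explicitly as a function of $\bar{X}_A$, $X_A^o$, $\gamma_B$ and $\delta$ and scrutinize its sign; the inequality for $B$, whose cost is lowered both by the efficiency gain and by the inflow of $A$'s export subsidy, should be considerably easier.
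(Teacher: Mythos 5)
Your steps (i) and (ii) are correct and are essentially the paper's own argument: the paper obtains $\pi_A$ by substituting the interior quantities of Proposition~\ref{prop:market_eq} into the binding condition $\eta_A Q_A^{exp}=Q_A^{imp}=Q_B^{exp}$, which is the same computation you package through Lemma~\ref{lem:pi_binding-1}, and conditional efficiency is verified in both cases via $(\tilde{\tau}_A+\tilde{\tau}_B)-(\tilde{e}_A+\tilde{e}_B)=\pi_A(1-\phi_A\eta_A)=0$. Your guess-and-verify check of the equilibrium conditions is in fact \emph{more} careful than the paper's proof, which never verifies that country $B$'s TIC is non-binding for $\eta_B\ge 1$ or that the solution is interior.

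The genuine gap is step (iii), which you set up correctly (the reduction to $E_A^{\mathrm{TIC}}<E_A^{\mathrm{Nash}}$ and $E_B^{\mathrm{TIC}}<E_B^{\mathrm{Nash}}$ via $X_A=\bar X_A$ in both outcomes is right) but leave as an unexecuted plan. You should know two things. First, the paper's own appendix proof stops after implementation and conditional efficiency and never proves the utility comparison at all, so you are not missing a trick that the paper supplies. Second, the step you flag as delicate is where the claim actually breaks: carrying out your proposed computation, the TIC outcome gives $E_A^{\mathrm{TIC}}=\tfrac{3t^2}{2\delta}+tQ_A^o$ with $t=\pi_A=\tfrac{\delta}{2}(\bar X_A-X_A^o)$, while inserting the interior Nash instruments of Proposition~\ref{prop:eq_tic} yields, in the limit $\gamma_B\to 0$,
\begin{equation*}
E_A^{\mathrm{TIC}}-E_A^{\mathrm{Nash}}=\frac{\delta\,\bar X_A^{2}}{24}>0,
\end{equation*}
and the corner case $e_A^*=0$ gives a positive difference as well. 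A concrete instance: $\delta=1$, $\alpha_A=0.2$ (so $X_A^o=0.4$), $\bar X_A=0.7$, $\gamma_B=0.01$. Then the TIC Agreement has $\pi_A=0.15$, $Q_A^{dom}=Q_A^{exp}=0.35$, $E_A^{\mathrm{TIC}}=\tfrac12(0.15)^2+0.15\cdot 0.35=0.06375$, whereas the interior Nash has $\tau_A^*=0.27$, $e_A^*=0.04$, $\tau_B^*=0.01$, $Q_A^{dom}=0.47$, $Q_A^{exp}=0.23$, $E_A^{\mathrm{Nash}}=\tfrac12(0.27)^2+0.04\cdot 0.23=0.04565$. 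Since $X_A=\bar X_A$ in both, $u_A=-D_A$, so country $A$ is strictly \emph{worse} off under the TIC Agreement (country $B$ is better off, as you predicted). The intuition matches your worry exactly: for small $\gamma_B$, country $B$ behaves almost identically in both regimes, and the Nash equilibrium lets $A$ use its preferred tariff-heavy split $\tau_A^*>e_A^*$, while the TIC forces the equal split $\tilde{\tau}_A=\tilde{e}_A$ and makes $A$ transfer $\tilde{e}_A Q_A^{exp}$ to $B$'s consumers. Hence the Pareto claim in the final sentence can hold only when $\gamma_B$ is large enough relative to $\delta\bar X_A^2$ (so that removing $\tau_B^*=\gamma_B$ compensates $A$), and under Assumption~\ref{as:scen} alone it is false in part of the parameter range; no completion of your step (iii) as stated can succeed, and the flaw lies in the proposition rather than in your approach.
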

A key feature of the agreement described above is that both countries
are better off than in the Nash equilibrium without an agreement.
However, many alternative agreements could also achieve Pareto improvements
of this kind. The main advantage of the proposed TIC Agreement is
its simplicity and symmetry: the policy instruments can be designed
in a structurally identical way for both countries, and the crucial
parameters $\eta_{i}$ and $\phi_{i}$ need not be calibrated to the
underlying production technologies, which are typically imperfectly
known and subject to change over time.

Negotiations that start from asymmetric positions are rarely straightforward.
Yet, countries might find it easier to agree on a general principle
governing the choice of $\eta_{i}$ and $\phi_{i}$. The parameter
$\eta_{i}$ effectively determines how strongly a country concerned
about its domestic production capacity may restrict imports relative
to exports. A principle that most countries would likely endorse is
that no country should set $\eta_{i}<1$, as this would require exports
to exceed imports and if symmetrically chosen would lead to a collapse
of trade. The range of $\eta_{i}\geq1$ values that countries find
mutually acceptable would depend on the perceived trade-off between
the benefits of more open trade and the strength of their strategic
concerns. In a world with high mutual trust among trading partners,
where strategic risks are viewed as limited, larger values of $\eta_{i}$
would likely be more feasible than in one characterized by low trust
and heightened strategic rivalry. 

Regarding $\phi_{i}$, country $A$ would gain from a modified agreement
that sets $\phi_{A}$ below $1/\eta_{A}$, i.e. granting a higher
share of the certificate income to the government and a lower share
to exporters. That is because countries prefer higher tariffs over
equivalent export subsidy payments, even though conditional efficiency
requires equal effective rates. A lower value of $\phi_{A}$ would,
in equilibrium, raise the effective tariff $\tilde{\tau}_{A}$ and
lower the effective export subsidy $\tilde{e}_{A}$, reducing $A$'s
direct costs at the expense of country $B$. However, countries might
agree on the principle that trade restrictions should not be accompanied
by additional fiscal gains for the country that derives strategic
benefits from those restrictions, whenever such gains come at the
cost of efficiency. This consideration provides an argument for conditionally
efficient agreements in which effective export subsidies and tariffs
are set to equal levels.

The same market outcome as under the TIC Agreement described above
could also be implemented through an agreement that relies solely
on export subsidies and tariffs:
\begin{cor}
\label{cor:opt_taue}\textbf{No-TIC Agreement} If country A uses no
TIC scheme and fixes $e_{A}=\tau_{A}=\alpha_{A}\bar{\chi}_{A}$ and
$s_{A}=0$ and country $B$ fixes $e_{B}=\tau_{B}=s_{B}=0$ the same
conditionally efficient market outcome as in Proposition \ref{prop:opt_tic}
is implemented.
\end{cor}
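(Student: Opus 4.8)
The plan is to show that the No-TIC instrument configuration induces exactly the same effective tariffs, effective export subsidies, and production subsidies as the TIC Agreement of Proposition \ref{prop:opt_tic}, and then to invoke the fact that the entire market outcome depends on policy only through these effective quantities. This reduces the corollary to a short bookkeeping comparison plus one structural observation.

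First I would record that, since neither country operates a TIC scheme in this scenario, the model convention fixes $\pi_A=\pi_B=0$. Substituting into the definitions \eqref{eq:tau_tilde} and \eqref{eq:e_tilde} of the effective rates, the certificate terms drop out, so $\tilde{\tau}_i=\tau_i$ and $\tilde{e}_i=e_i$ for both countries. Plugging in the prescribed instruments then yields $\tilde{\tau}_A=\tilde{e}_A=\alpha_A\bar{\chi}_A$, $\tilde{\tau}_B=\tilde{e}_B=0$, and $s_A=s_B=0$. Next I would compare these with the TIC Agreement: there $\tau_A=e_A=0$ but $\pi_A=\alpha_A\bar{\chi}_A$, and because $\phi_A\eta_A=1$ one obtains $\tilde{\tau}_A=\pi_A=\alpha_A\bar{\chi}_A$ and $\tilde{e}_A=\phi_A\eta_A\pi_A=\alpha_A\bar{\chi}_A$, while $\tilde{\tau}_B=\tilde{e}_B=0$ and $s_A=s_B=0$. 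Hence the triple $(\tilde{\tau}_i,\tilde{e}_i,s_i)$ is identical across the two scenarios for each country.

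It then remains to argue that identical effective rates yield an identical market outcome, which is the conceptual core though it is brief. By \eqref{eq:c_dom} and \eqref{eq:c_exp_tilde}, the marginal costs $c_i^{dom}(m)$ and $c_i^{exp}(m)$ that determine which producer serves each market depend on policy only through the effective rates of both countries, so the product-by-product allocation $q_i^{dom}(m),q_i^{exp}(m)$—and therefore the aggregates given by Proposition \ref{prop:market_eq}—coincide with those of the TIC Agreement. Likewise the per-product direct costs \eqref{eq:d_dom}--\eqref{eq:d_exp} depend only on these effective rates, so the totals $D_i$ coincide as well. Since no TIC market is present here, there is no additional equilibrium condition to verify: the lowest-marginal-cost allocation already constitutes a market equilibrium.

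Consequently the outcome reproduces $X_A=\bar{X}_A$ and, because $\tilde{\tau}_i=\tilde{e}_i$ holds for both countries, it is conditionally efficient—either directly, as in Proposition \ref{prop:opt_tic}, or equivalently through $Q_i^{dom}=Q_i^{exp}$ via Proposition \ref{prop:cond_eff}. The only point requiring care, and the main obstacle such as it is, is making explicit that the effective rates $(\tilde{\tau}_i,\tilde{e}_i,s_i)$ are a sufficient statistic for the complete outcome, so that two different instrument configurations producing the same effective values are observationally indistinguishable; once this is stated cleanly, the rest is immediate.
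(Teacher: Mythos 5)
Your proposal is correct and matches the paper's intended reasoning: the paper states this as a corollary without a separate proof precisely because, with $\pi_A=\pi_B=0$, the prescribed instruments reproduce the same triple $(\tilde{\tau}_i,\tilde{e}_i,s_i)$ as the TIC equilibrium of Proposition \ref{prop:opt_tic}, and every equilibrium object (marginal costs, cutoffs, quantities via Proposition \ref{prop:market_eq}, and direct costs) depends on policy only through these effective rates. Your additional observation that no TIC-market clearing condition needs to be verified in the No-TIC configuration is exactly the right point to make the reduction airtight.
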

A key difficulty with the No-TIC Agreement is that tariff and subsidy
rates depend on production technologies. Since the relevant parameters
are likely to be imperfectly known and unevenly shared, asymmetric
information is a concern - an obstacle well known to undermine efficient
negotiation outcomes.\footnote{See e.g. Myerson and Satterthwaite (1983) for a seminal theoretic
contribution. For a recent contribution in the context of trade agreements,
see Yamamoto (2024).} Another challenge is that the relative production capacities and
technology will likely change over time, which would require adjustment
of the negotiated tariff and subsidy rates. Tariffs and subsidies
are often politically difficult to unwind once in place, even when
their original justification no longer holds, see e.g. Freund and
Özden (2008) or Magee (2002). In contrast, adjustments of effective
tariffs and subsidies are automated in the TIC Agreement and vanish
once the trade deficits becomes sufficiently small.

Even if negotiated tariffs or a formal TIC scheme could be reasonably
well enforced, restricting subsidies seems far more challenging in
practice, since subsidies may appear in various forms, like below-market
loans from public banks.\footnote{For example, Hillman and Manak (2023) summarize: ``The {[}WTO{]}
rules did little to prevent widespread industrial subsidies use by
countries hoping to gain an edge in international trade.''} The following result shows that even if the TIC Agreement cannot
legally enforce zero subsidies, it is able to substantially limit
the incentives and consequences for hidden subsidies. 
\begin{prop}
\label{prop:TIC_deviation_s} Consider the TIC Agreement of Proposition~\ref{prop:opt_tic}.
Now assume both countries could deviate from the agreement by choosing
arbitrary non-negative subsidies $e_{A},e_{B},s_{A},s_{B}\ge0$ while
all other aspects of the agreement remain legally enforced. Then:

(i) Country $A$ has no profitable deviation to any $(e_{A},s_{A})$
with $e_{A}>0$ or $s_{A}>0$.

(ii) If $\eta_{A}=1$, country $B$ has no profitable deviation to
any $(e_{B},s_{B})$ with $e_{B}>0$ or $s_{B}>0$. If $\eta_{A}>1$,
there is a threshold 
\begin{equation}
\gamma_{B}^{TIC}(\eta_{A})=\delta\cdot\frac{\eta_{A}^{2}+\eta_{A}-1}{\eta_{A}^{2}-1}
\end{equation}
such that $B$ has an incentive to raise its export subsidy if and
only if $\gamma_{B}>\gamma_{B}^{TIC}(\eta_{A})$. The threshold satisfies
$\lim_{\eta_{A}\to1}\gamma_{B}^{TIC}(\eta_{A})=\infty$.

(iii) For any $(e_{B},s_{B})$, country $A$'s production satisfies
the global bound $X_{A}\ge1/\eta_{A}$ and its direct costs $D_{A}$
don't increase in the other country's subsidies $e_{B}$ or $s_{B}$.
\end{prop}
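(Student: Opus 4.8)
The plan is to substitute the agreement parameters into Proposition~\ref{prop:market_eq} and Lemma~\ref{lem:pi_binding-1} to obtain explicit reduced forms, and then read the three claims off the resulting comparative statics. Throughout I work in the regime where $A$'s TIC binds ($\pi_A>0$) while $B$'s stays slack ($\pi_B=0$); since $\eta_A>1$ forces $\eta_A\eta_B>1$, at most one TIC can bind, so this regime is self-consistent as long as $\pi_A>0$. With $\tau_A=\tau_B=0$ and $\phi_A\eta_A=1$ (so $1+\phi_A\eta_A^2=1+\eta_A$), write $a\equiv\alpha_B-\eta_A\alpha_A>0$ (positive because the agreement price is $a/(1+\eta_A)=\alpha_A\bar{\chi}_A>0$). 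Holding $A$ at the agreement and letting $B$ deviate to $(e_B,s_B)$, Lemma~\ref{lem:pi_binding-1} and Proposition~\ref{prop:market_eq} give
\[
\pi_A=\frac{a+e_B}{1+\eta_A}+s_B,\qquad Q_A^{exp}=Q_A^o+\frac{a+e_B}{\delta(1+\eta_A)},\qquad Q_A^{dom}=Q_A^o+\frac{a-\eta_A e_B}{\delta(1+\eta_A)}.
\]
The structural point I would emphasise is that $s_B$ cancels from both quantities: the binding certificate price absorbs any production-subsidy shift one-for-one, which is exactly the neutrality of Lemma~\ref{lem:no_s-1}. I also record $X_A+X_B=2$ (from $Q_A^{dom}+Q_B^{exp}=Q_B^{dom}+Q_A^{exp}=1$), which turns $B$'s objective into minimising $X_A$.

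For part (iii) I would first prove the global bound straight from TIC feasibility \eqref{eq:TIC_cond}: $\eta_AQ_A^{exp}\ge Q_A^{imp}=1-Q_A^{dom}$ together with $Q_A^{dom}\ge0$ and $\eta_A\ge1$ gives $X_A\ge Q_A^{dom}+(1-Q_A^{dom})/\eta_A\ge1/\eta_A$ in every equilibrium type (autarky gives $X_A=1$). For the cost claim I write $D_A=D_A^o+E_A$ with $E_A$ from Proposition~\ref{prop:E}. Since quantities are $s_B$-free while $\pi_A$ rises one-for-one, $\partial E_A/\partial s_B=Q_A^{exp}-Q_A^{imp}=(1-\eta_A)Q_A^{exp}\le0$. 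For $e_B$ I substitute the reduced forms into $E_A$ and differentiate; the algebra collapses to
\[
\delta(1+\eta_A)^2\,\frac{\partial E_A}{\partial e_B}=\alpha_A(1-2\eta_A)+(\alpha_B+e_B)(2-2\eta_A-\eta_A^2)+s_B(1-\eta_A^2),
\]
every term of which is nonpositive (and the first strictly negative) for $\eta_A\ge1$. Because $e_B,s_B\ge0$ only raise $\pi_A$ above its agreement value, the binding regime never breaks, so $D_A$ is non-increasing in both arguments throughout.

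For part (i) I set $e_B=s_B=0$ and let $A$ choose $(e_A,s_A)$; the symmetric reduced forms give $Q_A^{dom}=Q_A^o+\frac{a-\eta_Ae_A}{\delta(1+\eta_A)}$ and $Q_A^{exp}=Q_A^o+\frac{a+e_A}{\delta(1+\eta_A)}$, again $s_A$-free, so $s_A$ leaves both $X_A$ and $E_A$ unchanged and is payoff-neutral. Raising $e_A>0$ gives $X_A=\bar{X}_A+\frac{(1-\eta_A)e_A}{\delta(1+\eta_A)}<\bar{X}_A$, triggering the penalty $-\lambda_A(\bar{X}_A-X_A)$. As $u_A$ is concave in $e_A$ on the binding range (linear $X_A$, convex $E_A$) and $\partial u_A/\partial e_A|_0<0$ once $\lambda_A$ is large enough (guaranteed by $\lambda_A\gg\gamma_B$), no binding-range deviation helps; the only way to restore $X_A\ge\bar{X}_A$ is a large $e_A$ in the slack-TIC tail, whose direct export-subsidy cost strictly exceeds the agreement's, so it is dominated too.

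For part (ii), $X_B=2-X_A$ rises in $e_B$ at rate $(\eta_A-1)/(\delta(1+\eta_A))$ and is independent of $s_B$; since $\partial E_B/\partial s_B=Q_B^{exp}-Q_B^{imp}=(\eta_A-1)Q_A^{exp}>0$ while $X_B$ is $s_B$-free, $s_B$ is strictly dominated and any profitable deviation uses only $e_B$. Differentiating $u_B=\gamma_BX_B-D_B$ at the agreement yields $\partial u_B/\partial e_B|_0=\gamma_B\frac{\eta_A-1}{\delta(1+\eta_A)}-\frac{\eta_A^2+\eta_A-1}{(1+\eta_A)^2}$; equating to zero gives $\gamma_B^{TIC}(\eta_A)=\delta\frac{\eta_A^2+\eta_A-1}{\eta_A^2-1}$, and concavity of $u_B$ in $e_B$ (convex $E_B$) turns the sign of this derivative into the ``iff'' statement, with the benefit slope vanishing at $\eta_A=1$ and $\gamma_B^{TIC}\to\infty$ as $\eta_A^2-1\to0^+$. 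The heaviest step, and the main obstacle, is the global sign control in part (iii): executing the full substitution into $E_A$, checking the collapsed derivative stays negative over the whole nonnegative orthant of $(e_B,s_B)$, and confirming that neither $A$'s binding TIC nor $B$'s slack TIC flips regime along the way; the analogous slack-tail bookkeeping in part (i) and pinning down the ``$\lambda_A$ large'' threshold are the remaining delicate points.
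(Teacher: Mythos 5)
Your proposal is correct and takes essentially the same route as the paper's own proof: the same binding-TIC relation $\pi_A=\frac{\delta\,(Q_B^{o}-\eta_A Q_A^{o})+e_B}{1+\eta_A}$ (which you extend by the $+s_B$ term rather than invoking Lemma~\ref{lem:no_s-1} for $s_B$), the same marginal computations giving $\frac{dX_B}{de_B}=\frac{\eta_A-1}{(1+\eta_A)\delta}$ and $\frac{dD_B}{de_B}\big|_{0}=\frac{\eta_A^2+\eta_A-1}{(1+\eta_A)^2}$ and hence the identical threshold, and the same TIC-feasibility bound $X_A\ge 1/\eta_A$. If anything, your explicit global sign computation for $\partial E_A/\partial e_B$ in part (iii) is tighter than the paper's qualitative claim that the production-inefficiency term weakly decreases (which holds only while $Q_A^{dom}\ge Q_A^{o}$; beyond that point it is the dominance of the budget-transfer terms, captured by your collapsed derivative, that carries the result).
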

Given our previous insights, it might seem surprising at first sight
that country $B$ could have a strategic interest at all to set positive
subsidies given a binding TIC scheme in country $A$. We first explore
the intuition for export subsidies $e_{B}>0$. The key mechanism is
as follows: positive export subsidies by country $B$ will increase
the certificate price $\pi_{A}$ which leads to both higher effective
tariffs $\tilde{\tau}_{A}$ and higher effective export subsidies
$\tilde{e}_{A}$ in country $A$. As effective tariffs $\tilde{\tau}_{A}$
increase less than $e_{B}$, the net effect is that domestic production
$Q_{A}^{dom}$ in $A$ will fall, while its exports $Q_{A}^{exp}$
increase. Each unit of additional exports allows $\eta_{A}$ additional
units of imports by country $B$. For $\eta_{A}>1$ country $A$'s
imports thus increase more than its exports. As result the induced
shift from domestic production to exports will decrease country A's
total output $X_{A}$ and correspondingly increase country $B$'s
total production. Thus if country $B$ has a large strategic preference
$\gamma_{B}$ for higher production it may choose export subsidies
due to this effect. The effect becomes smaller the closer $\eta_{A}$
is to 1 and completely vanishes for $\eta_{A}=1$. 

Looking at strictly positive production subsidies $s_{B}>0$, the
expansive effect on $B$'s imports that is described above is more
limited or non-existent, since positive production subsidies $s_{B}>0$
also reduce country $A$'s exports.

Countries may subsidize strategic-sector production for reasons unrelated
to gaining an advantage in international trade, for example to support
underdeveloped regions or to accelerate the transition to environmentally
friendly production technologies. Part (iii) of Proposition \ref{prop:TIC_deviation_s}
implies that a country with a TIC in place can be less concerned about
such subsidies from its trading partner. The maximum negative impact
on total production in the strategic sector is strictly limited and
the subsidies from the trading partner even bring a financial gain.

For a short quantitative example, consider the production target $\bar{X}_{A}=0.8$
such that the TIC-agreement from Proposition \ref{prop:opt_tic} specifies
$\eta_{A}=1.5$. Even if country B deviates by setting arbitrarily
large export subsidies $e_{B}$, country $A$ still has a guaranteed
total production of $X_{A}\geq\frac{1}{\eta_{A}}=\frac{2}{3}$ which
is 83.3\% of the target $\bar{X}_{A}$. If even such a limited shortfall
from the target $\bar{X}_{A}$ is considered very problematic, a possible
counter-measure is to pick a somewhat lower value for $\eta_{A}$
than in an agreement with enforceable subsidies. In our example, with
$\bar{X}_{A}=0.8$ setting $\eta_{A}=1.25$ guarantees $X_{A}\geq\bar{X}_{A}$
even if country $B$ sets arbitrarily high export subsidies.

Under the No-TIC Agreement, subsidies implemented by country $B$
directly stimulate its output without triggering a countervailing
increase in certificate prices. Consequently, subsidies become attractive
already for substantially lower values of the strategic production
preference $\gamma_{B}$ and there is no guaranteed minimum level
of production in the strategic sector.
\begin{prop}
\label{prop:NO_TIC_threshold_compact} Consider the No-TIC Agreement
of Corollary~\ref{cor:opt_taue} and assume country $B$ can deviate
by setting positive subsidies $s_{B}>0$ or $e_{B}>0$ while tariffs
remain fixed at $\tau_{B}=0$. Then country $B$ has strict incentives
to choose positive subsidies if 
\begin{equation}
\gamma_{B}>\frac{1}{2}\,\delta\,\frac{\eta_{A}}{1+\eta_{A}}\equiv\gamma_{B}^{No\text{-}TIC}
\end{equation}
with $\eta_{A}=\frac{2-\bar{X}_{A}}{\bar{X}_{A}}$. The ratio of the
respective thresholds for the TIC Agreement and No-TIC Agreement is
given by 
\begin{equation}
\frac{\gamma_{B}^{TIC}}{\gamma_{B}^{No\text{-}TIC}}=2\,\frac{\eta_{A}^{2}+\eta_{A}-1}{\eta_{A}(\eta_{A}-1)}>2.
\end{equation}
\end{prop}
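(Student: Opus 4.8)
The plan is to evaluate country $B$'s marginal incentive to introduce a small subsidy at the No-TIC baseline and to identify which instrument becomes profitable first. First I would record the baseline of Corollary~\ref{cor:opt_taue}: country $A$ fixes $e_A=\tau_A=\alpha_A\bar{\chi}_A$ and $s_A=0$, while $B$ sets $e_B=\tau_B=s_B=0$; since there is no TIC, effective and direct rates coincide, so $\tilde{\tau}_B=\tilde{e}_B=0$ and $\tilde{e}_A=e_A$. The outcome is conditionally efficient, hence $Q_B^{dom}=Q_B^{exp}$, and combining $X_A+X_B=2$ with $X_A=\bar{X}_A$ gives the interior baseline value $Q_B^{exp}=\tfrac12 X_B=\tfrac{2-\bar{X}_A}{2}$. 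Because the baseline is interior, Proposition~\ref{prop:market_eq} applies in a neighbourhood and I can treat $Q_B^{dom},Q_B^{exp},Q_A^{exp}$ as the linear expressions $\tilde{Q}$ in $(e_B,s_B)$.

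Next I would write $u_B=\gamma_B X_B-D_B$ with $D_B=D_B^o+E_B$ and differentiate at the baseline. From Proposition~\ref{prop:market_eq}, $\partial X_B/\partial e_B=1/\delta$ while $\partial X_B/\partial s_B=2/\delta$: a production subsidy raises $B$'s output twice as fast because it stimulates both $Q_B^{dom}$ and $Q_B^{exp}$, whereas $e_B$ moves only $Q_B^{exp}$. For the cost side I would use the decomposition of Proposition~\ref{prop:E},
\[
E_B=\tfrac{\delta}{2}\big(Q_B^{dom}-Q_B^o\big)^2+(s_B+e_B)\,Q_B^{exp}-e_A\,Q_B^{imp},
\]
and evaluate its derivatives at $e_B=s_B=0$. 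The $e_B$-derivative is simply $Q_B^{exp}$: the quadratic term is independent of $e_B$, and $Q_B^{imp}=Q_A^{exp}$ does not react to $e_B$. The decisive computation is the $s_B$-derivative: the quadratic term contributes $Q_B^{dom}-Q_B^o=-e_A/\delta$, the transfer term contributes $Q_B^{exp}$, and $-e_A\,\partial Q_B^{imp}/\partial s_B=+e_A/\delta$ since $Q_A^{exp}$ falls in $s_B$ at rate $1/\delta$. The first and third terms cancel exactly, leaving again $Q_B^{exp}$. Thus $\partial u_B/\partial e_B=\gamma_B/\delta-Q_B^{exp}$ and $\partial u_B/\partial s_B=2\gamma_B/\delta-Q_B^{exp}$, so the production subsidy has the strictly larger marginal payoff and becomes attractive first; $B$ has a strictly profitable positive-subsidy deviation precisely when $\partial u_B/\partial s_B>0$, i.e. $\gamma_B>\tfrac{\delta}{2}Q_B^{exp}=\tfrac{\delta}{4}(2-\bar{X}_A)$.

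To finish, I would substitute $\eta_A=(2-\bar{X}_A)/\bar{X}_A$, which is equivalent to $\tfrac{\eta_A}{1+\eta_A}=\tfrac{2-\bar{X}_A}{2}$, rewriting the bound as $\gamma_B^{No\text{-}TIC}=\tfrac12\delta\,\tfrac{\eta_A}{1+\eta_A}$ and establishing the first claim. For the ratio I would divide $\gamma_B^{TIC}=\delta\,(\eta_A^2+\eta_A-1)/(\eta_A^2-1)$ from Proposition~\ref{prop:TIC_deviation_s} by $\gamma_B^{No\text{-}TIC}$; factoring $\eta_A^2-1=(\eta_A-1)(\eta_A+1)$ and cancelling yields $2(\eta_A^2+\eta_A-1)/(\eta_A(\eta_A-1))$. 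Since $\bar{X}_A<1$ forces $\eta_A>1$, the denominator $\eta_A(\eta_A-1)$ is positive and $\eta_A^2+\eta_A-1>\eta_A^2-\eta_A$, so the ratio exceeds $2$. The main obstacle is the $s_B$-derivative: one must notice that the marginal production-inefficiency cost $-e_A/\delta$ exactly offsets the marginal saving $+e_A/\delta$ on subsidised imports at the baseline, so that $\partial E_B/\partial s_B=\partial E_B/\partial e_B=Q_B^{exp}$ even though the output effect of $s_B$ is twice as large. This cancellation is what pins the binding threshold to the production subsidy rather than the export subsidy; everything else is routine substitution.
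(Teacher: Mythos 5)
Your proposal is correct and follows essentially the same route as the paper: evaluate the marginal payoff of each instrument at the No-TIC baseline, find $\partial u_B/\partial e_B=\gamma_B/\delta-\eta_A/(1+\eta_A)$ and $\partial u_B/\partial s_B=2\gamma_B/\delta-\eta_A/(1+\eta_A)$, take the smaller threshold (the production subsidy), and then do the same ratio algebra against $\gamma_B^{TIC}$. The only cosmetic difference is how the $s_B$-derivative of direct cost is obtained — you differentiate the $E_B$ decomposition of Proposition~\ref{prop:E} directly and exhibit the cancellation between the production-inefficiency term $-e_A/\delta$ and the import-savings term $+e_A/\delta$, while the paper invokes Lemma~\ref{lem:no_s-1} to treat $s_B$ as a joint increase of $(\tau_B,e_B)$ with zero first-order tariff cost — but these are the same computation in different bookkeeping.
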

For example, for $\eta_{A}=1.5$ the threshold $\gamma_{B}^{TIC}(\eta_{A})$
for $\gamma_{B}$ above which country $B$ deviates from the TIC Agreement
with positive subsidies is 7 times higher than the corresponding threshold
$\gamma_{B}^{No-TIC}$ for the No-TIC Agreement.

Beyond subsidies, various forms of non-tariff trade barriers, such
as certain regulatory measures, are often difficult to verify. The
following result shows that the TIC Agreement effectively eliminates
incentives for such non-tariff barriers, whereas these incentives
may persist under the No-TIC Agreement.
\begin{prop}
\label{prop:ntb}Assume each country can decide a level of non-tariff
trade-barriers $\beta_{i}\geq0$ that increases the marginal costs
of imports from country $j$ by $\beta_{i}$ but unlike a tariff grants
no tariff income for country $i$. Tariffs and subsidies are enforceable
as specified in the agreement but non-tariff barriers can be freely
set. Then with the TIC Agreement it is strictly optimal for both countries
to have zero non-tariff trade barriers. In contrast, under the No-TIC
Agreement country $B$ has incentives to erect non-tariff trade barriers
$\beta_{B}>0$ if 
\begin{equation}
\gamma_{B}>\frac{\delta}{1+\eta_{A}}
\end{equation}
with $\eta_{A}=\frac{2-\bar{X}_{A}}{\bar{X}_{A}}$.
\end{prop}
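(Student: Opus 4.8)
The plan is to first pin down how a non-tariff barrier $\beta_i$ enters the model and then treat each agreement and each deviating country in turn. The crucial structural observation is that $\beta_i$ raises the marginal cost of imports into country $i$ by exactly $\beta_i$, so in the quantity characterization of Proposition~\ref{prop:market_eq} it acts identically to an addition to the effective tariff: writing $\hat\tau_i = \tilde\tau_i + \beta_i$ for the total import wedge, all production quantities are given by Proposition~\ref{prop:market_eq} with $\hat\tau_i$ in place of $\tilde\tau_i$. In the direct-cost accounting, however, $\beta_i$ differs from a genuine tariff because it yields no revenue: the import term becomes $d_i^{imp}(m) = w_j(m) - s_j - \tilde e_j + \beta_i$, so relative to a revenue-raising tariff of equal size the barrier adds a pure deadweight loss of $\beta_i Q_i^{imp}$. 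Formally I would establish the decomposition $D_i = D_i^{\text{phantom}} + \beta_i Q_i^{imp}$, where $D_i^{\text{phantom}}$ is the direct cost in an auxiliary economy with $\beta_i$ replaced by an equal tariff (identical quantities and identical certificate price, hence $D_i^{\text{phantom}}$ computable from Proposition~\ref{prop:E}).

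For the TIC Agreement I would use the binding certificate condition $\eta_A Q_A^{exp} = Q_A^{imp}$ to solve for the endogenous price $\pi_A$ as a function of the barrier. When country $A$ itself deviates with $\beta_A$, this gives $\pi_A = [\delta(1-(1+\eta_A)Q_A^o) - \beta_A]/(1+\eta_A)$, so $\pi_A$ falls while the total wedge $\hat\tau_A$ rises; substituting into $X_A = Q_A^{dom}+Q_A^{exp}$ shows $X_A$ is non-decreasing in $\beta_A$ (strictly increasing for $\eta_A>1$, constant for $\eta_A=1$). Since at $\beta_A=0$ we already have $X_A=\bar X_A$, the strategic term $\lambda_A\max(\bar X_A-X_A,0)$ stays identically zero, so $A$ merely minimizes $D_A$; computing $dD_A/d\beta_A$ from the decomposition and the $E_A$ formula yields a strictly positive value (the production-inefficiency and burned-revenue terms dominate the falling export-subsidy transfer), so $\beta_A=0$ is strictly optimal. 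When country $B$ deviates with $\beta_B$, the binding condition instead gives $\pi_A$ increasing in $\beta_B$, and substitution shows $X_A$ rises --- equivalently $X_B=2-X_A$ falls --- whenever $\eta_A>1$ (and is unchanged for $\eta_A=1$). Thus $B$ gains nothing strategically, while $dD_B/d\beta_B = Q_A^{exp}/(1+\eta_A)>0$ at $\beta_B=0$, so $du_B/d\beta_B<0$ for every $\gamma_B>0$ and $\beta_B=0$ is strictly optimal.

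For the No-TIC Agreement, $A$'s instruments are fixed at $\tau_A=e_A=\alpha_A\bar\chi_A$ with no certificate market, so $B$'s barrier is no longer offset by an adjusting price. Here $\beta_B$ directly protects $B$'s home market: the quantity formulas give $X_B = 2Q_B^o + (\beta_B-2\alpha_A\bar\chi_A)/\delta$, hence a constant marginal strategic benefit $\gamma_B\,dX_B/d\beta_B = \gamma_B/\delta$. Computing the marginal cost at $\beta_B=0$ from the same decomposition gives $dD_B/d\beta_B = Q_A^o + \alpha_A\bar\chi_A/\delta$, and using $Q_A^o=\alpha_A/\delta$ together with $\alpha_A\bar\chi_A = \tfrac{\delta}{2}\bar X_A-\alpha_A$ this reduces to $\tfrac12\bar X_A$. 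Country $B$ therefore erects a barrier iff $\gamma_B/\delta > \tfrac12\bar X_A$, i.e. $\gamma_B > \tfrac{\delta}{2}\bar X_A$, which equals $\delta/(1+\eta_A)$ after substituting $1+\eta_A = 2/\bar X_A$, giving the stated threshold.

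The main obstacle is the second paragraph: because the certificate price is endogenous and responds to either country's barrier, the total cost derivative is a sum of competing terms --- a rising production-inefficiency term, a changing effective-export-subsidy transfer $\pi_A Q_A^{exp}$, and the burned-revenue term --- and showing their sum is strictly positive (for both the $A$- and $B$-deviations, and uniformly in $\eta_A\ge1$ and $\gamma_B>0$) is where the real work lies. I would also need to confirm the conclusions globally rather than only at $\beta_i=0$, using convexity of $D_i$ in $\beta_i$ and checking that quantities remain interior (no corner where some $Q$ hits $0$ or $1$), so that the local first-order comparison extends to all admissible barrier levels.
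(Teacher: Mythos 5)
Your proposal is correct and follows essentially the same route as the paper's proof: treat $\beta_i$ as a revenue-less addition to the effective tariff (quantities from Proposition~\ref{prop:market_eq} with the wedge $\tilde{\tau}_i+\beta_i$, plus a burned term $\beta_i Q_i^{imp}$ in $D_i$), pin down $\pi_A$ from the binding certificate condition to get $\partial\pi_A/\partial\beta_B=\eta_A/(1+\eta_A)$ and $\partial\pi_A/\partial\beta_A=-1/(1+\eta_A)$, and compare marginal strategic gain against marginal direct cost at $\beta=0$, with your No-TIC threshold $\gamma_B>\tfrac{\delta}{2}\bar{X}_A=\delta/(1+\eta_A)$ matching the paper exactly. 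The one step you assert rather than compute --- $dD_A/d\beta_A>0$ under the TIC --- does work out exactly as you describe (it equals $\tfrac{\pi_A(\eta_A-1)}{(1+\eta_A)\delta}+\tfrac{\eta_A^2+\eta_A-1}{1+\eta_A}\,Q_A^{exp}>0$), and your explicit $dD_B/d\beta_B=Q_A^{exp}/(1+\eta_A)$ is in fact the correct \emph{total} derivative including the induced $\pi_A$ adjustment, where the paper's proof reports only the direct term $Q_B^{imp}$; since both are positive, the conclusion is unaffected.
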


\subsection{Market Power}

Our analysis above indicates that the TIC Agreement performs well
both in terms of incentives and efficiency. A key drawback, however,
is that TIC schemes can induce socially costly adverse behavior when
the strategic sector is sufficiently small or concentrated for domestic
producers to exercise market power in the certificate market.

To illustrate, first consider an extreme case. Suppose there is a
single monopolistic producer in country A that covers the entire strategic
sector. By choosing its export quantity, the monopolist can arbitrarily
limit the number of certificates and import volume into its domestic
market. For instance, by refraining from exporting altogether, the
monopolist could completely exclude foreign competition.

For a model with oligopolistic market power in the certificate market
assume that country $B$ keeps perfectly competitive producers as
before while there are $N$ large producers in country $A$ indexed
by $n=1,...,N$. For every market $m\in[0,1]$ there is exactly one
producer $n_{m}$ from country A that can serve it and for every segment
$[m_{1},m_{2}]$ of the market line the mass of markets that can be
served by each producer $n=1,...,N$ shall be equal to $(m_{2}-m_{1})/N$.
Thus the producers from A are evenly distributed across the markets.
If in market $m$ in country $i$ the large producer $n_{m}$ and
the competitive producers from country $B$ set the same price, the
market shall be completely served by the large producer. Marginal
costs remain as specified in Section 2.

For each market $m$ in country $i$ we can restrict attention to
one of two strategies for the large producer $n_{m}$: either it captures
the market at the marginal cost of producers from country $B$ or
it stays out of the market. Given that the cost advantage for country
$A$ producers declines with $m$, an optimal strategy for every large
producer $n$ can be described by two variables $m_{n}^{dom}$ and
$m_{n}^{exp}$ meaning that it will capture all domestic markets in
country A in the interval $[0,m_{n}^{dom}]$ and all export markets
in country $B$ in the interval $[0,m_{n}^{exp}]$. We can also describe
these strategies by the corresponding production quantities
\begin{equation}
q_{n}^{dom}=\frac{1}{N}m_{n}^{dom},\qquad q_{n}^{exp}=\frac{1}{N}m_{n}^{exp}
\end{equation}

Total production quantities for country $A$ are correspondingly 
\begin{equation}
Q_{A}^{dom}=\sum_{n=1}^{N}q_{n}^{dom},\qquad Q_{A}^{exp}=\sum_{n=1}^{N}q_{n}^{exp}
\end{equation}
If producer $n$ captures a domestic market $m$, it earns in that
market a profit of 
\begin{equation}
g_{n}^{dom}(m)=c_{B}^{exp}(m)-c_{A}^{dom}(m)\label{eq:g_dom}
\end{equation}

and if it captures export market $m$ it earns in that market a profit
of
\begin{equation}
g_{n}^{exp}(m)=c_{B}^{dom}(m)-c_{A}^{exp}(m)\label{eq:g_exp}
\end{equation}

We have set up this model such that market power in product markets
has no effect on countries' total production levels $Q_{i}^{dom}$
and $Q_{i}^{exp}$ as long as there are no certificate prices that
could be strategically manipulated by large producers.
\begin{cor}
If no country implements a TIC scheme, then independent of the number
of large producers $N$, countries A and B will serve exactly the
same markets as if all producers were perfectly competitive. The only
difference to the market outcome under perfect competition is that
in the markets served by a large producer from country $A$ prices
are determined by the marginal cost of the competitive producers from
country $B$ instead of the marginal cost from country A producers.
\end{cor}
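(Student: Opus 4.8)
The plan is to exploit the fact that, absent any TIC scheme, the product markets are completely decoupled, so that each large producer's problem collapses into an independent binary decision in every single market. First I would set $\pi_A=\pi_B=0$, so that by \eqref{eq:tau_tilde} and \eqref{eq:e_tilde} effective and direct rates coincide and, crucially, there is no certificate-supply constraint \eqref{eq:TIC_cond} linking a producer's export decision in one market to its payoff in any other market. Since a large producer's total profit is the integral of its per-market profits and these are unconstrained across markets, each producer $n$ maximizes pointwise: in every market it either captures the market at the competitive fringe's marginal cost or stays out, earning $g_n^{dom}(m)$ or $g_n^{exp}(m)$ as defined in \eqref{eq:g_dom} and \eqref{eq:g_exp} when it captures, and zero otherwise. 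Capturing a market with $g_n>0$ strictly raises total profit with no spillover, and staying out of a market with $g_n<0$ is strictly better, so the optimal capture set is determined market by market.

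Next I would compare these capture sets with the perfectly competitive serve sets. By \eqref{eq:g_dom} we have $g_n^{dom}(m)=c_B^{exp}(m)-c_A^{dom}(m)$, so capturing a domestic market is profitable exactly when $c_A^{dom}(m)\le c_B^{exp}(m)$; by \eqref{eq:g_exp}, $g_n^{exp}(m)=c_B^{dom}(m)-c_A^{exp}(m)$, so capturing an export market is profitable exactly when $c_A^{exp}(m)<c_B^{dom}(m)$. These are precisely the conditions $q_A^{dom}(m)=1$ and $q_A^{exp}(m)=1$ in \eqref{eq:q_dom} and \eqref{eq:q_exp}. The stated tie-breaking rule, under which the large producer serves whenever prices are equal, matches the weak inequality in \eqref{eq:q_dom}; the only discrepancy is the export margin at $c_A^{exp}(m)=c_B^{dom}(m)$, which is confined to the measure-zero set where marginal costs coincide and therefore leaves the integrals unaffected. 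Hence the set of markets served by country A agrees, up to a null set, with the set served under perfect competition, and consequently so do $Q_A^{dom}$, $Q_A^{exp}$ and the complementary markets served by B. Because the profit functions $g_n^{dom}$ and $g_n^{exp}$ depend on $m$ but not on $n$, the thresholds $m_n^{dom}$ and $m_n^{exp}$ are common across producers, so that summing over $n$ reproduces exactly the aggregate competitive quantities.

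The pricing statement then follows directly from the capture strategy: in a market it wins, a large producer from A charges the B-fringe's marginal cost, namely $c_B^{exp}(m)$ for a domestic market and $c_B^{dom}(m)$ for an export market, whereas under perfect competition the price equals A's own marginal cost $c_A^{dom}(m)$ or $c_A^{exp}(m)$ by \eqref{eq:c_dom} and \eqref{eq:c_exp_tilde}. This is the sole difference between the two outcomes, which is what the corollary asserts.

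I expect the main obstacle to be making the decoupling argument watertight, that is, establishing that without a certificate market a large producer never has a strategic reason either to withhold supply from a profitable market or to serve an unprofitable one. Once it is clear that the certificate-supply linkage is the \emph{only} channel through which one market's decision could influence another producer's payoff elsewhere, and that this channel is switched off when $\pi_A=\pi_B=0$, pointwise optimality is immediate. The remaining care is purely bookkeeping: handling the measure-zero tie set and verifying that the common thresholds aggregate to the competitive quantities.
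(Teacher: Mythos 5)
Your proposal is correct and follows exactly the argument the paper intends: the corollary is stated without a separate formal proof because, as the surrounding text explains, absent a certificate price there is no channel linking a large producer's decisions across markets, so each producer optimizes pointwise, capturing via limit pricing at the fringe's marginal cost precisely the markets characterized by \eqref{eq:q_dom} and \eqref{eq:q_exp}. Your formalization of this decoupling, including the measure-zero tie set and the aggregation across the common cutoffs, is a faithful and complete version of that same reasoning.
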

Another direct implication is
\begin{cor}
The No-TIC Agreement from Corollary \ref{cor:opt_taue} remains conditionally
efficient in our model with $N$ large producers in country A.
\end{cor}
We now analyse the market outcome under the TIC Agreement from Proposition
\ref{prop:opt_tic}.

The timing of the model shall be as follows. First, every strategic
producer $n$ simultaneously chooses its export quantity $q_{n}^{exp}\geq0$.
In this step, each large producer takes into account how its export
quantity will affect the certificate price $\pi_{A}$. In the second
step all other quantities are determined such that every firm maximizes
its profits taking equilibrium certificate prices $\pi_{A}$ as given.
\begin{prop}
\label{prop:oligopoly}Suppose the TIC Agreement of Proposition~\ref{prop:opt_tic}
is in place and country A has $N$ large producers. In the unique
and symmetric Nash equilibrium, we have
\begin{equation}
Q_{A}^{\exp}\;=\;\begin{cases}
{\displaystyle \frac{N-\eta_{A}}{\,N(\eta_{A}+1)-\eta_{A}(\eta_{A}-1)\,}\,,} & \text{if }N>\eta_{A},\\[8pt]
0\,, & \text{if }N\le\eta_{A},
\end{cases}\qquad\pi_{A}\;=\;\delta\bigl(1-\eta_{A}Q_{A}^{\exp}\bigr)-\alpha_{A}.
\end{equation}
In all cases $Q_{A}^{\exp}<Q_{A}^{\text{dom}}$ and the production
allocation is not conditionally efficient. Moreover, for $N>\eta_{A}$,
$Q_{A}^{\exp}$ is strictly increasing in $N$ and $\pi_{A}$ is strictly
decreasing in $N$. As $N\to\infty$ the production quantities and
certificate price converge to the perfectly competitive outcome stated
in Proposition~\ref{prop:opt_tic}:
\begin{equation}
Q_{A}^{\exp}\to\frac{1}{1+\eta_{A}},\qquad Q_{A}^{\text{dom}}\to\frac{1}{1+\eta_{A}},\qquad\pi_{A}\to\frac{\delta}{1+\eta_{A}}-\alpha_{A}.
\end{equation}
\end{prop}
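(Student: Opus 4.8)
The plan is to reduce the $N$-player game to a one-dimensional problem in aggregate exports $E \equiv Q_A^{exp} = \sum_n q_n^{exp}$ by first expressing the certificate price as a function $\pi_A(E)$. Under the TIC Agreement of Proposition~\ref{prop:opt_tic} all direct tariffs and subsidies vanish and country $B$ stays competitive, so $\tilde\tau_A = \tilde e_A = \pi_A$ and $\tilde\tau_B = \tilde e_B = 0$. Taking $\pi_A$ as given in the second stage, the producers capture every domestic market with positive margin, which by the linear cost structure yields the threshold $m^{dom} = (\alpha_A + \pi_A)/\delta$ and hence $Q_A^{dom} = (\alpha_A+\pi_A)/\delta$, exactly as in Proposition~\ref{prop:market_eq} with $\tilde\tau_A=\pi_A$. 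Combining this with clearing of the certificate market, $\eta_A E = Q_A^{imp} = 1 - Q_A^{dom}$, eliminates $Q_A^{dom}$ and gives $\alpha_A + \pi_A = \delta(1-\eta_A E)$, i.e.\ the stated $\pi_A = \delta(1 - \eta_A Q_A^{exp}) - \alpha_A$. I would record the shorthand $a \equiv \alpha_A + \pi_A = \delta(1-\eta_A E)$, so that $\partial a / \partial q_n^{exp} = -\delta\eta_A$.

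Next I would write producer $n$'s profit. Since $g_n^{dom}(m) = g_n^{exp}(m) = a - \delta m$ (the equality reflects $\tilde\tau_A = \tilde e_A$), integrating over the captured slot of mass $1/N$ gives a domestic profit $a^2/(2\delta N)$, where the envelope theorem kills the effect of $\pi_A$ on the optimally chosen domestic threshold, and an export profit $a\,q_n^{exp} - \tfrac{\delta N}{2}(q_n^{exp})^2$, so
\[
\Pi_n = \frac{a^2}{2\delta N} + a\,q_n^{exp} - \frac{\delta N}{2}\,(q_n^{exp})^2, \qquad a = \delta(1-\eta_A E).
\]
The first-order condition, internalizing $\partial a/\partial q_n^{exp} = -\delta\eta_A$ through both the domestic and the export terms, reads $a(1 - \eta_A/N) = \delta(\eta_A + N)\,q_n^{exp}$. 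Imposing symmetry $E = N q^{exp}$ turns this into a single linear equation in $Q_A^{exp}$ whose solution is the claimed $Q_A^{exp} = (N-\eta_A)/[N(\eta_A+1) - \eta_A(\eta_A-1)]$; substituting back into $\pi_A(E)$ gives the price. The marginal export market then has strictly positive margin $a\cdot 2\eta_A/(\eta_A+N) > 0$, confirming that producers withhold exports, and $Q_A^{dom} = 1 - \eta_A Q_A^{exp}$.

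For the remaining claims I would proceed as follows. The sign statement $Q_A^{exp} < Q_A^{dom}$ follows from the elementary inequality $Q_A^{exp} < 1/(1+\eta_A)$ (a one-line cross-multiplication in which the difference of the two sides equals $-2\eta_A$), whence $Q_A^{dom} = 1 - \eta_A Q_A^{exp} > 1/(1+\eta_A) > Q_A^{exp}$; failure of conditional efficiency is then immediate from Proposition~\ref{prop:cond_eff}, since $\bar E = \tfrac\delta4(Q_A^{dom}-Q_A^{exp})^2 > 0$. The comparative statics are clean quotient-rule computations: the numerator of $dQ_A^{exp}/dN$ collapses to $2\eta_A > 0$, so $Q_A^{exp}$ rises and, because $\pi_A$ is affine-decreasing in $Q_A^{exp}$, $\pi_A$ falls; the $N\to\infty$ limits follow by reading off leading terms. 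For the corner case $N \le \eta_A$, the FOC solution is non-positive, and the marginal profit at $q^{exp}=0$ equals $\delta(1-\eta_A/N) \le 0$, so $q^{exp}=0$ is the best response and $Q_A^{exp}=0$.

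The main obstacle I expect is uniqueness together with the global second-order and regime bookkeeping, rather than any single computation. Concretely, $\Pi_n$ mixes a convex-in-$q_n^{exp}$ domestic term with a concave export term, so I would verify strict concavity in the relevant range, $\partial^2\Pi_n/\partial (q_n^{exp})^2 = -\tfrac\delta N(N^2 + 2\eta_A N - \eta_A^2) < 0$ for $N > \eta_A$, and then invoke that each best response has slope in the aggregate of absolute value $\eta_A(N-\eta_A)/(N^2+2\eta_A N - \eta_A^2) < 1$ (the denominator exceeds the numerator by $N(N+\eta_A)>0$), which makes the aggregate best response a contraction and forces the unique equilibrium to be symmetric. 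The subtler part is staying in the correct regime throughout: I must check that the binding-TIC reduced form $\pi_A(E)$ with $\pi_A \ge 0$ and the interior domestic threshold $m^{dom} = a/\delta \le 1$ hold at the candidate (both amount to $0 \le \eta_A Q_A^{exp} \le 1 - Q_A^{o}$), and rule out deviations into a non-binding-TIC regime or to the feasibility boundaries $q_n^{exp}\in\{0,1/N\}$, so that the local optimum is global. The corner $N \le \eta_A$ needs the same care, since there $\Pi_n$ can be locally convex and one must lean on the feasibility constraints to confirm $q^{exp}=0$ is the global maximizer.
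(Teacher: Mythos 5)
Your proposal follows essentially the same route as the paper's proof: the binding-TIC reduction $\pi_A=\delta(1-\eta_A Q_A^{\exp})-\alpha_A$, the producer profit with evenly split domestic sales (your $\tfrac{a^2}{2\delta N}+a\,q_n^{exp}-\tfrac{\delta N}{2}(q_n^{exp})^2$ is exactly the paper's $G_n$ after substituting $q_n^{dom}=a/(\delta N)$), the identical FOC $a(1-\eta_A/N)=\delta(\eta_A+N)q_n^{exp}$, the same symmetric solution and corner logic at $N\le\eta_A$, and the same inefficiency, comparative-statics and limit arguments. Your additional second-order, contraction-based uniqueness and regime-verification checks are sound elaborations of steps the paper treats more briefly (it deduces uniqueness from the strictly decreasing marginal payoff), not a different approach.
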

For example, in the special case $\eta_{A}=1$ the export quantities
satisfy $Q_{A}^{exp}=\frac{N-1}{2N}$. Then for $N=2$ the resulting
exports in country $A$ are $Q_{A}^{exp}=\frac{1}{4}$, which are
50\% below the competitive export level of $\frac{1}{2}$. To halve
the gap to the competitive export level, the number of firms must
be doubled: for $N=4$ exports are 25\% below the competitive level,
for $N=8$, the gap is 12.5\%, and so on.

Of course, our stylized oligopoly model is not meant to deliver a
reliable quantitative assessment of the distortions created by market
power in the certificate market. Its value is instead illustrative:
it can help build a first-pass intuition for the effects at work.

\section{Concluding Remarks}

Our results show that, for competitive sectors, TIC-based trade agreements
can align strategic minimum-production objectives with conditionally
efficient allocations of domestic production and exports among trading
partners. At the same time, they can effectively dampen countries'
incentives to use opaque subsidies or non-tariff barriers. Important
implementation questions remain, however. Certificate duration (single-period
versus bankable), inventory limits, and auction design will shape
manipulation risks and price volatility. Design lessons from pollution-permit
markets are relevant here; see, for example, Liski and Montero (2011)
on the role of permit storability and Hahn (1984) for an early analysis
of market power in markets for pollution certificates.

Alternatively, one could replicate the automatic adjustment with explicitly
state-contingent tariffs and subsidies. For instance, specify formula-based
adjustments that respond to observed shortfalls from a declared export-to-import
ratio, combined with pre-announced review points and high transparency
to support planning and mutual monitoring.

Even if tradeable import certificates were replaced by such automatically
adjusted tariffs and export subsidies, adverse incentives for large
domestic producers would persist. By withholding exports, domestic
firms can raise the effective tariff protection at home and the export
subsidy rate. Unless such strategic behavior by firms can be effectively
limited, fully rule-based tariff and subsidy adjustments or TIC schemes
carry substantial risks.

Finally, any real-world implementation requires a multi-partner lens.
One would likely condition effective tariff and subsidy rates by partner-specific
supply risk and exposure paired with explicit diversification commitments.
Well-designed schemes that explicitly allow for friendshoring, or
treat sufficient diversification as a substitute to domestic production,
may also help mitigate risks arising from domestic market power. We
consider a detailed analysis of such a multilateral mechanism an important
avenue for future research.

\section*{References}

\begingroup
\setlength{\parindent}{0pt}
\setlength{\parskip}{0.8\baselineskip}
\hangindent=1.5em
\hangafter=1
\raggedright

Aghion, P., Cai, J., Dewatripont, M., Du, L., Harrison, A., \& Legros,
P. (2015). Industrial policy and competition. \textit{American Economic
Journal: Macroeconomics, 7}(4), 1--32. https://doi.org/10.1257/mac.20120103

Amiti, M., Redding, S. J., \& Weinstein, D. E. (2019). The impact
of the 2018 tariffs on prices and welfare. \textit{Journal of Economic
Perspectives, 33}(4), 187--210. https://doi.org/10.1257/jep.33.4.187

Bernard, A. B., Redding, S. J., \& Schott, P. K. (2010). Multiple-product
firms and product switching. \textit{American Economic Review, 100}(1),
70--97. https://doi.org/10.1257/aer.100.1.70

Bhagwati, J., \& Srinivasan, T. N. (1969). Optimal intervention to
achieve non-economic objectives. \textit{The Review of Economic Studies,
36}(1), 27--38. https://doi.org/10.2307/2296340

Buffett, W. E. (2003, November 10). America's growing trade deficit
is selling the nation out from under us. Here's a way to fix the problem
-- and we need to do it now. \textit{Fortune}.

Congressional Budget Office (CBO). (2025). \textit{Budgetary and economic
effects of increases in tariffs implemented between January 6 and
May 13, 2025}. Washington, DC: Congressional Budget Office. Retrieved
November 13, 2025, from https://www.cbo.gov/publication/61389

Dawar, K. (2020). Regulating under the radar: EU official export credit
support. \textit{VoxEU/CEPR.} https://cepr.org/voxeu/columns/regulating-under-radar-eu-official-export-credit-support-0

Dornbusch, R., Fischer, S., \& Samuelson, P. A. (1977). Comparative
advantage, trade, and payments in a Ricardian model with a continuum
of goods. \textit{American Economic Review, 67}(5), 823--839.

European Commission (2021): The European economic and financial system:
fostering openness, strength and resilience, COM(2021) 32 final, 19
January 2021.

European Commission. (2023). \textit{A Green Deal Industrial Plan
for the Net-Zero Age} (COM(2023) 62). https://eur-lex.europa.eu/legal-content/EN/TXT/?uri=celex:52023DC0062

Freund, C., \& Özden, C. (2008). Trade policy and loss aversion. \textit{American
Economic Review, 98}(4), 1675--1691. https://doi.org/10.1257/aer.98.4.1675

Goldlücke, S., \& Kranz, S. (2023). Reconciling relational contracting
and hold-up: A model of repeated negotiations. \textit{Journal of
the European Economic Association, 21}(3), 864--906. https://doi.org/10.1093/jeea/jvac036

Hahn, R. W. (1984). Market power and transferable property rights.
\textit{The Quarterly Journal of Economics, 99}(4), 753--765. https://doi.org/10.2307/1883124

Handley, K., \& Limao, N. (2017). Policy uncertainty, trade, and welfare:
Theory and evidence for China and the United States. \textit{American
Economic Review, 107}(9), 2731--2783. https://doi.org/10.1257/aer.20141419

Hillman, J., \& Manak, I. (2023). \textit{Rethinking international
rules on subsidies} (Council Special Report No. 96). Council on Foreign
Relations.

Juhász, R., Lane, N., \& Rodrik, D. (2024). The new economics of industrial
policy. \textit{Annual Review of Economics}, 16(1), 213--242. https://doi.org/10.1146/annurev-economics-081023-024638

Krueger, A. O. (1974). The political economy of the rent-seeking society.
\textit{American Economic Review, 64}(3), 291--303.

Liski, M., \& Montero, J. P. (2011). Market power in an exhaustible
resource market: The case of storable pollution permits. \textit{The
Economic Journal, 121}(551), 116--144. https://doi.org/10.1111/j.1468-0297.2010.02366.x

Magee, C. (2002). Declining industries and persistent tariff protection.
\textit{Review of International Economics, 10}(4), 749--762. https://doi.org/10.1111/1467-9396.00362

Martin, P., Mayer, T., \& Thoenig, M. (2008). Make trade not war?
\textit{Review of Economic Studies, 75}(3), 865--900. https://doi.org/10.1111/j.1467-937X.2008.00492.x

McBride, J., \& Chatzky, A. (2019, May 13). Is \textquoteleft Made
in China 2025\textquoteright{} a threat to global trade? Council on
Foreign Relations. https://www.cfr.org/backgrounder/made-china-2025-threat-global-trade

Melvin, J. R. (1986). The nonequivalence of tariffs and import quotas.
\textit{American Economic Review}, 76(5), 1131--1134

Myerson, R. B., \& Satterthwaite, M. A. (1983). Efficient mechanisms
for bilateral trading. \textit{Journal of Economic Theory, 29}(2),
265--281. https://doi.org/10.1016/0022-0531(83)90048-0

National Institute of Standards and Technology (NIST). (2023). \textit{CHIPS
incentives funding opportunities}. Retrieved November 13, 2025, from
https://www.nist.gov/chips/chips-incentives-funding-opportunities

Papadimitriou, D. B., Hannsgen, G., \& Zezza, G. (2008). The Buffett
plan for reducing the trade deficit. \textit{Levy Economics Institute
Working Paper}.

Pavcnik, N. (2002). Trade liberalization, exit, and productivity improvements:
Evidence from Chilean plants. \textit{Review of Economic Studies,
69}(1), 245--276. https://doi.org/10.1111/1467-937X.00205

Rhode, P. W., Snyder, J. M., Jr., \& Strumpf, K. (2018). The arsenal
of democracy: Production and politics during WWII. \textit{Journal
of Public Economics, 166}, 145--161. https://doi.org/10.1016/j.jpubeco.2018.08.010

Shibata, H. (1968). A note on the equivalence of tariffs and quotas.
\textit{American Economic Review, 58}(1), 137--142.

State Council of the People's Republic of China. (2015). \textit{Made
in China 2025}. (State Council plan for upgrading Chinas manufacturing
sector.) English translation available via the Center for Security
and Emerging Technology. Retrieved November 13, 2025, from https://cset.georgetown.edu/wp-content/uploads/t0432\_made\_in\_china\_2025\_EN.pdf 

Studwell, J. (2013). \textit{How Asia works: Success and failure in
the world's most dynamic region}. Grove Press.

Tower, E. (1975). The optimum quota and retaliation. \textit{Review
of Economic Studies}, 42(4), 623--630. https://doi.org/10.2307/2296799

Trefler, D. (2004). The long and short of the Canada-U.S. Free Trade
Agreement. \textit{American Economic Review, 94}(4), 870--895. https://doi.org/10.1257/0002828042002633

White House. (2017, July 21). Presidential Executive Order 13806:
Assessing and strengthening the manufacturing and defense industrial
base and supply chain resiliency of the United States. \textit{The
White House}.

Yamamoto, K. (2024). Free Trade Agreements Under Asymmetric Information
and the Negotiating Role of Side Payments. Available at SSRN 4716997.

\endgroup

\section*{}

\section*{Appendix}

\subsection*{A1 Proofs}

\subsubsection*{Proof for Proposition \ref{prop:market_eq}}

Country $i$ serves $j$'s market if $c_{i}^{exp}(m)<c_{j}^{dom}(m)$,
i.e. 
\[
w_{i}(m)-s_{i}-\tilde{e}_{i}+\tilde{\tau}_{j}\;<\;w_{j}(m)-s_{j}.
\]
Rearranging yields the export cutoff condition in terms of the technology
gap $\Delta w_{i}(m)\equiv w_{i}(m)-w_{j}(m)$: 
\[
\Delta w_{i}(m)<(s_{i}-s_{j})+(\tilde{e}_{i}-\tilde{\tau}_{j}).
\]
Using $\Delta w_{A}(m)=\delta(m-m^{o})$ with $m^{o}=Q_{A}^{o}=\alpha_{A}/\delta$,
the export cutoff is 
\[
m_{i}^{exp}=m^{o}+\frac{(s_{i}-s_{j})+(\tilde{e}_{i}-\tilde{\tau}_{j})}{\delta}.
\]

Similarly, $i$ serves its own domestic market when $c_{i}^{dom}(m)\le c_{j}^{exp}(m)$:
\[
w_{i}(m)-s_{i}\;\le\;w_{j}(m)-s_{j}-\tilde{e}_{j}+\tilde{\tau}_{i}\quad\Longleftrightarrow\quad\Delta w_{i}(m)\le(s_{i}-s_{j})+(\tilde{\tau}_{i}-\tilde{e}_{j}).
\]
Hence the domestic cutoff is 
\[
m_{i}^{dom}=m^{o}+\frac{(s_{i}-s_{j})+(\tilde{\tau}_{i}-\tilde{e}_{j})}{\delta}.
\]

Because each product market has unit mass and the weak inequalities
bind only on a null set, the corresponding quantities equal these
cutoffs, truncated to $[0,1]$: 
\[
Q_{i}^{dom}=\max(0,\min(1,m_{i}^{dom})),\qquad Q_{i}^{exp}=\max(0,\min(1,m_{i}^{exp})).
\]
Recognizing $Q_{A}^{o}=m^{o}$ and $Q_{B}^{o}=1-m^{o}$ yields the
stated expressions for $\tilde{Q}_{i}^{dom}$ and $\tilde{Q}_{i}^{exp}$.$\hfill\square$

\subsubsection*{Proof for Proposition \ref{prop:E}}

Using $Q_{i}^{imp}=Q_{j}^{exp}$ direct costs of country $i$ satisfy
\[
D_{i}=\int_{0}^{1}\big[w_{i}(m)\,q_{i}^{dom}(m)+w_{j}(m)\,q_{j}^{exp}(m)\big]\,dm+(s_{i}+\tilde{e}_{i})\,Q_{i}^{exp}-\left(s_{j}+\tilde{e}_{j}\right)Q_{i}^{imp}.
\]
All variables for free trade outcome will be superscripted with $^{o}$.
The direct costs under free trade are 
\[
D_{i}^{o}=\int_{0}^{1}\big[w_{i}(m)\,q_{i}^{dom,o}(m)+w_{j}(m)\,q_{j}^{exp,o}(m)\big]\,dm.
\]

Subtracting, yields
\begin{align*}
D_{i}-D_{i}^{o} & =\int_{0}^{1}\Big[w_{i}(m)\big(q_{i}^{dom}(m)-q_{i}^{dom,o}(m)\big)+w_{j}(m)\big(q_{j}^{exp}(m)-q_{j}^{exp,o}(m)\big)\Big]\,dm\\
 & \qquad+(s_{i}+\tilde{e}_{i})\,Q_{i}^{exp}-(s_{j}+\tilde{e}_{j})\,Q_{i}^{imp}.
\end{align*}
Since for each $m$ exactly one of $q_{i}^{dom}(m)$ and $q_{j}^{exp}(m)$
equals $1$, the integral term only loads on the set of products $m$
where production allocation differs from the free trade allocation.
We call it the missallocated set. Let $\Delta Q_{i}\equiv Q_{i}^{dom}-Q_{i}^{o}$.
By construction, the missallocated set is an interval of measure $|\Delta Q_{i}|$,
obtained by shifting the free trade cutoff $m^{o}$ left or right.
Using the linear technology gap $\Delta w_{i}(m)\equiv w_{i}(m)-w_{j}(m)=\delta(m-m^{o})$,
the incremental cost from misallocating the marginal interval of length
$|\Delta Q_{i}|$ is the area under a linear ramp: 
\[
\int_{0}^{|\Delta Q_{i}|}\delta x\,dx=\frac{\delta}{2}\,(\Delta Q_{i})^{2}.
\]
Therefore, 
\[
\int_{0}^{1}\Big[w_{i}(m)\big(q_{i}^{dom}-q_{i}^{dom,o}\big)+w_{j}(m)\big(q_{j}^{exp}-q_{j}^{exp,o}\big)\Big]\,dm=\frac{\delta}{2}\big(Q_{i}^{dom}-Q_{i}^{o}\big)^{2}.
\]
Putting all pieces together, we get 
\[
E_{i}=D_{i}-D_{i}^{o}=\frac{\delta}{2}\big(Q_{i}^{dom}-Q_{i}^{o}\big)^{2}+(s_{i}+\tilde{e}_{i})\,Q_{i}^{exp}-(s_{j}+\tilde{e}_{j})\,Q_{i}^{imp}.
\]

Summing up yields the first line of (\ref{eq:eq_E})
\[
E=E_{A}+E_{B}=\frac{\delta}{2}\left[\big(Q_{A}^{dom}-Q_{A}^{o}\big)^{2}+\big(Q_{B}^{dom}-Q_{B}^{o}\big)^{2}\right]
\]

Recognizing that
\begin{align*}
Q_{i}^{dom} & =1-Q_{j}^{exp}\\
Q_{i}^{o} & =1-Q_{j}^{o}
\end{align*}
yields the 2nd and 3rd lines by simple substitution and simplification.$\hfill\square$

\subsubsection*{Proof for Lemma \ref{lem:pi_binding-1}}

Because we are in an interior solution, Proposition~\ref{prop:market_eq}
implies for each country $k\in\{i,j\}$ that 
\[
Q_{k}^{exp}=Q_{k}^{o}+\frac{(s_{k}-s_{\ell})+(\tilde{e}_{k}-\tilde{\tau}_{\ell})}{\delta},
\]
where $\ell$ denotes the trading partner of $k$, $Q_{k}^{o}=\alpha_{k}/\delta$,
and $\delta=\alpha_{A}+\alpha_{B}$. Using this with $k=i$ and $\ell=j$
gives 
\begin{equation}
Q_{i}^{exp}=Q_{i}^{o}+\frac{(s_{i}-s_{j})+(\tilde{e}_{i}-\tilde{\tau}_{j})}{\delta}=\frac{\alpha_{i}}{\delta}+\frac{(s_{i}-s_{j})+(\tilde{e}_{i}-\tilde{\tau}_{j})}{\delta}.\label{eq:Qi_exp_raw}
\end{equation}
Similarly, with $k=j$ and $\ell=i$ we obtain 
\begin{equation}
Q_{j}^{exp}=Q_{j}^{o}+\frac{(s_{j}-s_{i})+(\tilde{e}_{j}-\tilde{\tau}_{i})}{\delta}=\frac{\alpha_{j}}{\delta}+\frac{(s_{j}-s_{i})+(\tilde{e}_{j}-\tilde{\tau}_{i})}{\delta}.\label{eq:Qj_exp_raw}
\end{equation}

Next, we substitute the effective instruments. By definition, 
\[
\tilde{\tau}_{k}=\tau_{k}+\pi_{k},\qquad\tilde{e}_{k}=e_{k}+\phi_{k}\eta_{k}\pi_{k}.
\]
We assume that country $j$ has no TIC or a non-binding TIC, so $\pi_{j}=0$.
Hence 
\[
\tilde{\tau}_{j}=\tau_{j},\qquad\tilde{e}_{j}=e_{j},
\]
while for country $i$ we have 
\[
\tilde{\tau}_{i}=\tau_{i}+\pi_{i},\qquad\tilde{e}_{i}=e_{i}+\phi_{i}\eta_{i}\pi_{i}.
\]
Substituting into \eqref{eq:Qi_exp_raw} and \eqref{eq:Qj_exp_raw}
yields 
\begin{equation}
Q_{i}^{exp}=\frac{\alpha_{i}}{\delta}+\frac{(s_{i}-s_{j})+\big(e_{i}+\phi_{i}\eta_{i}\pi_{i}-\tau_{j}\big)}{\delta},\label{eq:Qi_exp_pi}
\end{equation}
\begin{equation}
Q_{j}^{exp}=\frac{\alpha_{j}}{\delta}+\frac{(s_{j}-s_{i})+\big(e_{j}-(\tau_{i}+\pi_{i})\big)}{\delta}.\label{eq:Qj_exp_pi}
\end{equation}

A binding TIC equilibrium in country $i$ is characterized by $\pi_{i}>0$
and 
\[
\eta_{i}Q_{i}^{exp}=Q_{i}^{imp}>0.
\]
Since $Q_{i}^{imp}=Q_{j}^{exp}$ by definition, the binding condition
can be written as 
\begin{equation}
\eta_{i}Q_{i}^{exp}=Q_{j}^{exp}.\label{eq:TIC_binding_condition}
\end{equation}
Using \eqref{eq:Qi_exp_pi} and \eqref{eq:Qj_exp_pi} in \eqref{eq:TIC_binding_condition}
and multiplying both sides by $\delta$ gives 
\[
\eta_{i}\left[\alpha_{i}+(s_{i}-s_{j})+e_{i}+\phi_{i}\eta_{i}\pi_{i}-\tau_{j}\right]=\alpha_{j}+(s_{j}-s_{i})+e_{j}-\tau_{i}-\pi_{i}.
\]

We now solve this equation for $\pi_{i}$. Expanding the left-hand
side and bringing all terms to the right-hand side yields 
\[
0=-\alpha_{j}-e_{j}+\eta_{i}\big(\alpha_{i}+e_{i}+s_{i}-s_{j}-\tau_{j}\big)+\eta_{i}^{2}\phi_{i}\pi_{i}+\pi_{i}+s_{i}-s_{j}+\tau_{i}.
\]
Collecting the terms in $\pi_{i}$ on the left and all remaining terms
on the right gives 
\[
(1+\phi_{i}\eta_{i}^{2})\pi_{i}=\alpha_{j}-\eta_{i}\alpha_{i}+(1+\eta_{i})(s_{j}-s_{i})+(e_{j}-\eta_{i}e_{i})+(\eta_{i}\tau_{j}-\tau_{i}).
\]
Dividing by $1+\phi_{i}\eta_{i}^{2}>0$ yields (\eqref{eq:pi_bind-1}).$\hfill\square$

\subsubsection*{Proof for Lemma \ref{lem:no_s-1}}

Whether market $m$ in country $i$ will be domestically served depends
only on the cost difference
\[
c_{j}^{exp}(m)-c_{i}^{dom}(m)=w_{j}(m)-s_{j}-\tilde{e}_{j}+\tilde{\tau}_{i}-w_{i}(m)+s_{i}.
\]
This cost difference stays the same whenever $\tilde{\tau}_{i}'+s_{i}'=\tilde{\tau}_{i}+s_{i}$.
With $\tilde{\tau}_{i}'=\tilde{\tau}_{i}+s_{i}-s_{i}'$ this always
holds true. Similarly, Whether market $m$ in country $j$ will be
domestically served depends only on the cost difference
\[
c_{i}^{exp}(m)-c_{j}^{dom}(m)=w_{i}(m)-s_{i}-\tilde{e}_{i}+\tilde{\tau}_{j}-w_{j}(m)+s_{j}.
\]
This cost difference stays the same whenever $\tilde{e}_{i}'+s_{i}'=\tilde{e}_{i}+s_{i}$.
With $\tilde{e}_{i}'=\tilde{e}_{i}+s_{i}-s_{i}'$ this condition is
always satisfied.

Direct costs $d_{i}^{exp}(m),d_{i}^{imp}(m)$, $d_{i}^{dom}(m)$ and
$d_{j}^{exp}(m),d_{j}^{imp}(m)$, $d_{j}^{dom}(m)$ in every market
$m$ are independent of the tariff rate and depends only on the sum
of production and effective export subsidies $\tilde{e}_{i}+s_{i}=\tilde{e}_{i}'+s_{i}'$.$\hfill\square$

\subsubsection*{Proof for Proposition \ref{prop:cond_eff}}

Recall from Proposition~\ref{prop:E}, that the (unconditional) excess
cost relative to the globally efficient free trade outcome can be
written as 
\[
E=\frac{\delta}{2}\left[\big(Q_{i}^{dom}-Q_{i}^{o}\big)^{2}+\big(Q_{i}^{exp}-Q_{i}^{o}\big)^{2}\right]
\]
using either country $i$. Fix the totals $(X_{A},X_{B})$ with $X_{A}+X_{B}=2$.
The minimum cost $\bar{D}^{o}(X_{A},X_{B})$ subject to $X_{i}=Q_{i}^{dom}+Q_{i}^{exp}$
is obtained by minimizing 
\[
f(Q_{i}^{dom},Q_{i}^{exp})=(Q_{i}^{dom}-Q_{i}^{o})^{2}+(Q_{i}^{exp}-Q_{i}^{o})^{2}\quad\text{subject to}\quad Q_{i}^{dom}+Q_{i}^{exp}=X_{i}.
\]
Because $X_{i}\in[0,2]$, the unconstrained minimizer $Q_{i}^{dom}=Q_{i}^{exp}=X_{i}/2$
lies in $[0,1]$, so bounds do not bind. Substituting $Q_{i}^{dom}=X_{i}/2$
and $Q_{i}^{exp}=X_{i}/2$ yields 
\[
\min f=\frac{1}{2}\,(X_{i}-2Q_{i}^{o})^{2},
\]
and hence the constrained minimum (relative to $W^{o}$) equals 
\[
\bar{D}^{o}(X_{A},X_{B})-D^{o}=\frac{\delta}{4}\,(X_{i}-2Q_{i}^{o})^{2}.
\]
Therefore, the conditional excess cost beyond this constrained minimum
is 
\[
\bar{E}(X_{A},X_{B})=E-\big(D^{o}(X_{A},X_{B})-D^{o}\big)=\frac{\delta}{2}\Big[(d)^{2}+(e)^{2}\Big]-\frac{\delta}{4}(d+e)^{2},
\]
where $d\equiv Q_{i}^{dom}-Q_{i}^{o}$ and $e\equiv Q_{i}^{exp}-Q_{i}^{o}$.
Using $d^{2}+e^{2}-\tfrac{1}{2}(d+e)^{2}=\tfrac{1}{2}(d-e)^{2}$ gives
\[
\bar{E}(X_{A},X_{B})=\frac{\delta}{4}\,(Q_{i}^{dom}-Q_{i}^{exp})^{2},
\]
which proves \eqref{eq:condDW-quantities}. For the interior-cutoff
case, substitute $Q_{i}^{dom}-Q_{i}^{exp}=\big(\tilde{\tau}_{i}-\tilde{e}_{j}\big)-\big(\tilde{e}_{i}-\tilde{\tau}_{j}\big)$
divided by $\delta$, i.e., 
\[
Q_{i}^{dom}-Q_{i}^{exp}=\frac{(\tilde{\tau}_{i}+\tilde{\tau}_{j})-(\tilde{e}_{i}+\tilde{e}_{j})}{\delta},
\]
into \eqref{eq:condDW-quantities} to obtain \eqref{eq:condDW-policies}.$\hfill\square$

\subsubsection*{Proof for Proposition \ref{prop:eq_tic}}

Throughout this proof we first work in the interior region where the
cutoff formulas in Proposition~\ref{prop:market_eq} apply without
truncation and then indicate how to adjust for corner solutions. By
Lemma~\ref{lem:no_s-1} we can, without loss of generality, restrict
attention to $s_{i}=0$ for both countries and, since there is no
TIC scheme, to $\pi_{i}=0$, so that $\tilde{\tau}_{i}=\tau_{i}$
and $\tilde{e}_{i}=e_{i}$.

\emph{Step 1: Quantities and direct-cost derivatives.} From Proposition~\ref{prop:market_eq},
in an interior outcome we have 
\[
Q_{i}^{dom}=Q_{i}^{o}+\frac{\tau_{i}-e_{j}}{\delta},\qquad Q_{i}^{exp}=Q_{i}^{o}+\frac{e_{i}-\tau_{j}}{\delta},
\]
so total production in $i$ is 
\begin{equation}
X_{i}=Q_{i}^{dom}+Q_{i}^{exp}=2Q_{i}^{o}+\frac{\tau_{i}+e_{i}-(\tau_{j}+e_{j})}{\delta}.\label{eq:Xi_noTIC}
\end{equation}
Using Proposition~\ref{prop:E} with $s_{i}=0$ and $\tilde{e}_{i}=e_{i}$,
\[
D_{i}=D_{i}^{o}+E_{i},\qquad E_{i}=\frac{\delta}{2}\big(Q_{i}^{dom}-Q_{i}^{o}\big)^{2}+e_{i}\,Q_{i}^{exp}-e_{j}\,Q_{i}^{imp},
\]
and $Q_{i}^{imp}=Q_{j}^{exp}$. Substituting the interior expressions
for $Q_{i}^{dom},Q_{i}^{exp}$ and simplifying yields a quadratic
polynomial in the instruments. Differentiating with respect to own
instruments gives, for each country $i$, 
\begin{equation}
\frac{\partial D_{i}}{\partial\tau_{i}}=\frac{\tau_{i}}{\delta},\qquad\frac{\partial D_{i}}{\partial e_{i}}=Q_{i}^{o}+\frac{2e_{i}-\tau_{j}}{\delta}.\label{eq:Di_derivs}
\end{equation}
(These identities follow from straightforward algebra combining Propositions~\ref{prop:market_eq}
and~\ref{prop:E}.)

\emph{Step 2: Country $B$'s best response.} Country $B$'s objective
is $u_{B}=\gamma_{B}X_{B}-D_{B}$. Using \eqref{eq:Xi_noTIC}, the
marginal effect of own instruments on $X_{B}$ is 
\[
\frac{\partial X_{B}}{\partial\tau_{B}}=\frac{\partial X_{B}}{\partial e_{B}}=\frac{1}{\delta}.
\]
Hence, combining with \eqref{eq:Di_derivs}, 
\[
\frac{\partial u_{B}}{\partial\tau_{B}}=\frac{\gamma_{B}}{\delta}-\frac{\tau_{B}}{\delta},\qquad\frac{\partial u_{B}}{\partial e_{B}}=\frac{\gamma_{B}}{\delta}-\Big(Q_{B}^{o}+\frac{2e_{B}-\tau_{A}}{\delta}\Big).
\]
In an interior optimum these must both be zero, which yields 
\[
\tau_{B}^{*}=\gamma_{B},\qquad e_{B}^{*}=\frac{\tau_{A}+\gamma_{B}-\delta Q_{B}^{o}}{2}.
\]
Under Assumption~\ref{as:scen} (in particular $\bar{X}_{A}<1$ and
$\gamma_{B}$ small relative to $\delta$; see the final step below),
this interior expression for $e_{B}^{*}$ is nonpositive at the equilibrium
instruments of country $A$. Since $e_{B}\geq0$, the complementary-slackness
condition implies that in equilibrium 
\begin{equation}
e_{B}^{*}=0,\qquad\tau_{B}^{*}=\gamma_{B}.\label{eq:BR_B_corner}
\end{equation}

\emph{Step 3: Country $A$'s target and constrained cost minimization.}
Country $A$'s payoff is 
\[
u_{A}=-\lambda_{A}\max(\bar{X}_{A}-X_{A},0)-D_{A},
\]
with $\lambda_{A}\gg\gamma_{B}>0$. Suppose, for contradiction, that
in equilibrium $X_{A}<\bar{X}_{A}$. Then a marginal increase in $X_{A}$
via a small common increase in $\tau_{A}$ and $e_{A}$ raises $u_{A}$
at rate $\lambda_{A}/\delta$ on the benefit side, while the marginal
cost is of order $O(\tau_{A},e_{A})$ by \eqref{eq:Di_derivs}. For
$\lambda_{A}$ sufficiently large this cannot be optimal. Thus, in
any Nash equilibrium we must have 
\[
X_{A}=\bar{X}_{A}.
\]
Using \eqref{eq:Xi_noTIC} for $i=A$ and keeping country $B$'s instruments
fixed, the requirement $X_{A}=\bar{X}_{A}$ is equivalent to a linear
constraint on $(\tau_{A},e_{A})$: 
\begin{equation}
\tau_{A}+e_{A}=K_{A}\equiv\tau_{B}+e_{B}+\delta(\bar{X}_{A}-2Q_{A}^{o}).\label{eq:KA_def}
\end{equation}
Given \eqref{eq:KA_def}, country $A$ chooses $(\tau_{A},e_{A})$
to minimize $D_{A}$ subject to the linear constraint. Let $\kappa$
denote the Lagrange multiplier on \eqref{eq:KA_def}. The Lagrangian
is 
\[
L(\tau_{A},e_{A},\kappa)=D_{A}+\kappa\big(X_{A}-\bar{X}_{A}\big).
\]
Using \eqref{eq:Xi_noTIC} and \eqref{eq:Di_derivs}, the FOCs (interior)
are 
\[
0=\frac{\partial L}{\partial\tau_{A}}=\frac{\tau_{A}}{\delta}+\kappa\,\frac{1}{\delta},\qquad0=\frac{\partial L}{\partial e_{A}}=\Big(Q_{A}^{o}+\frac{2e_{A}-\tau_{B}}{\delta}\Big)+\kappa\,\frac{1}{\delta}.
\]
Subtracting the first from the second eliminates $\kappa$ and yields
\[
Q_{A}^{o}+\frac{2e_{A}-\tau_{B}}{\delta}-\frac{\tau_{A}}{\delta}=0,
\]
that is, 
\begin{equation}
2e_{A}=\tau_{A}+\tau_{B}-\delta Q_{A}^{o}.\label{eq:FOC_split}
\end{equation}
Together, \eqref{eq:KA_def} and \eqref{eq:FOC_split} determine the
interior cost-minimizing split between $\tau_{A}$ and $e_{A}$ for
given $(\tau_{B},e_{B})$. Solving gives 
\[
e_{A}^{*}=\frac{K_{A}+\tau_{B}-\delta Q_{A}^{o}}{3},\qquad\tau_{A}^{*}=K_{A}-e_{A}^{*}=\frac{2K_{A}-\tau_{B}+\delta Q_{A}^{o}}{3}.
\]

\emph{Step 4: Plugging in country $B$'s instruments.} Using country
$B$'s Nash choice \eqref{eq:BR_B_corner} in \eqref{eq:KA_def},
we obtain 
\[
K_{A}=\gamma_{B}+0+\delta(\bar{X}_{A}-2Q_{A}^{o}).
\]
Substituting this into the expressions in Step~3 and using $Q_{A}^{o}=\alpha_{A}/\delta$,
we get 
\[
\tau_{A}^{*}=-\alpha_{A}+\frac{2}{3}\,\delta\,\bar{X}_{A}+\frac{1}{3}\,\gamma_{B},\qquad e_{A}^{*}=-\alpha_{A}+\frac{1}{3}\,\delta\,\bar{X}_{A}+\frac{2}{3}\,\gamma_{B}.
\]
Moreover, 
\[
\tau_{A}^{*}-e_{A}^{*}=\frac{\delta\bar{X}_{A}-\gamma_{B}}{3}>0
\]
because $\bar{X}_{A}>X_{A}^{o}>0$ and $\gamma_{B}>0$ is small relative
to $\delta$ by Assumption~\ref{as:scen}. Hence, country $A$ strictly
prefers tariffs to subsidies in the cost-minimizing combination that
implements $X_{A}=\bar{X}_{A}$.

\emph{Step 5: Asymmetry $Q_{i}^{dom}>Q_{i}^{exp}$ and conditional
inefficiency.} Using the interior formulas for quantities, one checks
that 
\[
Q_{i}^{dom}-Q_{i}^{exp}=\frac{(\tau_{i}+\tau_{j})-(e_{i}+e_{j})}{\delta}.
\]
Evaluating this at the equilibrium instruments, 
\[
(\tau_{A}^{*}+\tau_{B}^{*})-(e_{A}^{*}+e_{B}^{*})=(\tau_{A}^{*}-e_{A}^{*})+\gamma_{B}=\frac{\delta\bar{X}_{A}+2\gamma_{B}}{3}>0,
\]
so $Q_{i}^{dom}>Q_{i}^{exp}$ for both $i=A,B$. By Proposition~\ref{prop:cond_eff},
conditional efficiency (given $(X_{A},X_{B})$) requires $Q_{i}^{dom}=Q_{i}^{exp}$
for each $i$, equivalently $\tilde{\tau}_{A}+\tilde{\tau}_{B}=\tilde{e}_{A}+\tilde{e}_{B}$.
Thus the Nash equilibrium is not conditionally efficient.

\emph{Additional step: Corner adjustments and $e_{B}^{*}=0$.} If
for the interior solution in Steps~2--4 some instrument were negative,
the true optimum lies on the boundary where that instrument is set
to zero and the other instrument is adjusted along the linear constraint
that fixes the relevant $X_{i}$. For country $A$ this means replacing
any negative $\tau_{A}$ or $e_{A}$ by zero and adjusting the other
instrument along 
\[
\tau_{A}+e_{A}=\tau_{B}+e_{B}+\delta(\bar{X}_{A}-2Q_{A}^{o})
\]
so that $X_{A}=\bar{X}_{A}$ still holds. This weakly increases $(\tau_{A}-e_{A})$
while keeping $(\tau_{A}+e_{A})$ fixed, and hence weakly increases
$(\tau_{A}+\tau_{B})-(e_{A}+e_{B})$. Therefore, the inequalities
$Q_{i}^{dom}>Q_{i}^{exp}$ and the conclusion about conditional inefficiency
continue to hold in any such corner equilibrium.

Finally, to verify that $e_{B}^{*}=0$ is indeed optimal for country
$B$, evaluate the marginal payoff of $B$ in $e_{B}$ at $e_{B}=0$:
\[
\frac{\partial u_{B}}{\partial e_{B}}\Big|_{e_{B}=0}=\frac{\gamma_{B}}{\delta}-\Big(Q_{B}^{o}-\frac{\tau_{A}}{\delta}\Big)=\frac{\gamma_{B}+\tau_{A}-\delta Q_{B}^{o}}{\delta}.
\]
Substituting $\tau_{A}=\tau_{A}^{*}$ and using $Q_{B}^{o}=\alpha_{B}/\delta$,
$\delta=\alpha_{A}+\alpha_{B}$ and $\bar{X}_{A}<1$ gives 
\[
\gamma_{B}+\tau_{A}^{*}-\delta Q_{B}^{o}=-\delta+\frac{2}{3}\delta\bar{X}_{A}+\frac{4}{3}\gamma_{B}\le-\frac{1}{3}\delta+\frac{4}{3}\gamma_{B}<0
\]
for $\gamma_{B}$ sufficiently small relative to $\delta$, as assumed.
Hence $\partial u_{B}/\partial e_{B}<0$ at $e_{B}=0$, so $e_{B}^{*}=0$
is indeed the best response. $\hfill\square$

\subsubsection*{Proof for Proposition \ref{prop:opt_tic}}

With no direct tariffs or subsidies, we have $\tilde{\tau}_{A}=\pi_{A}$,
$\tilde{e}_{A}=\phi_{A}\eta_{A}\pi_{A}$, and for country $B$, $\tilde{\tau}_{B}=\tilde{e}_{B}=\pi_{B}=0$.
The condition $\phi_{A}\eta_{A}=1$ implies $\tilde{e}_{A}=\pi_{A}$.
Hence, by Proposition~\ref{prop:market_eq}, 
\[
Q_{A}^{dom}=Q_{A}^{o}+\frac{\tilde{\tau}_{A}-\tilde{e}_{B}}{\delta}=Q_{A}^{o}+\frac{\pi_{A}}{\delta},\qquad Q_{A}^{exp}=Q_{A}^{o}+\frac{\tilde{e}_{A}-\tilde{\tau}_{B}}{\delta}=Q_{A}^{o}+\frac{\pi_{A}}{\delta}.
\]
Therefore $Q_{A}^{dom}=Q_{A}^{exp}$, and total production of $A$
is 
\begin{equation}
X_{A}=Q_{A}^{dom}+Q_{A}^{exp}=2Q_{A}^{o}+\frac{2\pi_{A}}{\delta}.\label{eq:XA_eq}
\end{equation}
Using market clearing $Q_{A}^{dom}+Q_{B}^{exp}=1$ gives 
\[
Q_{B}^{exp}=1-Q_{A}^{dom}=1-Q_{A}^{o}-\frac{\pi_{A}}{\delta}.
\]

The binding TIC condition in country $A$ requires 
\[
\eta_{A}Q_{A}^{exp}=Q_{A}^{imp}=Q_{B}^{exp}.
\]
Substituting $Q_{A}^{exp}=Q_{A}^{o}+\pi_{A}/\delta$ and the above
expression for $Q_{B}^{exp}$ yields 
\[
\eta_{A}\Big(Q_{A}^{o}+\frac{\pi_{A}}{\delta}\Big)=1-Q_{A}^{o}-\frac{\pi_{A}}{\delta}.
\]
Solving this equation for $\pi_{A}$ gives 
\begin{equation}
\pi_{A}=\frac{\delta\big[1-(1+\eta_{A})Q_{A}^{o}\big]}{1+\eta_{A}}.\label{eq:pi_from_eta}
\end{equation}
Inserting $Q_{A}^{o}=\alpha_{A}/\delta$ into \eqref{eq:pi_from_eta}
and substituting into \eqref{eq:XA_eq} yields 
\[
X_{A}=2Q_{A}^{o}+\frac{2}{\delta}\pi_{A}=\frac{2}{1+\eta_{A}}.
\]
Thus, for any given TIC parameter $\eta_{A}$, the induced equilibrium
production of country $A$ is $X_{A}=2/(1+\eta_{A})$.

By design of the agreement, $\eta_{A}$ is set according to \eqref{eq:eta_opt},
\[
\eta_{A}=\frac{2-\bar{X}_{A}}{\bar{X}_{A}}=\frac{2}{\bar{X}_{A}}-1.
\]
Substituting this into $X_{A}=2/(1+\eta_{A})$ gives 
\[
X_{A}=\frac{2}{1+\frac{2}{\bar{X}_{A}}-1}=\bar{X}_{A}.
\]
Hence, the chosen $\eta_{A}$ ensures that the resulting TIC equilibrium
implements country $A$s production target exactly.

Using \eqref{eq:XA_eq} and $X_{A}=\bar{X}_{A}$ then implies 
\[
\pi_{A}=\frac{\delta}{2}\,\bar{X}_{A}-\alpha_{A}.
\]
Since $\bar{X}_{A}>X_{A}^{o}=2Q_{A}^{o}=2\alpha_{A}/\delta$, we have
$\pi_{A}>0$, so the TIC in $A$ is binding.

Finally, conditional efficiency follows from Proposition~\ref{prop:cond_eff}.
We have 
\[
(\tilde{\tau}_{A}+\tilde{\tau}_{B})-(\tilde{e}_{A}+\tilde{e}_{B})=\pi_{A}-\phi_{A}\eta_{A}\pi_{A}=\pi_{A}(1-\phi_{A}\eta_{A})=0,
\]
so $\bar{E}(X_{A},X_{B})=0$ and $Q_{i}^{dom}=Q_{i}^{exp}$ for $i=A,B$.
Thus the TIC scheme with parameters \eqref{eq:eta_opt}--\eqref{eq:phi_opt}
implements $X_{A}=\bar{X}_{A}$ and yields a conditionally efficient
equilibrium. $\hfill\square$

\subsubsection*{Proof for Proposition \ref{prop:TIC_deviation_s}}

By Lemma~\ref{lem:no_s-1}, for any deviation $(s_{i},\tau_{i},e_{i})$
there is an outcome-equivalent deviation with $s_{i}=0$ and instruments
$(\tau_{i}',e_{i}')$ given by 
\[
\tau_{i}'=\tau_{i}+s_{i},\qquad e_{i}'=e_{i}+s_{i},
\]
which leaves all market quantities and payoffs unchanged for any opponent
policy. Under the agreement $\tau_{A}=\tau_{B}=0$, allowing $s_{i}\ge0$
is therefore equivalent to allowing $(\tau_{i}',e_{i}')$ with $e_{i}'\ge\tau_{i}'$
and $e_{i}'-\tau_{i}'=e_{i}$. Hence we can, without loss of generality,
restrict attention to deviations with $s_{A}=s_{B}=0$ and analyze
$(\tau_{i},e_{i})$ under the constraint $e_{i}\ge\tau_{i}$. In what
follows we keep the agreement's TIC parameters $(\eta_{A},\phi_{A})$
fixed with $\phi_{A}\eta_{A}=1$, and country~$B$ has no TIC ($\pi_{B}=0$).

\smallskip{}
\textit{Step 1: Interior formulas and the binding TIC in $A$.} With
$s_{A}=s_{B}=0$ and no direct tariffs ($\tau_{A}=\tau_{B}=0$ by
agreement), Proposition~\ref{prop:market_eq} gives in any interior
outcome 
\[
\begin{aligned}Q_{A}^{dom} & =Q_{A}^{o}+\frac{\tilde{\tau}_{A}-\tilde{e}_{B}}{\delta}=Q_{A}^{o}+\frac{\pi_{A}-e_{B}}{\delta},\\
Q_{A}^{exp} & =Q_{A}^{o}+\frac{\tilde{e}_{A}-\tilde{\tau}_{B}}{\delta}=Q_{A}^{o}+\frac{\pi_{A}}{\delta},\\
Q_{B}^{exp} & =Q_{B}^{o}+\frac{\tilde{e}_{B}-\tilde{\tau}_{A}}{\delta}=Q_{B}^{o}+\frac{e_{B}-\pi_{A}}{\delta},
\end{aligned}
\]
where we used $\tilde{\tau}_{A}=\pi_{A}$, $\tilde{e}_{A}=e_{A}+\phi_{A}\eta_{A}\pi_{A}=\pi_{A}$
at the agreement, and $\tilde{\tau}_{B}=0$, $\tilde{e}_{B}=e_{B}$.

Since country~$A$'s TIC is binding, $\eta_{A}Q_{A}^{exp}=Q_{A}^{imp}=Q_{B}^{exp}$.
Using the three displays above, 
\begin{equation}
Q_{B}^{o}+\frac{e_{B}-\pi_{A}}{\delta}=\eta_{A}\Big(Q_{A}^{o}+\frac{\pi_{A}}{\delta}\Big)\quad\Longrightarrow\quad\pi_{A}=\frac{\delta\,(Q_{B}^{o}-\eta_{A}Q_{A}^{o})+e_{B}}{1+\eta_{A}}.\tag{TIC-A}\label{eq:TIC_A_pin_pi}
\end{equation}
Equation \eqref{eq:TIC_A_pin_pi} pins down how the certificate price
in $A$ adjusts to a deviation $e_{B}$.

\smallskip{}
\textit{(i) No profitable deviation for $A$.} Under the agreement
of Proposition~\ref{prop:opt_tic}, $X_{A}=\bar{X}_{A}$ and $\phi_{A}\eta_{A}=1$,
so $\tilde{e}_{A}=\tilde{\tau}_{A}=\pi_{A}$ and $Q_{A}^{dom}=Q_{A}^{exp}$.
Any deviation by $A$ must satisfy $e_{A}\ge\tau_{A}(=0)$, thus weakly
raising the budget-transfer term $(s_{A}+\tilde{e}_{A})Q_{A}^{exp}=\tilde{e}_{A}Q_{A}^{exp}$
in $D_{A}$ (Proposition~\ref{prop:E}) and, because $X_{A}$ is
already at its target, cannot reduce the shortfall penalty. Hence
$u_{A}$ cannot increase, so $A$ has no profitable deviation with
$e_{A}>0$ (equivalently, with $s_{A}>0$ by Lemma~\ref{lem:no_s-1}).

\smallskip{}
\textit{(ii) $B$'s one-sided deviation and the threshold.} Consider
a small unilateral deviation $e_{B}\ge0$ by $B$ (again, $s_{B}$
is redundant by Lemma~\ref{lem:no_s-1}). Using \eqref{eq:TIC_A_pin_pi},
the induced change in $B$'s total production is 
\[
\frac{dX_{B}}{de_{B}}=\frac{d(Q_{B}^{dom}+Q_{B}^{exp})}{de_{B}}=\frac{1}{\delta}\Bigl(1-\frac{2}{1+\eta_{A}}\Bigr)=\frac{\eta_{A}-1}{(1+\eta_{A})\,\delta}.
\]

From Proposition~\ref{prop:E}, with $s_{B}=0$ and $\tilde{e}_{B}=e_{B}$,
$s_{A}=0$ and $\tilde{e}_{A}=\pi_{A}$, 
\[
D_{B}=D_{B}^{o}+\frac{\delta}{2}\big(Q_{B}^{dom}-Q_{B}^{o}\big)^{2}+e_{B}Q_{B}^{exp}-\pi_{A}Q_{B}^{imp},
\]
where $Q_{B}^{imp}=Q_{A}^{exp}$. Substituting the interior formulas
from Step~1 and differentiating at $e_{B}=0$ yields 
\[
\frac{dD_{B}}{de_{B}}\Big|_{e_{B}=0}=\frac{\eta_{A}^{2}+\eta_{A}-1}{(1+\eta_{A})^{2}}.
\]
Thus the marginal effect on $B$'s utility is 
\[
\frac{du_{B}}{de_{B}}\Big|_{e_{B}=0}=\gamma_{B}\cdot\frac{\eta_{A}-1}{(1+\eta_{A})\,\delta}-\frac{\eta_{A}^{2}+\eta_{A}-1}{(1+\eta_{A})^{2}}.
\]
Hence $du_{B}/de_{B}>0$ at $e_{B}=0$ if and only if 
\[
\gamma_{B}>\delta\cdot\frac{\eta_{A}^{2}+\eta_{A}-1}{\eta_{A}^{2}-1}\;\;=:\;\gamma_{B}^{TIC}(\eta_{A}).
\]
If $\eta_{A}=1$, the right-hand side is $+\infty$, so no profitable
deviation exists. If $\eta_{A}>1$, the threshold is finite and strictly
decreasing in $\eta_{A}$ (its derivative is negative), and $\lim_{\eta_{A}\to1}\gamma_{B}^{TIC}(\eta_{A})=\infty$
as claimed.

\smallskip{}
\textit{(iii) Global lower bound on $X_{A}$ and monotonicity of $D_{A}$
in $B$'s subsidies.} With $A$'s TIC binding, $Q_{A}^{imp}=Q_{B}^{exp}$
and $Q_{A}^{dom}=1-Q_{B}^{exp}$. Hence 
\[
X_{A}=Q_{A}^{dom}+Q_{A}^{exp}=1-Q_{B}^{exp}+\frac{Q_{B}^{exp}}{\eta_{A}}=1-\Bigl(1-\frac{1}{\eta_{A}}\Bigr)Q_{B}^{exp}\;\ge\;\frac{1}{\eta_{A}},
\]
because $Q_{B}^{exp}\in[0,1]$. Moreover, when $B$ raises either
$e_{B}$ or $s_{B}$ (the latter being outcome-equivalent to raising
both $\tau_{B}$ and $e_{B}$ by the same amount), $A$'s direct cost
$D_{A}$ cannot increase: the import-cost term $d_{A}^{imp}(m)$ falls
one-for-one with $B$'s effective subsidies, and the production-inefficiency
part in Proposition~\ref{prop:E} weakly decreases because, under
$\phi_{A}\eta_{A}=1$, adjustments in $\pi_{A}$ induced by the binding
TIC move $Q_{A}^{dom}$ and $Q_{A}^{exp}$ in opposite directions,
reducing their squared dispersion around $Q_{A}^{o}$. Thus $D_{A}$
is weakly decreasing in $(e_{B},s_{B})$.

The three parts (i)--(iii) complete the proof. $\hfill\square$

\subsubsection*{Proof for Proposition \ref{prop:NO_TIC_threshold_compact}}

Under the No-TIC Agreement, $\pi_{A}=\pi_{B}=0$, country $A$ fixes
$e_{A}=\tau_{A}=\frac{\delta}{2}\bar{X}_{A}-\alpha_{A}$, and country
$B$ initially has $e_{B}=\tau_{B}=0$ (with $s_{A}=s_{B}=0$). Consider
unilateral deviations by $B$ in $e_{B}$ or $s_{B}$ while $\tau_{B}$
remains fixed at $0$.

\smallskip{}
 \textit{(a) Deviating with an export subsidy $e_{B}>0$.} With $s_{i}=0$
and no TIC, Proposition~\ref{prop:market_eq} (interior case) gives
\[
Q_{i}^{dom}=Q_{i}^{o}+\frac{\tau_{i}-e_{j}}{\delta},\qquad Q_{i}^{exp}=Q_{i}^{o}+\frac{e_{i}-\tau_{j}}{\delta}.
\]
Holding $A$'s instruments fixed, we have 
\[
\frac{\partial Q_{B}^{dom}}{\partial e_{B}}=0,\qquad\frac{\partial Q_{B}^{exp}}{\partial e_{B}}=\frac{1}{\delta},\qquad\frac{\partial X_{B}}{\partial e_{B}}=\frac{1}{\delta}.
\]
For $B$'s direct cost, use Proposition~\ref{prop:E}: 
\[
D_{B}=\int_{0}^{1}\!\big[w_{B}(m)\,q_{B}^{dom}(m)+w_{A}(m)\,q_{A}^{exp}(m)\big]\,dm+\,(s_{B}+e_{B})Q_{B}^{exp}\;-\;(s_{A}+\tilde{e}_{A})Q_{B}^{imp}.
\]
At the baseline $(e_{B}=0,\tau_{B}=0)$, only $Q_{B}^{exp}$ (and
thus $Q_{B}^{imp}$ via market clearing) moves with $e_{B}$. Differentiating
at $e_{B}=0$ and using the linear-cost structure behind the cutoffs
yields 
\[
\frac{\partial D_{B}}{\partial e_{B}}\Big|_{e_{B}=0}=Q_{B}^{o}-\frac{\tau_{A}}{\delta}.
\]
With $\tau_{A}=\frac{\delta}{2}\bar{X}_{A}-\alpha_{A}$ and $Q_{B}^{o}=\alpha_{B}/\delta=1-\alpha_{A}/\delta$,
this simplifies to 
\[
Q_{B}^{o}-\frac{\tau_{A}}{\delta}=\;1-\frac{\bar{X}_{A}}{2}=\frac{\eta_{A}}{1+\eta_{A}},\qquad\text{since}\quad\bar{X}_{A}=\frac{2}{1+\eta_{A}}.
\]
Therefore 
\[
\frac{\partial u_{B}}{\partial e_{B}}\Big|_{e_{B}=0}=\gamma_{B}\cdot\frac{1}{\delta}-\frac{\eta_{A}}{1+\eta_{A}}>0\;\;\Longleftrightarrow\;\;\gamma_{B}>\delta\,\frac{\eta_{A}}{1+\eta_{A}}.
\]

\noindent \smallskip{}
 \textit{(b) Deviating with a production subsidy $s_{B}>0$.} By Lemma~\ref{lem:no_s-1},
for any $(\tilde{\tau}_{B},\tilde{e}_{B},s_{B})$ there is an outcome-equivalent
policy with $s_{B}'=0$ and 
\[
\tau_{B}'=\tau_{B}+s_{B},\qquad e_{B}'=e_{B}+s_{B}.
\]
Since $\tau_{B}$ is fixed at $0$, a marginal increase in $s_{B}$
is outcome-equivalent to raising both $(\tau_{B},e_{B})$ by the same
amount. Using the formulas above, 
\[
\frac{\partial X_{B}}{\partial s_{B}}=\frac{\partial X_{B}}{\partial\tau_{B}}+\frac{\partial X_{B}}{\partial e_{B}}=\frac{1}{\delta}+\frac{1}{\delta}=\frac{2}{\delta}.
\]
The marginal direct-cost term at the baseline is the same as for $e_{B}$
(because $s_{B}$ shifts both instruments equally and leaves the relevant
cutoff structure for the integral term unchanged at first order):
\[
\frac{\partial D_{B}}{\partial s_{B}}\Big|_{s_{B}=0}=Q_{B}^{o}-\frac{\tau_{A}}{\delta}=\frac{\eta_{A}}{1+\eta_{A}}.
\]
Hence 
\[
\frac{\partial u_{B}}{\partial s_{B}}\Big|_{s_{B}=0}=\gamma_{B}\cdot\frac{2}{\delta}-\frac{\eta_{A}}{1+\eta_{A}}>0\;\;\Longleftrightarrow\;\;\gamma_{B}>\frac{1}{2}\,\delta\,\frac{\eta_{A}}{1+\eta_{A}}.
\]

\noindent \smallskip{}
 Allowing both instruments, the minimal threshold at which $B$ gains
from deviating is therefore 
\[
\gamma_{B}^{\text{No-TIC}}=\min\!\left\{ \delta\,\frac{\eta_{A}}{1+\eta_{A}},\;\frac{1}{2}\,\delta\,\frac{\eta_{A}}{1+\eta_{A}}\right\} =\frac{1}{2}\,\delta\,\frac{\eta_{A}}{1+\eta_{A}}.
\]
Finally, comparing with the TIC-agreement threshold 
\[
\gamma_{B}^{TIC}(\eta_{A})=\delta\cdot\frac{\eta_{A}^{2}+\eta_{A}-1}{\eta_{A}^{2}-1}
\]
(from Proposition~\ref{prop:TIC_deviation_s}) gives 
\[
\frac{\gamma_{B}^{TIC}}{\gamma_{B}^{\text{No-TIC}}}=\frac{{\displaystyle \delta\frac{\eta_{A}^{2}+\eta_{A}-1}{\eta_{A}^{2}-1}}}{{\displaystyle \frac{1}{2}\delta\,\frac{\eta_{A}}{1+\eta_{A}}}}=2\,\frac{\eta_{A}^{2}+\eta_{A}-1}{\eta_{A}(\eta_{A}-1)}>2,
\]
since $\eta_{A}>1$ implies $\eta_{A}^{2}+\eta_{A}-1>\eta_{A}(\eta_{A}-1)$.
$\hfill\square$

\subsubsection*{Proof for Proposition \ref{prop:ntb}}

\textbf{Preliminaries.} A non-tariff barrier (NTB) $\beta_{i}\ge0$
raises the marginal cost of imports into $i$ exactly like adding
$\beta_{i}$ to $i$'s effective tariff, i.e.\ replace $\tilde{\tau}_{i}$
by $\tilde{\tau}_{i}+\beta_{i}$ in the cutoff formulas of Proposition~1.
Unlike a tariff, however, an NTB yields no revenue rebate in $D_{i}$.

\smallskip{}
\textbf{TIC Agreement.} Maintain the TIC setup: $\tau_{A}=\tau_{B}=e_{A}=e_{B}=0$,
country $A$ runs a binding TIC with $\eta_{A}>1$ and $\phi_{A}\eta_{A}=1$,
$\pi_{B}=0$, and the implemented allocation satisfies $X_{A}=\bar{X}_{A}$
with $Q_{A}^{dom}=Q_{A}^{exp}=\bar{X}_{A}/2$.

\emph{Country $B$.} Let $\beta_{B}\ge0$. With $s_{i}=0$, the interior
formulas from Proposition~1 give 
\[
\begin{aligned}Q_{A}^{exp} & =Q_{A}^{o}+\frac{\tilde{e}_{A}-\tilde{\tau}_{B}}{\delta}=Q_{A}^{o}+\frac{\pi_{A}}{\delta},\\
Q_{B}^{exp} & =Q_{B}^{o}+\frac{e_{B}-\tilde{\tau}_{A}}{\delta}=Q_{B}^{o}-\frac{\pi_{A}}{\delta},\\
Q_{B}^{dom} & =Q_{B}^{o}+\frac{\tilde{\tau}_{B}-\tilde{e}_{A}}{\delta}=Q_{B}^{o}+\frac{\beta_{B}-\pi_{A}}{\delta},
\end{aligned}
\]
where we used $\tilde{e}_{A}=\phi_{A}\eta_{A}\pi_{A}=\pi_{A}$ and
$\tilde{\tau}_{A}=\pi_{A}$, $\tilde{\tau}_{B}=\beta_{B}$. Binding
of $A$'s TIC implies $\eta_{A}Q_{A}^{exp}=Q_{A}^{imp}=Q_{B}^{exp}$,
hence 
\[
Q_{B}^{o}-\frac{\pi_{A}}{\delta}=\eta_{A}\!\left(Q_{A}^{o}+\frac{\pi_{A}-\beta_{B}}{\delta}\right)\;\Longrightarrow\;\pi_{A}=\frac{\delta\,(Q_{B}^{o}-\eta_{A}Q_{A}^{o})+\eta_{A}\beta_{B}}{1+\eta_{A}}.
\]
Therefore $\frac{\partial\pi_{A}}{\partial\beta_{B}}=\frac{\eta_{A}}{1+\eta_{A}}$
and 
\[
X_{B}=Q_{B}^{dom}+Q_{B}^{exp}=2Q_{B}^{o}+\frac{\beta_{B}-2\pi_{A}}{\delta}\;\Longrightarrow\;\frac{\partial X_{B}}{\partial\beta_{B}}=\frac{1-2\frac{\eta_{A}}{1+\eta_{A}}}{\delta}=-\,\frac{\eta_{A}-1}{(1+\eta_{A})\,\delta}<0.
\]
On the directcost side, an NTB increases $B$'s import consumer price
with no offsetting revenue, so at $\beta_{B}=0$ we have $\frac{\partial D_{B}}{\partial\beta_{B}}\big|_{0}=Q_{B}^{imp}>0.$
Hence $\frac{\partial u_{B}}{\partial\beta_{B}}\big|_{0}=\gamma_{B}\,\frac{\partial X_{B}}{\partial\beta_{B}}\big|_{0}-\frac{\partial D_{B}}{\partial\beta_{B}}\big|_{0}<0,$
and $\beta_{B}^{TIC}=0$ is strictly optimal.

\emph{Country $A$.} At the TIC baseline $X_{A}=\bar{X}_{A}$, the
strategic penalty is locally flat, so $\frac{\partial}{\partial\beta_{A}}\!\big[-\lambda_{A}\max(\bar{X}_{A}-X_{A},0)\big]\big|_{0}=0$
and thus $\frac{\partial u_{A}}{\partial\beta_{A}}\big|_{0}=-\,\frac{\partial D_{A}}{\partial\beta_{A}}\big|_{0}$.
An NTB into $A$ shifts $\tilde{\tau}_{A}$ to $\pi_{A}+\beta_{A}$.
Repeating the bindingTIC step with $\beta_{A}$ (and $\beta_{B}=0$)
yields 
\[
\delta Q_{B}^{o}-(\pi_{A}+\beta_{A})=\eta_{A}\big(\delta Q_{A}^{o}+\pi_{A}\big)\;\Longrightarrow\;\pi_{A}=\frac{\delta\,(Q_{B}^{o}-\eta_{A}Q_{A}^{o})-\beta_{A}}{1+\eta_{A}},\quad\frac{\partial\pi_{A}}{\partial\beta_{A}}=-\frac{1}{1+\eta_{A}}.
\]
The linear import term in $D_{A}$ moves with $(\beta_{A}+\phi_{A}\pi_{A})Q_{A}^{imp}$
(no revenue from $\beta_{A}$; $\pi_{B}=0$). Using $\phi_{A}=1/\eta_{A}$
and the derivative above, 
\[
\frac{\partial}{\partial\beta_{A}}\Big[(\beta_{A}+\phi_{A}\pi_{A})Q_{A}^{imp}\Big]\Big|_{0}=Q_{A}^{imp}\!\left(1+\phi_{A}\,\frac{\partial\pi_{A}}{\partial\beta_{A}}\right)=Q_{A}^{imp}\!\left(1-\frac{1}{\eta_{A}(1+\eta_{A})}\right)>0.
\]
All remaining changes in $D_{A}$ are (at most) second order at $\beta_{A}=0$
(the misallocation triangle starts at zero), so $\frac{\partial D_{A}}{\partial\beta_{A}}\big|_{0}>0$
and $\beta_{A}^{TIC}=0$.

\smallskip{}
\textbf{No-TIC Agreement.} Under the No-TIC Agreement, $\pi_{A}=\pi_{B}=0$,
country $A$ fixes $(\tau_{A},e_{A})=(t,t)$ with $t=\frac{\delta}{2}\bar{X}_{A}-\alpha_{A}$,
and country $B$ sets $(\tau_{B},e_{B})=(0,0)$. The implemented allocation
is conditionally efficient with $Q_{A}^{dom}=Q_{A}^{exp}=\bar{X}_{A}/2$
and $Q_{B}^{imp}=Q_{A}^{exp}=\bar{X}_{A}/2$. Let $B$ introduce $\beta_{B}\ge0$
(with $\beta_{A}=0$). Using the interior formulas from Proposition~1
and noting that an NTB adds to $\tilde{\tau}_{B}$, we obtain 
\[
X_{B}=2Q_{B}^{o}+\frac{\tau_{B}+e_{B}-(\tau_{A}+e_{A})}{\delta}+\frac{\beta_{B}}{\delta}\;\Longrightarrow\;\frac{\partial X_{B}}{\partial\beta_{B}}=\frac{1}{\delta}.
\]
At the baseline, an NTB again raises the import consumer price without
revenue, so 
\[
\frac{\partial D_{B}}{\partial\beta_{B}}\Big|_{0}=Q_{B}^{imp}=\frac{\bar{X}_{A}}{2}=\frac{1}{1+\eta_{A}},\qquad\text{since }\;\bar{X}_{A}=\frac{2}{1+\eta_{A}}.
\]
Therefore 
\[
\frac{\partial u_{B}}{\partial\beta_{B}}\Big|_{0}=\frac{\gamma_{B}}{\delta}-\frac{\bar{X}_{A}}{2}\;>\;0\;\;\Longleftrightarrow\;\;\gamma_{B}>\delta\cdot\frac{\bar{X}_{A}}{2}=\frac{\delta}{1+\eta_{A}}.
\]
This establishes the stated threshold under the No-TIC Agreement and
completes the proof.$\hfill\square$

\subsubsection*{Proof for Proposition \ref{prop:oligopoly}}

\medskip{}
\textbf{Step 1: Second-stage allocations and the $\pi_{A}(Q_{A}^{\exp})$
relation.} With a binding TIC, $Q_{A}^{\imp}=\eta_{A}Q_{A}^{\exp}$
so domestic sales in $A$ are 
\[
Q_{A}^{\dom}=1-Q_{A}^{\imp}=1-\eta_{A}Q_{A}^{\exp}.
\]
At the cutoff market $m=Q_{A}^{\dom}$, $B$'s export cost equals
$A$'s domestic cost: 
\[
c_{B}^{\exp}(m)-c_{A}^{\dom}(m)=0.
\]
Under the TIC Agreement, $c_{B}^{\exp}(m)=w_{B}(m)+\pi_{A}$ and $c_{A}^{\dom}(m)=w_{A}(m)$,
hence 
\[
w_{B}(m)-w_{A}(m)+\pi_{A}=0\quad\Rightarrow\quad\alpha_{A}-\delta\,m+\pi_{A}=0.
\]
Evaluated at $m=Q_{A}^{\dom}=1-\eta_{A}Q_{A}^{\exp}$ this yields
the TIC price as a function of total exports: 
\begin{equation}
\pi_{A}=\delta\bigl(1-\eta_{A}Q_{A}^{\exp}\bigr)-\alpha_{A}.\label{eq:pi-of-Qexp}
\end{equation}

\medskip{}
\textbf{Step 2: Per-market margins and a producers profit.} For any
market $m$, the per-unit margin for an $A$ producer if it serves
domestically is 
\[
g_{n}^{\dom}(m)=c_{B}^{\exp}(m)-c_{A}^{\dom}(m)=\alpha_{A}-\delta m+\pi_{A},
\]
and if it exports is 
\[
g_{n}^{\exp}(m)=c_{B}^{\dom}(m)-c_{A}^{\exp}(m)=w_{B}(m)-\bigl(w_{A}(m)-\pi_{A}\bigr)=\alpha_{A}-\delta m+\pi_{A},
\]
where the last equality uses $\phi_{A}\eta_{A}=1$ so that $c_{A}^{\exp}(m)=w_{A}(m)-\pi_{A}$.
Hence the margin is the same on either side: $\alpha_{A}-\delta m+\pi_{A}$.

Let $q_{n}^{\dom}$ and $q_{n}^{\exp}$ denote producer $n$s domestic
and export quantities. Because producer $n$ serves a $1/N$-fraction
of each $m$-interval, $q_{n}^{\dom}=m_{n}^{\dom}/N$ and $q_{n}^{\exp}=m_{n}^{\exp}/N$.
Integrating margins over the mass of markets it serves gives 
\[
G_{n}=(\alpha_{A}+\pi_{A})\bigl(q_{n}^{\dom}+q_{n}^{\exp}\bigr)-\frac{\delta N}{2}\Big[(q_{n}^{\dom})^{2}+(q_{n}^{\exp})^{2}\Big].
\]
Given $Q_{A}^{\exp}=\sum_{k=1}^{N}q_{k}^{\exp}$, domestic sales split
evenly ex post: 
\begin{equation}
q_{n}^{\dom}=\frac{1}{N}\,Q_{A}^{\dom}=\frac{1}{N}\bigl(1-\eta_{A}Q_{A}^{\exp}\bigr).\label{eq:qdom-split}
\end{equation}
From \eqref{eq:pi-of-Qexp}, $\alpha_{A}+\pi_{A}=\delta\bigl(1-\eta_{A}Q_{A}^{\exp}\bigr)$.

\medskip{}
\textbf{Step 3: First-order condition (best response) in Stage 1.}
Treating $q_{-n}^{\exp}$ as given, we have 
\[
\frac{\partial q_{n}^{\dom}}{\partial q_{n}^{\exp}}=-\frac{\eta_{A}}{N},\qquad\frac{\partial(\alpha_{A}+\pi_{A})}{\partial q_{n}^{\exp}}=-\delta\,\eta_{A}.
\]
Differentiating $G_{n}$ with respect to $q_{n}^{\exp}$ and substituting
$q_{n}^{\dom}$ from \eqref{eq:qdom-split} as well as $\alpha_{A}+\pi_{A}=\delta(1-\eta_{A}Q_{A}^{\exp})$
gives the marginal payoff 
\begin{equation}
\frac{\partial G_{n}}{\partial q_{n}^{\exp}}=\delta\Bigl(1-\frac{\eta_{A}}{N}\Bigr)\bigl(1-\eta_{A}Q_{A}^{\exp}\bigr)-\delta(\eta_{A}+N)\,q_{n}^{\exp}.\label{eq:FOC-general}
\end{equation}
The right-hand side is strictly decreasing in $q_{n}^{\exp}$ (slope
$-\delta(\eta_{A}+N)<0$), so each best response is unique. Moreover,
if the right-hand side at $q_{n}^{\exp}=0$ is nonpositive, the best
response is the corner $0$.

\medskip{}
\textbf{Step 4: Symmetric Nash equilibrium.} In a symmetric interior
equilibrium $q_{n}^{\exp}=q$ for all $n$, so $Q_{A}^{\exp}=Nq$.
Setting \eqref{eq:FOC-general} to zero yields 
\[
0=\delta\Bigl(1-\frac{\eta_{A}}{N}\Bigr)\bigl(1-\eta_{A}Nq\bigr)-\delta(\eta_{A}+N)\,q,
\]
which solves to 
\[
q=\frac{\,1-\eta_{A}/N\,}{\,\eta_{A}N+\eta_{A}+N-\eta_{A}^{2}\,}.
\]
Total exports are therefore 
\begin{equation}
Q_{A}^{\exp}=Nq=\frac{N-\eta_{A}}{\,N(\eta_{A}+1)-\eta_{A}(\eta_{A}-1)\,}.\label{eq:QAexp-symmetric}
\end{equation}
This interior solution requires $N>\eta_{A}$. If $N\le\eta_{A}$,
the right-hand side of \eqref{eq:FOC-general} at $q_{n}^{\exp}=0$
is $\delta(1-\eta_{A}/N)\bigl(1-\eta_{A}Q_{A}^{\exp}\bigr)\le0$,
so the unique best response is $q_{n}^{\exp}=0$ for all $n$, i.e.\ $Q_{A}^{\exp}=0$.

Given $Q_{A}^{\exp}$, the TIC price follows from \eqref{eq:pi-of-Qexp}:
\[
\pi_{A}=\delta\bigl(1-\eta_{A}Q_{A}^{\exp}\bigr)-\alpha_{A}.
\]

\medskip{}
\textbf{Step 5: Properties.} (i) \emph{Nonconditional efficiency:}
With a binding TIC, $Q_{A}^{\dom}=1-\eta_{A}Q_{A}^{\exp}$. Conditional
efficiency under the TIC Agreement requires $Q_{A}^{\dom}=Q_{A}^{\exp}$,
i.e.\ $Q_{A}^{\exp}=1/(1+\eta_{A})$. From \eqref{eq:QAexp-symmetric}
and the monotonicity in (ii) below, $Q_{A}^{\exp}<1/(1+\eta_{A})$
for any finite $N$, hence $Q_{A}^{\exp}<Q_{A}^{\dom}$ and the allocation
is not conditionally efficient.

(ii) \emph{Comparative statics in $N$:} For $N>\eta_{A}$, differentiating
\eqref{eq:QAexp-symmetric} with respect to $N$ gives 
\[
\frac{dQ_{A}^{\exp}}{dN}=\frac{\,\bigl[N(\eta_{A}+1)-\eta_{A}(\eta_{A}-1)\bigr]-(\eta_{A}+1)(N-\eta_{A})}{\bigl[N(\eta_{A}+1)-\eta_{A}(\eta_{A}-1)\bigr]^{2}}=\frac{2\eta_{A}}{\bigl[N(\eta_{A}+1)-\eta_{A}(\eta_{A}-1)\bigr]^{2}}>0,
\]
so $Q_{A}^{\exp}$ is strictly increasing in $N$. From $\pi_{A}=\delta(1-\eta_{A}Q_{A}^{\exp})-\alpha_{A}$,
it follows that $\pi_{A}$ is strictly decreasing in $N$.

(iii) \emph{Limit as $N\to\infty$:} From \eqref{eq:QAexp-symmetric},
\[
Q_{A}^{\exp}\xrightarrow[N\to\infty]{}\frac{1}{1+\eta_{A}},\qquad Q_{A}^{\dom}=1-\eta_{A}Q_{A}^{\exp}\xrightarrow[N\to\infty]{}\frac{1}{1+\eta_{A}},
\]
and hence 
\[
\pi_{A}=\delta\bigl(1-\eta_{A}Q_{A}^{\exp}\bigr)-\alpha_{A}\xrightarrow[N\to\infty]{}\frac{\delta}{1+\eta_{A}}-\alpha_{A},
\]
which coincides with the perfectly competitive outcome under the TIC
Agreement.

\medskip{}
Combining Steps 15 proves the piecewise expression for $Q_{A}^{\exp}$,
the formula for $\pi_{A}$, uniqueness of the symmetric Nash equilibrium,
and all stated properties. $\hfill\square$

\subsection*{A2 Additional results}

Below are two results that further characterize TIC equilibria
\begin{lem}
Assume both countries implement tradeable import certificates. 

i) If $\eta_{A}\eta_{B}<1$ then only a pure autarky equilibrium exists,
with $Q_{A}^{exp}=Q_{B}^{exp}=0$. 

ii) If $\eta_{A}\eta_{B}>1$ then at most one country's TIC equilibrium
can be binding. 
\end{lem}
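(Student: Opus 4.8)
The plan is to reduce everything to the two TIC feasibility conditions. Since $Q_i^{imp}=Q_j^{exp}$, in any equilibrium in which both countries operate a TIC scheme the conditions $\eta_i Q_i^{exp}\ge Q_i^{imp}$ become
\[
\eta_A Q_A^{exp}\ge Q_B^{exp},\qquad \eta_B Q_B^{exp}\ge Q_A^{exp},
\]
and these hold regardless of whether each country's equilibrium is binding or non-binding. Both sides of each inequality are nonnegative, so the whole argument rests on combining the two inequalities multiplicatively.

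For part (i) I would multiply the two inequalities (licensed because $\eta_A Q_A^{exp}\ge Q_B^{exp}\ge 0$ and $\eta_B Q_B^{exp}\ge Q_A^{exp}\ge 0$) to obtain $\eta_A\eta_B\,Q_A^{exp}Q_B^{exp}\ge Q_A^{exp}Q_B^{exp}$. If both export quantities were strictly positive, their product is positive and cancellation gives $\eta_A\eta_B\ge 1$, contradicting $\eta_A\eta_B<1$. Hence at least one export vanishes; say $Q_A^{exp}=0$, the other case being symmetric. Substituting this into country $A$'s own condition yields $0=\eta_A Q_A^{exp}\ge Q_B^{exp}\ge 0$, so $Q_B^{exp}=0$ as well. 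Both export and import volumes therefore vanish, which is precisely the pure-autarky equilibrium. To finish the uniqueness claim I would note that this derivation used only TIC feasibility, so no non-autarky equilibrium can exist; for existence I would observe that autarky is always supportable, since with zero certificate supply the (degenerate) certificate market clears at any price high enough to deter imports, and raising $\pi_i$ increases the effective tariff $\tilde\tau_i$ without bound.

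For part (ii) I would instead use the binding equalities. If both TIC equilibria were binding, then by definition $\eta_A Q_A^{exp}=Q_B^{exp}$ and $\eta_B Q_B^{exp}=Q_A^{exp}$ with both common values strictly positive. Substituting the first equality into the second gives $\eta_A\eta_B\,Q_A^{exp}=Q_A^{exp}$, and since $Q_A^{exp}>0$ this forces $\eta_A\eta_B=1$, contradicting $\eta_A\eta_B>1$. Hence at most one country's constraint can bind.

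The computations are elementary, so the real care is in the bookkeeping rather than any hard estimate. The main obstacle is making the cancellation steps rigorous at the boundary: in part (i) one must explicitly treat the case where an export quantity is zero (which is exactly where the multiplicative argument degenerates, and which is handled by plugging the zero back into the originating inequality), and one should keep the uniqueness conclusion — which follows cleanly from the inequalities alone — separate from the existence-of-autarky claim, which needs only the observation that a sufficiently high deterring certificate price exists. Everything else is a direct consequence of the definitions of binding and non-binding TIC equilibria.
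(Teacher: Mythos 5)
Your proof is correct and follows essentially the same route as the paper: multiplying the two feasibility inequalities $\eta_A Q_A^{exp}\ge Q_B^{exp}$ and $\eta_B Q_B^{exp}\ge Q_A^{exp}$ and handling the boundary case by substituting a zero export back into the other country's condition for part (i), and substituting the two binding equalities to force $(\eta_A\eta_B-1)Q_A^{exp}=0$ for part (ii). Your added remark on the existence of the autarky equilibrium (supported by a sufficiently high certificate price) goes slightly beyond the paper's proof, which only establishes that any equilibrium must be autarkic, but it is a sound and harmless supplement.
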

\begin{proof} Recall that for each country $i\in\{A,B\}$ we have
\[
Q_{i}^{imp}=Q_{j}^{exp}
\]
and the TIC market condition 
\[
\eta_{i}Q_{i}^{exp}\;\ge\;Q_{i}^{imp}=Q_{j}^{exp}.
\]

\medskip{}
\textit{(i) Case $\eta_{A}\eta_{B}<1$.} Since both countries implement
a TIC scheme, we have the two inequalities 
\begin{align}
\eta_{A}Q_{A}^{exp} & \ge Q_{A}^{imp}=Q_{B}^{exp},\label{eq:TIC-A}\\
\eta_{B}Q_{B}^{exp} & \ge Q_{B}^{imp}=Q_{A}^{exp}.\label{eq:TIC-B}
\end{align}
Multiplying \eqref{eq:TIC-A} and \eqref{eq:TIC-B} gives 
\[
\eta_{A}\eta_{B}\,Q_{A}^{exp}Q_{B}^{exp}\;\ge\;Q_{A}^{exp}Q_{B}^{exp}.
\]
If $Q_{A}^{exp}Q_{B}^{exp}>0$, we can divide by $Q_{A}^{exp}Q_{B}^{exp}$
and obtain 
\[
\eta_{A}\eta_{B}\;\ge\;1,
\]
which contradicts $\eta_{A}\eta_{B}<1$. Hence, in any equilibrium
with $\eta_{A}\eta_{B}<1$ we must have 
\[
Q_{A}^{exp}Q_{B}^{exp}=0,
\]
i.e.\ at least one country's export quantity is zero.

\noindent Suppose w.l.o.g.\ that $Q_{B}^{exp}=0$ and consider the
TIC condition for country $B$: 
\[
\eta_{B}Q_{B}^{exp}\;\ge\;Q_{B}^{imp}=Q_{A}^{exp}.
\]
If $Q_{A}^{exp}>0$, the right-hand side is strictly positive while
the left-hand side is zero, a contradiction. Thus $Q_{A}^{exp}=0$
must also hold, and the only feasible equilibrium is the pure autarky
outcome 
\[
Q_{A}^{exp}=Q_{B}^{exp}=0.
\]

\medskip{}
\textit{(ii) Case $\eta_{A}\eta_{B}>1$.} Suppose, for contradiction,
that in some equilibrium both TIC schemes are binding. Then 
\[
\eta_{A}Q_{A}^{exp}=Q_{A}^{imp}=Q_{B}^{exp}>0,\qquad\eta_{B}Q_{B}^{exp}=Q_{B}^{imp}=Q_{A}^{exp}>0.
\]
Substituting $Q_{B}^{exp}=\eta_{A}Q_{A}^{exp}$ into the second equality
gives 
\[
Q_{A}^{exp}=\eta_{B}Q_{B}^{exp}=\eta_{B}(\eta_{A}Q_{A}^{exp}),
\]
so 
\[
(\eta_{A}\eta_{B}-1)\,Q_{A}^{exp}=0.
\]
Since $\eta_{A}\eta_{B}>1$, it follows that $Q_{A}^{exp}=0$, which
contradicts the assumption that both TICs are binding (a binding TIC
requires positive imports and exports).

\noindent Thus, when $\eta_{A}\eta_{B}>1$, it is impossible that
both TIC schemes are binding at the same time; at most one country's
TIC can be binding in equilibrium. \end{proof} 

Each country $i$ can guarantee itself every minimum production target
$\bar{X}_{i}$ that does not exceed its domestic demand of $1/2$
by setting an approbriate export credit factor $\eta_{i}\geq1$. More
concretely, we find
\begin{lem}
\label{prop:ic_1}If country $i$ has a TIC scheme with export credit
factor $\eta_{i}$ then in equilibrium
\[
X_{i}\ge\begin{cases}
{\displaystyle 1,} & \text{if }\eta_{i}\le1,\\[6pt]
{\displaystyle \frac{1}{\eta_{i}},} & \text{if }\eta_{i}\ge1.
\end{cases}
\]
If \textup{$\eta_{i}\ge1$ and} $Q_{i}^{exp}\leq Q_{i}^{dom}$, then
$X_{i}\ge\frac{2}{1+\eta_{i}}$ and if $Q_{i}^{exp}=Q_{i}^{dom}$
then $X_{i}=\frac{2}{1+\eta_{i}}$.
\end{lem}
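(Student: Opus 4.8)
The plan is to derive all four inequalities from just two structural relations that hold in every equilibrium, regardless of which of the three TIC regimes (binding, non-binding, autarky) applies. The first is market clearing: since $Q_A^{dom}+Q_B^{exp}=1$ and $Q_i^{imp}=Q_j^{exp}$, we obtain the identity $Q_i^{dom}=1-Q_i^{imp}$. The second is certificate feasibility for a country running a TIC, namely $\eta_i Q_i^{exp}\ge Q_i^{imp}$. Substituting the identity into $X_i=Q_i^{dom}+Q_i^{exp}$ gives the single working expression $X_i=1-Q_i^{imp}+Q_i^{exp}$, to be used together with the range restrictions $Q_i^{imp},Q_i^{exp}\in[0,1]$. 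Each claim then follows by feeding the feasibility inequality into this expression in the right direction; no case split on the equilibrium type is needed, and the autarky case $Q_i^{exp}=Q_i^{imp}=0$ (so $X_i=1$) is subsumed automatically.

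For the unconditional bound I would split on $\eta_i$. When $\eta_i\le1$, substituting $Q_i^{imp}\le\eta_i Q_i^{exp}$ gives $X_i\ge 1+(1-\eta_i)Q_i^{exp}\ge1$, since both factors are nonnegative. When $\eta_i\ge1$ that substitution points the wrong way, so instead I would rewrite feasibility as $Q_i^{exp}\ge Q_i^{imp}/\eta_i$, obtaining $X_i\ge 1-Q_i^{imp}(1-1/\eta_i)$, and then use $Q_i^{imp}\le1$ together with $1-1/\eta_i\ge0$ to conclude $X_i\ge 1/\eta_i$.

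For the conditional bound (with $\eta_i\ge1$ and $Q_i^{exp}\le Q_i^{dom}$) I would record two competing lower bounds on $X_i$ viewed as functions of $Q_i^{exp}$. Feasibility gives the decreasing bound $X_i\ge 1-(\eta_i-1)Q_i^{exp}$, while the hypothesis $Q_i^{exp}\le Q_i^{dom}$ gives the increasing bound $X_i\ge 2Q_i^{exp}$. Hence $X_i$ dominates the pointwise maximum of a falling and a rising line; that maximum is smallest at their intersection $Q_i^{exp}=1/(1+\eta_i)$, where both equal $2/(1+\eta_i)$. This yields $X_i\ge 2/(1+\eta_i)$, and the borderline $\eta_i=1$ reduces consistently to $X_i\ge1$.

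The equality claim is the tightness counterpart and is where I expect the only genuine subtlety. Imposing $Q_i^{exp}=Q_i^{dom}$ and market clearing gives $Q_i^{imp}=1-Q_i^{exp}$; combining this with the binding TIC condition $\eta_i Q_i^{exp}=Q_i^{imp}$ yields $(1+\eta_i)Q_i^{exp}=1$, hence $Q_i^{exp}=Q_i^{dom}=1/(1+\eta_i)$ and $X_i=2/(1+\eta_i)$ exactly. I would flag explicitly that $Q_i^{exp}=Q_i^{dom}$ alone delivers only $X_i\ge 2/(1+\eta_i)$ --- a non-binding TIC could place $Q_i^{dom}$ strictly above $1/(1+\eta_i)$ --- so the exact equality genuinely relies on the relevant equilibrium having a binding TIC. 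Apart from this point everything is mechanical substitution; the only real decision is which direction to apply the single feasibility inequality in each regime and when to invoke the range bound $Q_i^{imp}\le1$.
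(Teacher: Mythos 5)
Your proof is correct and rests on the same skeleton as the paper's: both arguments combine the market-clearing identity $X_i = 1 - Q_i^{imp} + Q_i^{exp}$ with the certificate-feasibility inequality $\eta_i Q_i^{exp} \ge Q_i^{imp}$, and both get the conditional bound by playing a decreasing lower bound from feasibility against the increasing bound $X_i \ge 2Q_i^{exp}$ from $Q_i^{dom}\ge Q_i^{exp}$ (your ``max of a falling and a rising line'' is the paper's two-region case split in different clothing). Two differences are worth recording. First, your derivation of the global bound for $\eta_i\ge 1$ --- rewriting feasibility as $Q_i^{exp}\ge Q_i^{imp}/\eta_i$ to get $X_i \ge 1 - Q_i^{imp}\bigl(1-1/\eta_i\bigr) \ge 1/\eta_i$ via $Q_i^{imp}\le 1$ --- is a genuine simplification of the paper's argument, which instead sets $Q_j^{exp}=\min\{1,\eta_i Q_i^{exp}\}$ and minimizes piecewise over the regions $\eta_i Q_i^{exp}\le 1$ and $\eta_i Q_i^{exp}\ge 1$. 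Second, and more substantively, the subtlety you flag on the equality claim is real: the paper's proof concludes from the fact that the minimum $2/(1+\eta_i)$ is \emph{attained} at $Q_i^{exp}=Q_i^{dom}=1/(1+\eta_i)$ that this ``is exactly the case $Q_i^{exp}=Q_i^{dom}$,'' which conflates ``the minimizer satisfies $Q_i^{exp}=Q_i^{dom}$'' with ``every equilibrium with $Q_i^{exp}=Q_i^{dom}$ attains the minimum.'' As you note, the latter is false without a binding TIC: with symmetric countries, no policy instruments and $\eta_i>1$, the free-trade outcome $Q_i^{exp}=Q_i^{dom}=1/2$ is a non-binding TIC equilibrium with $X_i=1>2/(1+\eta_i)$. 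Your derivation of the equality from the binding condition $\eta_i Q_i^{exp}=Q_i^{imp}$ supplies precisely the hypothesis the paper's statement implicitly relies on, so your version is the more careful one.
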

\begin{proof}Recall the TIC constraint
\begin{equation}
\eta_{i}Q_{i}^{exp}\;\ge\;Q_{i}^{imp}=Q_{j}^{exp}.\label{eq:TICi}
\end{equation}
\textit{Part 1: Global lower bound on $X_{i}$.}

\noindent From \eqref{eq:TICi}, $Q_{j}^{exp}\le\eta_{i}Q_{i}^{exp}$,
hence 
\[
X_{i}=1-Q_{j}^{exp}+Q_{i}^{exp}\;\ge\;1-\eta_{i}Q_{i}^{exp}+Q_{i}^{exp}=1+(1-\eta_{i})Q_{i}^{exp}.
\]

\smallskip{}
\textbf{Case $\eta_{i}\le1$.} Then $1-\eta_{i}\ge0$, so $(1-\eta_{i})Q_{i}^{exp}\ge0$
and 
\[
X_{i}\;\ge\;1.
\]

\noindent \smallskip{}
\textbf{Case $\eta_{i}\ge1$.} We want a uniform lower bound in $Q_{i}^{exp}$
and $Q_{j}^{exp}$ subject to \eqref{eq:TICi} and $0\le Q_{i}^{exp},Q_{j}^{exp}\le1$.

\noindent For a given $Q_{i}^{exp}$, the smallest possible $X_{i}$
occurs when $Q_{j}^{exp}$ is as large as allowed by the constraints,
that is 
\[
Q_{j}^{exp}=\min\{1,\,\eta_{i}Q_{i}^{exp}\}.
\]
Substituting into $X_{i}$ gives the lower bound 
\[
X_{i}\;\ge\;\begin{cases}
1-\eta_{i}Q_{i}^{exp}+Q_{i}^{exp}=1-(\eta_{i}-1)Q_{i}^{exp}, & \text{if }\eta_{i}Q_{i}^{exp}\le1,\\[4pt]
1-1+Q_{i}^{exp}=Q_{i}^{exp}, & \text{if }\eta_{i}Q_{i}^{exp}\ge1.
\end{cases}
\]

If $\eta_{i}Q_{i}^{exp}\le1$, i.e.\ $Q_{i}^{exp}\le1/\eta_{i}$,
the expression $1-(\eta_{i}-1)Q_{i}^{exp}$ is strictly decreasing
in $Q_{i}^{exp}$ (since $\eta_{i}>1$), so its minimum on this region
is at $Q_{i}^{exp}=1/\eta_{i}$ and equals 
\[
1-(\eta_{i}-1)\frac{1}{\eta_{i}}=\frac{1}{\eta_{i}}.
\]
If $\eta_{i}Q_{i}^{exp}\ge1$, i.e.\ $Q_{i}^{exp}\ge1/\eta_{i}$,
then $X_{i}\ge Q_{i}^{exp}\ge1/\eta_{i}$.

Hence, for all feasible $(Q_{i}^{exp},Q_{j}^{exp})$ with $\eta_{i}\ge1$,
\[
X_{i}\ge\frac{1}{\eta_{i}}.
\]

This proves the piecewise lower bound in the first part of the lemma.

\medskip{}
\textit{Part 2: Stronger bound when $\eta_{i}\ge1$ and $Q_{i}^{exp}\le Q_{i}^{dom}$.}

\noindent Now assume $\eta_{i}\ge1$ and $Q_{i}^{exp}\le Q_{i}^{dom}$.
From $Q_{i}^{dom}+Q_{j}^{exp}=1$ and $Q_{j}^{exp}\le\eta_{i}Q_{i}^{exp}$
we obtain 
\[
1=Q_{i}^{dom}+Q_{j}^{exp}\le Q_{i}^{dom}+\eta_{i}Q_{i}^{exp},
\]
so any feasible pair $(Q_{i}^{dom},Q_{i}^{exp})$ satisfies 
\[
Q_{i}^{dom}\;\ge\;\max\{\,Q_{i}^{exp},\,1-\eta_{i}Q_{i}^{exp}\,\}.
\]

For a given $Q_{i}^{exp}$, total production $X_{i}=Q_{i}^{dom}+Q_{i}^{exp}$
is minimized when $Q_{i}^{dom}$ is as small as possible, i.e.\ when
\[
Q_{i}^{dom}=\max\{Q_{i}^{exp},\,1-\eta_{i}Q_{i}^{exp}\}.
\]
Thus 
\[
X_{i}\;\ge\;Q_{i}^{exp}+\max\{Q_{i}^{exp},\,1-\eta_{i}Q_{i}^{exp}\}.
\]

We distinguish two regions:

\smallskip{}
-- If $Q_{i}^{exp}\ge1-\eta_{i}Q_{i}^{exp}$, equivalently $(1+\eta_{i})Q_{i}^{exp}\ge1$
or $Q_{i}^{exp}\ge1/(1+\eta_{i})$, then $Q_{i}^{dom}=Q_{i}^{exp}$
and 
\[
X_{i}\ge Q_{i}^{exp}+Q_{i}^{exp}=2Q_{i}^{exp}.
\]
Since $Q_{i}^{exp}\ge1/(1+\eta_{i})$ in this region, we get 
\[
X_{i}\ge\frac{2}{1+\eta_{i}}.
\]

\noindent \smallskip{}
-- If $Q_{i}^{exp}\le1-\eta_{i}Q_{i}^{exp}$, i.e. $(1+\eta_{i})Q_{i}^{exp}\le1$
or $Q_{i}^{exp}\le1/(1+\eta_{i})$, then $Q_{i}^{dom}=1-\eta_{i}Q_{i}^{exp}$
and 
\[
X_{i}\ge Q_{i}^{exp}+(1-\eta_{i}Q_{i}^{exp})=1+(1-\eta_{i})Q_{i}^{exp}.
\]
Because $\eta_{i}\ge1$, the coefficient $(1-\eta_{i})\le0$, so $1+(1-\eta_{i})Q_{i}^{exp}$
is weakly decreasing in $Q_{i}^{exp}$ on this region. Its minimum
is attained at $Q_{i}^{exp}=1/(1+\eta_{i})$ and equals 
\[
1+(1-\eta_{i})\frac{1}{1+\eta_{i}}=\frac{2}{1+\eta_{i}}.
\]

\noindent In both regions we thus have $X_{i}\ge2/(1+\eta_{i})$,
and the common minimum $X_{i}=2/(1+\eta_{i})$ is attained when 
\[
Q_{i}^{exp}=Q_{i}^{dom}=\frac{1}{1+\eta_{i}},
\]
which is exactly the case $Q_{i}^{exp}=Q_{i}^{dom}$. This proves
the stronger bound and the equality statement. \end{proof}

\end{document}